\newtheorem{theorem}{Theorem}
\newtheorem{definition}[theorem]{Definition}
\newtheorem{proposition}[theorem]{Proposition}
\newtheorem{lemma}[theorem]{Lemma}
\newtheorem{algorithm}[theorem]{Algorithm}
\def\al#1{\begin{align}#1\end{align}}
\newcommand{\Tr}{\mathrm{Tr}}
\newcommand{\Id}{I}
\newcommand{\ket}   [1]{\left|{#1}\right\rangle}
\newcommand{\sket}  [1]{|{#1}\rangle}
\newcommand{\bra}   [1]{\left\langle{#1}\right|}
\newcommand{\sbra}  [1]{\langle{#1}|}
\newcommand{\braket}[2]{\langle{#1}|{#2}\rangle}
\newcommand{\proj}  [1]{\left|{#1}\right\rangle\!\!\left\langle{#1}\right|}
\newcommand{\norm}  [1]{\left|\left|{#1}\right|\right|}
\newcommand{\snorm} [1]{||#1||}
\newcommand{\bnorm} [1]{\big|\big|#1\big|\big|}
\renewcommand{\b}{\textbf{b}}
\renewcommand{\u}{\boldsymbol{1}}
  \newcommand{\x}{\textbf{x}}
  \newcommand{\A}{{\mathcal{A}}}
  \newcommand{\B}{{\mathcal{B}}}
  \newcommand{\D}{{\mathcal{D}}}
  \newcommand{\N}{{\mathcal{N}}}
\renewcommand{\O}{{\mathcal{O}}}
\renewcommand{\P}{{\mathcal{P}}}
\renewcommand{\S}{{\mathcal{S}}}
  \newcommand{\T}{{\mathcal{T}}}
\renewcommand{\U}{{\mathcal{U}}}
  \newcommand{\hil} {\mathcal{H}}
\renewcommand{\t}[1]{\mathrm{#1}}
\newcommand{\defin}[1]{\hyperref[def:#1]{Definition~\ref*{def:#1}}}
  \newcommand{\fig}[1]{\hyperref[fig:#1]{Figure~\ref*{fig:#1}}}
\renewcommand{\sec}[1]{\hyperref[sec:#1]{Section~\ref*{sec:#1}}}
  \newcommand{\thm}[1]{\hyperref[thm:#1]{Theorem~\ref*{thm:#1}}}
  \newcommand{\prop}[1]{\hyperref[prop:#1]{Proposition~\ref*{prop:#1}}}
  \newcommand{\lem}[1]{\hyperref[lem:#1]{Lemma~\ref*{lem:#1}}}
  \newcommand{\app}[1]{\hyperref[app:#1]{Appendix~\ref*{app:#1}}}
  \newcommand{\alg}[1]{\hyperref[alg:#1]{Algorithm~\ref*{alg:#1}}}
\g@addto@macro\bfseries{\boldmath}
\begin{document}

\author{
	Davide Orsucci
}

\affiliation{
	Institut f\"ur Kommunikation und Navigation, 
	Deutsches Zentrum f\"ur Luft- und Raumfahrt (DLR), 
	M\"unchener Str. 20,
	82234 Weßling, 
	Germany
}

\author{
	Vedran Dunjko
}

\affiliation{
	Leiden University,  
	Niels Bohrweg 1,  
	2333 CA Leiden,  
	The Netherlands
}

\title{On solving classes of positive-definite quantum linear systems with quadratically improved runtime in the condition number}

\date{November 1, 2021}

\maketitle

\begin{abstract}
Quantum algorithms for solving the Quantum Linear System (QLS) problem are among the most investigated quantum algorithms of recent times, with potential applications including the solution of computationally intractable differential equations and speed-ups in machine learning. A fundamental parameter governing the efficiency of QLS solvers is $\kappa$, the condition number of the coefficient matrix $A$, as it has been known since the inception of the QLS problem that for worst-case instances the runtime scales at least linearly in $\kappa$~\cite{HHL}. However, for the case of positive-definite matrices classical algorithms can solve linear systems with a runtime scaling as $\sqrt{\kappa}$, a quadratic improvement compared to the the indefinite case. It is then natural to ask whether QLS solvers may hold an analogous improvement. In this work we answer the question in the negative, showing that solving a QLS entails a runtime linear in $\kappa$ also when $A$ is positive definite. We then identify broad classes of positive-definite QLS where this lower bound can be circumvented and present two new quantum algorithms featuring a quadratic speed-up in $\kappa$: the first is based on efficiently implementing a matrix-block-encoding of $A^{-1}$, the second constructs a decomposition of the form $A = L L^\dag$ to precondition the system. These methods are widely applicable and both allow to efficiently solve \textsf{BQP}-complete problems.
\end{abstract}

\section{Introduction}
\label{sec:intro}

Quantum computation is described using the formalism of linear algebra, suggesting that quantum methods may be intrinsically well-suited to perform linear algebraic tasks. Algorithms solving linear systems of equations, in particular, are at the cornerstone of linear algebra~\cite[Chapter~2]{Press07}, having many direct applications and playing a pivotal role in several computational methods~\cite{Saad03}. In the seminal work of Harrow, Hassadim, and Lloyd (HHL) the so-called Quantum Linear System (QLS) problem was introduced and a quantum algorithm was presented that allows solving the QLS exponentially faster than classical algorithms solving classical linear systems~\cite{HHL}. In subsequent works, several new algorithms have been put forward that solve QLS with further increased efficiency in comparison to the original HHL algorithm, improving the runtime dependence on the condition number~\cite{Ambainis10}, on the precision~\cite{Childs15} and on the sparsity~\cite{Wossnig18}. Recently, a new approach inspired by adiabatic quantum computation has introduced a significantly simpler quasi-optimal solving algorithm~\cite{Subasi18,An19}, significantly narrowing the gap with experimental implementations~\cite{Wen19}, making the algorithm compatible with Near-term Intermediate Scale Quantum (NISQ) devices~\cite{Bravo19, Huang19} and leading to the development of the presently most efficient QLS solvers~\cite{Lin19}.

A key idea underpinning the possibility of achieving large quantum speed-ups in linear algebra tasks is the fact that an exponentially large complex vector can be compactly encoded in the amplitudes of a pure quantum state; e.g., a $n$-qubit state is described via $2^n$ amplitudes. This intuition is indeed correct for the QLS problem, which has been proven to be \textsf{BQP}-complete~\cite{HHL}: any quantum computation can be re-formulated as a QLS with only a polynomial overhead and therefore there exist families of QLS problems that afford super-polynomial speed-ups compared to classical solution methods (unless $\textsf{BPP} = \textsf{BQP}$\footnote{$\textsf{BPP}$ is the class of decision problems that can be solved in bounded-error probabilistic polynomial time, $\textsf{BQP}$ are those that can be solved with bounded-error polynomial time quantum computations. Loosely speaking, $\textsf{BPP} = \textsf{BQP}$ would mean that the power of quantum computers is equal to that of classical computers.}). While this reduction shows that almost certainly there exist families of QLS problems that allow an exponential speed-up compared to all classical methods, the crucial question is whether there are \emph{natural} problems that can be directly formulated and solved as QLS. The ubiquity of linear systems seems to suggests their quantum variant should be broadly applicable as well, but still it is not guaranteed, since in the QLS setting further significant constraints have to be met to obtain exponential speed-ups~\cite{Aaronson15}.

A prominent fundamental bottleneck of QLS solvers is that, to efficiently obtain the solution, it is not sufficient that the coefficient matrix $A$ of the system $A\x = \b$ is invertible, but it also have to be \emph{robustly} invertible, that is \emph{well-conditioned}: it is required that the \emph{condition number} of the matrix $A$, defined as the ratio between the largest and the smallest singular values, is small. In fact, solving a QLS necessarily entails a runtime scaling at least linearly in the condition number (unless $\textsf{BQP} = \textsf{PSPACE}$\footnote{$\textsf{PSPACE}$ is the class of decision problems that can be solved in classical polynomial space. With a Feymann sum-over-paths approach one can show that any quantum computation can be classically simulated in exponential time but with only polynomial space, thus $\textsf{BQP} \subseteq \textsf{PSPACE}$. It is widely believed that $\textsf{BQP} \neq \textsf{PSPACE}$.}) as was already proven in Ref.~\cite{HHL}. Therefore, an exponential quantum speed-up for QLS solving is achievable only when the condition number scales polylogarithmically with the system size and, unfortunately, it seems rather difficult to find natural examples of matrix families that exhibit such mild growth of the condition number~\cite{Montanaro16}. However, polynomial quantum speed-ups for linear system solving should be rather broadly achievable and could still provide a quantum advantage if the degree of the polynomial is large enough~\cite{Babbush20}. In this view, obtaining a further quadratic improvement in the dependence on the condition number for some restricted classes of matrices, that are however of wide practical interest, could be of the uttermost importance for obtaining a broader impact of QLS solving algorithms. A previous publication showed that in the context of quantum algorithms for solving certain Markov chain problems the use of specialised QLS solvers for positive-definite matrices provides an improvement in the condition number dependence~\cite{Chowdhury16}. Exploring the general positive-definite QLS problem is the main focus of this work.

In the rest of the Introduction we motivate why quadratic speed-ups in the condition number in positive-definite QLS may be expected (\sec{quest}), review some related results present in the literature (\sec{literature}) and then proceed to give high-level overviews of our two algorithms (\sec{overview1} and \sec{overview2}). In \sec{notation} we fix the notation and give the main definitions. In \sec{lower_bound} we prove that QLS solvers require, even when restricting to positive-definite matrices, a runtime scaling linearly in the condition number. We then move to the main results of this work, that is, achieving improved runtime scaling for solving certain classes of positive-definite QLS: in \sec{block_encoding} we show a method based on an efficient implementation of a matrix-block-encoding of $A^{-1}$ and in \sec{hamiltonians} a method based on decomposing the coefficient matrix as $A = LL^\dag$ to effectively precondition the system. Finally, an outlook of possible future research directions is given in \sec{discussion}.

\subsection{Positive definite linear systems and quadratic speed-ups}
\label{sec:quest}

In this work, we investigate the efficiency of QLS solving algorithms specialised to the case where the coefficient matrix $A \in \mathds{C}^{N \times N}$ is Hermitian and positive definite (PD). We will call this sub-class the positive-definite Quantum Linear System (PD-QLS) problem.

A first reason to focus specifically on the PD case is that in the classical setting several problems of practical relevance are formulated as linear systems involving PD coefficient matrices. A second important reason is that the few fully worked-out examples in the literature providing strong evidence of polynomial quantum speed-ups for ``natural'' QLS problems involve PD coefficient matrices~\cite{Montanaro16, Clader13}. In particular, the discretization of a partial differential equation (PDE) having a positive-definite kernel (such as, e.g., Poisson's equation) results in a large linear system where the coefficient matrix is positive definite. Montanaro and Pallister perform in Ref.~\cite{Montanaro16} a detailed analysis of how QLS may be employed to solve the finite-element formulation of a PDE and show that the linear dependence on the condition number is the main bottleneck to obtaining large quantum speed-ups. In fact, the discretization of PDEs for functions defined in $\mathds{R}^D$ typically results in positive-definite linear systems where the condition number scales as $\O\!\left(N^{2/D}\right)$~\cite{Montanaro16}.

We highlight that one might have reasonably conjectured that it is possible to have a quadratically better scaling in $\kappa$, the condition number of $A$, in the PD case. First, note that the runtime lower bound of Ref.~\cite{HHL} is proven using a special family of matrices that are indefinite (neither positive nor negative definite) by construction, hence it is not directly applicable to the PD case. A standard method allows to transform an indefinite linear system into a PD one, but having a quadratically larger condition number\footnote{Namely, for a given indefinite matrix $A$, the systems $A \x = \b$ and $A' \x = \b'$, with $\b' := A^\dag \b$ and $A' := A^\dag A$, have the same solution. The matrix $A'$ is positive definite and $\kappa(A') = \kappa(A)^2$.}; hence, the lower bound in Ref.~\cite{HHL} directly yields a $\sqrt{\kappa}$ runtime lower bound for PD-QLS solvers. Second, the conjugate gradient (CG) descent method is the most efficient classical algorithm for solving PD linear systems and requires only $\O\big(\sqrt{\kappa} \log(1/\varepsilon)\big)$ iterations to converge to the correct solution, up to $\varepsilon$ approximation. Each iteration consists in the update of some vectors having $N$ entries, where $N$ is the dimension of the linear system, thus resulting in a total runtime in $\O\big(N\sqrt{\kappa}\log(1/\varepsilon)\big)$~\cite{She94}. Then, it might seem plausible that a quantization of the CG method could yield a quantum algorithm having runtime in $\O\big(\t{polylog}(N)\sqrt{\kappa}\log(1/\varepsilon)\big)$.

This conjectured quadratic speed-up is however not always achievable since, as we prove in \sec{lower_bound}, PD-QLS solvers have runtime scaling linearly in $\kappa$ in worst-case problem instances. But this no-go result can be used as guidance to understand what is preventing us from achieving a better runtime scaling and, conversely, what additional conditions have to be imposed in order to achieve a quadratic speed-up in $\kappa$ for PD-QLS solvers.

\begin{table}[t!]
\hspace{-8mm}
{\renewcommand{\arraystretch}{1.5}
\begin{tabular}{p{25mm} p{55mm} p{69mm}} 
	\hline\hline
	\textbf{ Method} & \textbf{ Result} & \textbf{ Requirements}  \\ 
	\hline\hline
	\begin{tabular}{c}
	Reduction to \\[-7pt] majority problem
	\end{tabular}
	& 
	\begin{tabular}{l}
	$\Omega(\kappa)$ query complexity lower bound \\[-7pt] \prop{lower_bound}
	\end{tabular}
	& 
	\begin{tabular}{l}
	Access to $A$ via a matrix-block-encoding\\[-7pt] or via a sparse-oracle access
	\end{tabular}
 	\\
	\hline 
	\begin{tabular}{c}
	Block-encoding \\[-7pt] of $A^{-1}$
	\end{tabular}
	&
	\begin{tabular}{l}
	$\O(\sqrt{\kappa})$ query and gate complexity \\[-7pt] 
	\prop{Algorithm1} and \prop{VTAA}
	\end{tabular}
	&
	\begin{tabular}{l}
	Access to normalised matrix-block-encoding \\[-7pt]
	of $B = (\Id - \eta A)$ and $\norm{A^{-1}\ket{\b}}$ is large
	\end{tabular}
	\\
	\hline
 	\begin{tabular}{c}
	Decomposition  \\[-7pt] as $A = LL^\dag$ 
	\end{tabular}
	&
	\begin{tabular}{l}
	$\O(\sqrt{\kappa})$ gate complexity \\[-7pt] \prop{Algorithm2}
	\end{tabular}
	& 
	\begin{tabular}{l}
	$A$ is the sum of PD local Hamiltonians, \\[-7pt] 
	$\b$ is sparse and $1/\gamma$ in Eq.~\eqref{eq:gamma_main} is small
	\end{tabular} \\
	\hline\hline
\end{tabular}
}
\caption{Summary of the main results of this work. The results provided in the second column of the table hold under the conditions specified in the third column. The big-$\Omega$ notation is used for runtime lower bounds (in query complexity), the big-$\O$ notation for runtime upper bounds (exhibiting an explicit solving algorithm).
}
\label{table:1}
\end{table}

\subsection{Previous related results}
\label{sec:literature}

In context of general QLS solvers, i.e.\ solvers applicable also to indefinite or non-Hermitian matrices, the best algorithms have a runtime scaling quasi-linearly in $\kappa$, i.e.\ scaling as $\O\big(\kappa\, \t{polylog} (\kappa)\big)$, thus almost saturating the linear lower bound~\cite{HHL}. Note that the original HHL algorithm has a worse performance, with a runtime scaling as $\O(\kappa^2 / \varepsilon)$ where $\varepsilon$ is the target precision. The first algorithm to achieve a quasi-linear scaling in $\kappa$ was proposed by Ambainis in Ref.~\cite{Ambainis10}, which introduces a technique called Variable-Time Amplitude Amplification (VTAA) and employs it to optimize to the HHL algorithm. Subsequently, Childs, Kothari and Somma~\cite{Childs15} introduced polynomial approximations of $A^{-1}$ to exponentially improve the runtime dependence on the approximation error to $\O\big(\kappa^2\, \t{polylog}(\kappa/\varepsilon) \big)$; they show, furthermore, that the VTAA-based optimization can be used also for this algorithm, thus yielding a $\O\big(\kappa\, \t{polylog}(\kappa/\varepsilon) \big)$ runtime. Later, Chakraborty et al.\ showed that also the pseudo-inversion problem, whereby the matrix $A$ may be non-invertible and even non-square, can be solved with a runtime in $\O\big(\kappa\, \sqrt{\gamma} \,\t{polylog}(\kappa/\varepsilon) \big)$, where $\gamma$ parametrises the overlap of $\b$ with the subspace where $A$ is non-singular~\cite{Chakraborty18}. Finally, the current state-of-the-art methods for general QLS solving is given in Ref.~\cite{Lin19}, which does not rely on VTAA but instead is based on ideas stemming from adiabatic quantum computation~\cite{Subasi18}, which result in conceptually simpler algorithms and in a significant improvement of the polylogarithmic factors.

Furthermore, several specialized quantum algorithms have been introduced with the scope of more efficiently solving QLS for particular sub-classes of matrices. A few works, e.g.\ \cite{Clader13,Shao18,Tong20}, have investigated the use of preconditioning to speed-up the solution. The main idea, which is well-known in classical linear system solving methods, is to look for an invertible matrix $B$, a so-called \emph{preconditioner}, such that the matrix $BA$ has a smaller condition number than $A$, and subsequently solve the equivalent linear system $BA = B\,\b$. An algorithm based on a sparse preconditioning matrix was introduced in Ref.~\cite{Clader13}, but it has very little formal guarantees of performance improvement. Another method based on circulant preconditioners was presented in Ref.~\cite{Shao18}, for which it is clearer how to assess when a runtime improvement can be achieved. Runtime improvements have been obtained in Ref.~\cite{Tong20} applying new preconditioning methods to Hamiltonians arising in many-body physics. An entirely different approach, based on hybrid classical-quantum algorithms, has been explored in Ref.~\cite{Wu20}, which yields runtime speed-ups for the case where the rows or columns of the coefficient matrix can be prepared with polylogarithmic-depth quantum circuits. We also mention the result of Ref.~\cite{Vazquez20}, showing that it is possible to solve QLS for the special class of tridiagonal Toeplitz matrix with a runtime that scales polylogarithmically in all parameters, condition number included; note, however, that matrices of this class can be fully specified with just two real parameters. Finally, the PD-QLS problem has been previously considered in Ref.~\cite{Chowdhury16}, where it is suggested that positive-definite systems could be solved more efficiently than indefinite ones.

\subsection{Overview of the method based on a matrix-block-encoding of \texorpdfstring{$A^{-1}$}{A-1}}
\label{sec:overview1}

Our first method for solving PD-QLS with improved runtime is based on implementing as a quantum circuit a unitary matrix $\U_{A^{-1}}$ that encodes in a sub-block a matrix proportional to $A^{-1}$, i.e.\ a so-called matrix-block-encoding~\cite{Low17,Gilyen18}. The solution state $\ket{A^{-1}\b}$ can be subsequently obtained via matrix-vector multiplication, achieved by applying the circuit encoding $A^{-1}$ and projecting onto the correct sub-block. This method is analogous to the one introduced by Childs, Kothari and Somma in Ref.~\cite{Childs15}, where it is shown that exponentially precise polynomial approximations of the inverse function can be constructed, which then allow to implement a matrix-block-encoding of $A^{-1}$ up to exponentially small error and finally solve the QLS problem via matrix-vector multiplication. Here, we show that if $A$ is a PD matrix, an equally good approximation of $A^{-1}$ can be obtained with polynomials having a quadratically smaller degree, leading to the possibility of a quadratic speed-up in PD-QLS solving.

More in detail, Ref.~\cite{Childs15} considers a Hermitian matrix $A$ whose spectrum is by assumption contained in the domain $\mathcal{D}_{\Delta} := [-1,-\Delta] \cup [+\Delta,+1]$, with $\Delta \leq 1/\kappa$. Then, families of real polynomials $P(x)$ are constructed such that $|P(x) - 1/x| \leq \varepsilon$ on the domain $\mathcal{D}_\Delta$ and have a degree $\ell \in \O\big(\kappa \log(1/\varepsilon)\big)$, see panel $(a)$ of \fig{two_plots} for an illustrative example. Using either the Linear Combination of Unitaries (LCU)~\cite{Childs12,Berry14} or the Quantum Signal Processing (QSP)~\cite{Low16,Low17,Gilyen18} method, it is possible to implement a matrix-block-encoding of $P(A)$, up to some rescaling factor $K>0$ such that $P(A)/K$ can be encoded a sub-block of a unitary matrix, and then we have
$
	\norm{
	P(A) - A^{-1}
	}
	\, \leq \, 
	\varepsilon
$
in operator norm. The query complexities of LCU and QSP scale at least linearly in the degree of the polynomial, hence the use of polynomials of low degree is crucial to construct efficient QLS solvers. Moreover, the normalisation factor $K$ of the matrix-block-encoding of $P(A)$ scales linearly in $\kappa$ and enters multiplicatively in the cost of the matrix-vector multiplication step necessary to produce $\ket{A^{-1}\b}$.

In case $A$ is a PD matrix, we can exploit the knowledge that its spectrum is contained in $\mathcal{D}_\Delta^+ := [\Delta,1]$ to perform the following trick: we assume to have access to a \emph{normalised} matrix-block-encoding of $B := \Id - \eta\,A$, for some constant\footnote{We show in \sec{D} that $B$ can be efficiently constructed for $\eta = 1$ when $A$ is diagonally dominant and for $\eta = 1/J$ when $A$ is the sum of $J$ positive semi-definite local Hamiltonian terms.} $\eta \in (0,2]$, so that the spectrum of $B$ is always contained in the interval $\mathcal{D}_B = [-1,1- \eta \Delta]$. We then construct a polynomial $P(x)$ that approximates the function $f(x):=1/(1-x)$ on the domain $\D_B$ up to $\varepsilon$ distance. Using the QSP method to implement a matrix-block-encoding of the matrix $P(B)$ we then have
\al{
	P(B) \approx
	f(B) =
	\frac{1}{I - B} =
	\frac{1}{I - (I-\eta\,A)} =
	\frac{1}{\eta} A^{-1}
}
which is the required matrix inverse, up to a $1/\eta$ rescaling factor. Importantly, our polynomial approximation of $1/(1-x)$ has a degree $\ell \in \O\big(\sqrt{\kappa/\eta}\, \log (\kappa / (\eta\varepsilon)) \big)$, a quadratically better dependence on $\kappa$ with respect to the approximation of $1/x$ given in Ref.~\cite{Childs15}, see panel $(b)$ of \fig{two_plots}. Moreover, the normalisation factor $K$ scales again linearly in $\kappa$, which is the best dependence achievable.

\begin{figure}[t!]
\hspace{-5mm}
\begin{tabular}{ll}
$(a)$ 
& \hspace{-8mm}
$(b)$ \\
\includegraphics[scale=.27]{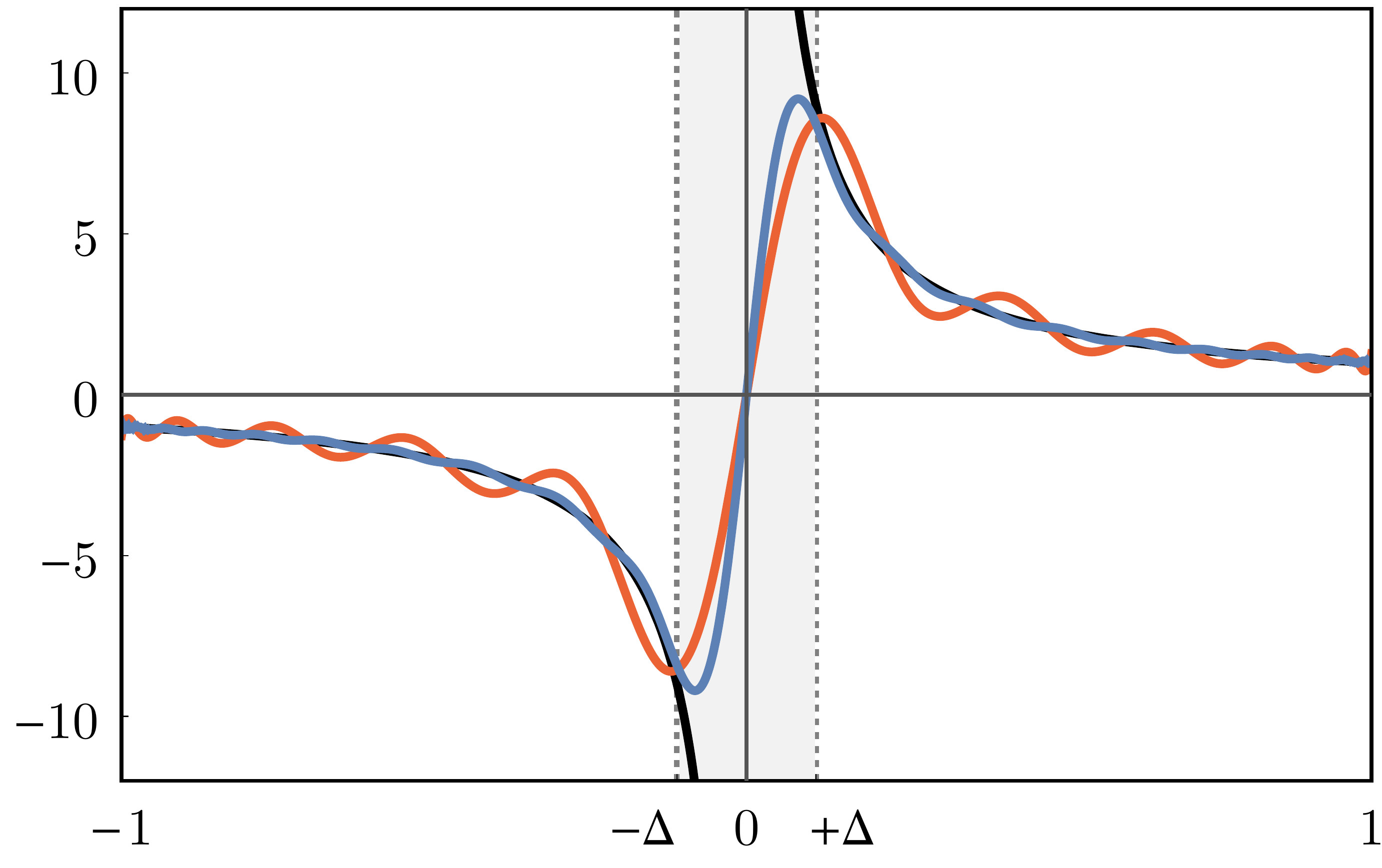}  \hspace{4mm}
& \hspace{-10mm}
\includegraphics[scale=.27]{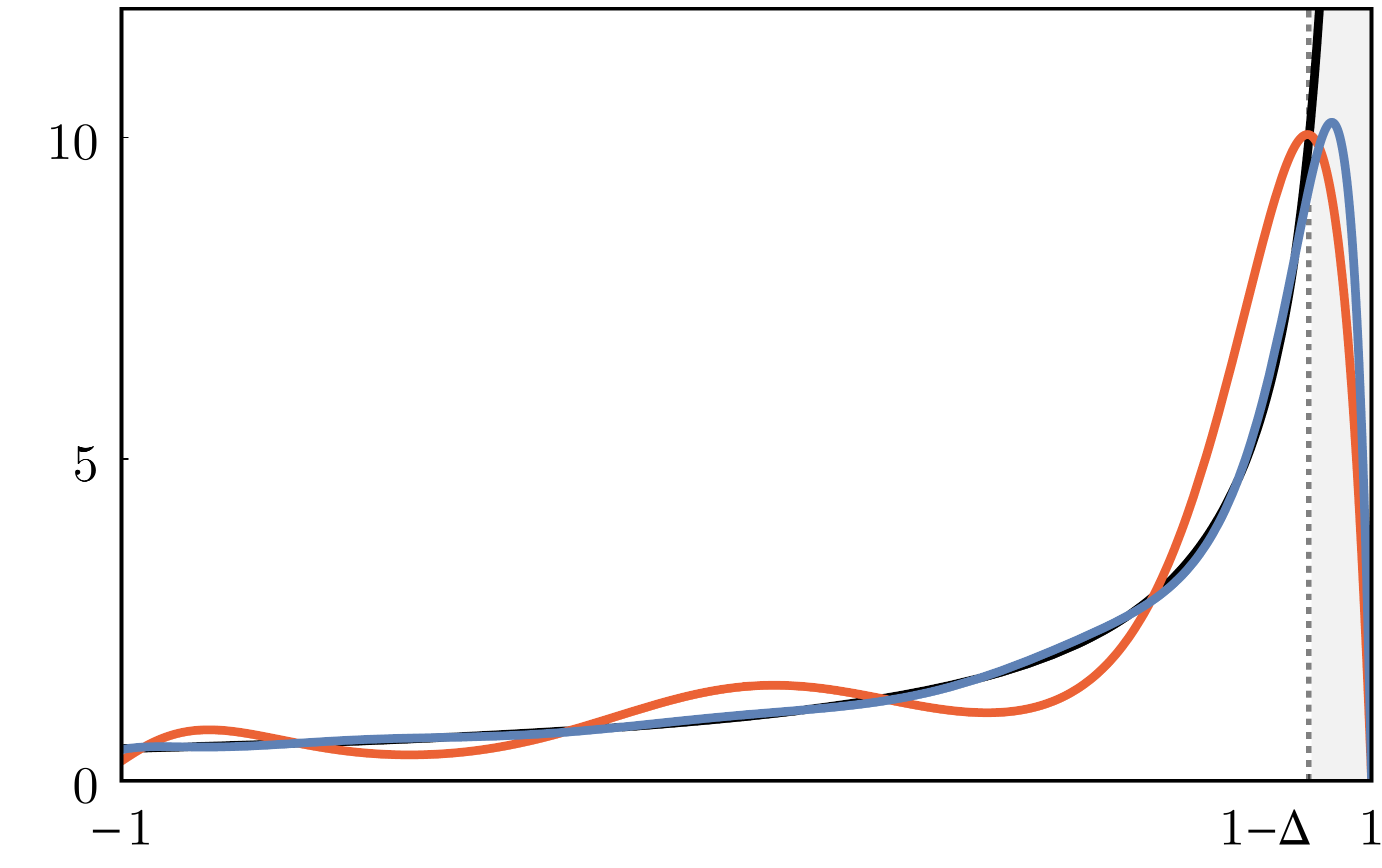}
\end{tabular}
\caption{In panel $(a)$ are shown two polynomials approximating the function $1/x$ (in black) chosen from the family of polynomials given in Ref.~\cite[Lemma~14]{Childs15}, using as parameters $b=200,j_0=10$ (red curve) and $b=200,j_0=20$ (blue curve curve). In panel $(b)$ are shown two polynomials approximating the function $1/(1-x)$ (in black) chosen from the family given in Eq.~\eqref{eq:polynomial}, using as parameters $\ell=6,\kappa=15$ (red curve) and $\ell=10,\kappa=9$ (blue curve). The shaded regions in each panel indicate the intervals where the polynomial approximation is not accurate.}
\label{fig:two_plots}
\end{figure}

From a mathematical standpoint, the possibility of a quadratic reduction of the degree of the approximating polynomial can be interpreted as a consequence of Bernstein's inequality~\cite{Schaeffer41}. This inequality states that in the class of real polynomials $P(x)$ of degree $\ell$ such that $|P(x)| \leq 1$ for all $x \in [-1,+1]$ the derivative $P'(x)$ satisfies $|P'(x)| \leq (1-x^2)^{-1/2} \ell$ for all $-1 < x < 1$, while we have $|P'(x)| \leq \ell^2$ for $x=1$ and $x=-1$ (and these last bounds are saturated by Chebyschev polynomials). Note that polynomials approximations of $1/x$ and of $1/(x-1)$ have high derivative close to the singularities, respectively in $x=0$ and in $x=1$, and because of Bernstein's inequality the latter case allows good polynomial approximations having a quadratically lower degree.

We need next to perform matrix-vector multiplication to obtain the state $\ket{A^{-1}\b}$ and thus solve the QLS; this is obtained by applying the quantum circuit that encodes the matrix $P(B) \propto A^{-1}$ to a quantum state of the form $\ket{0}\ket{\b}$ and then post-selecting the ancilla register to be in $\ket{0}$ or, more efficiently, using amplitude amplification~\cite{Brassard02}. The amplitude amplification step implies a multiplicative overhead of order $1/\kappa$ in the worst case, yielding a total runtime in $\O\big(\kappa^{3/2} \log ( \kappa/\varepsilon )\big)$ for this PD-QLS solver. However, the efficiency the matrix-vector multiplication depends on the post-selection success probability and thus on the specific choice for the vector $\b$. In a best-case scenario the post-selection success probability is constant and thus the overall runtime of the PD-QLS solver is $\O\big(\sqrt{\kappa} \log ( \kappa/\varepsilon )\big)$, while in the same high-success-probability scenario the QLS solver of Ref.~\cite{Childs15} has a runtime in $\O\big(\kappa \log (\kappa/\varepsilon)\big)$, since this is the cost of implementing a polynomial approximation of $A^{-1}$ for indefinite matrices.

We remark that there is a technical assumption that has to be satisfied to allow the realisation of our method, namely, that it is possible to implement a \emph{normalised} matrix block encoding of $B= \Id -\eta\,A$. This is a non-trivial task: standard methods could allow us to prepare, for instance, a normalised matrix-block-encoding of $B/d$ where $d \geq 2$ is the \emph{sparsity} of $B$, i.e.\ the maximum number of non-zero entries in each column of $B$~\cite{Childs15}. Note, however, that we would then need to implement an approximation of the function $g(x) := 1/(1 - x d)$ to obtain $g(B/d) = \frac{1}{\eta} A^{-1}$; the function $g(x)$ has a singularity in $x = 1/d$, which is in the interior of the domain $[-1,+1]$, and thus Bernstein's inequality precludes us from achieving good approximations with low-degree polynomials for this function.

We will show that it is possible to implement normalised matrix-block encodings of $B$ for two special classes of QLS problems: the first is when $A$ is a diagonally dominant matrix; the second, when $A$ is given as the sum of local PD Hamiltonian terms, where by ``local'' we mean that it acts non-trivially only on a small number of qubits. In these two cases it is therefore possible to implement our improved PD-QLS solver. We leave the question whether it is possible in broader generality to prepare normalised block-encodings as an open research question.

\subsection{Overview of the method based on the \texorpdfstring{$A = LL^\dag$}{A=LL†} decomposition}
\label{sec:overview2}

Our second method for solving PD-QLS with improved runtime is based on finding a decomposition $A = L L^\dag$, akin to the classical Cholesky decomposition~\cite{Cholesky} which exists for all PD matrices, and then use $L^{-1}$ as an efficient and effective preconditioner. Note, in fact, that $L^\dag \x = \b'$ for $\b' = L^{-1} \b$ is a linear system equivalent to the original one, but the decomposition $A = L L^\dag$ immediately implies $\kappa(L) = \kappa(L^\dag) =\sqrt{\kappa(A)}$ and thus the new system provably has a quadratically smaller condition number. This decomposition is similar to a \emph{spectral gap amplification} of $A$~\cite{Somma13}.

The method involves thus two main steps: in the first one we use classical computation to efficiently obtain a description of a matrix $L$ such that $LL^\dag = A$ and such that it is possible to efficiently find, using only classical computation, a description of the vector $\ket{\b'} := \ket{L^{-1}\b}$; in the second one, we use an efficient quantum algorithm, having runtime quasi-linear in $\kappa$, to solve $\ket{L^\dag \x} = \ket{\b'}$, which thus gives $\ket{\x} = \ket{(L^\dag)^{-1} L^{-1}\b} = \ket{A^{-1}\b}$. Note that the classical descriptions of $L^\dag$ and of $\ket{\b'}$ should also allow the efficient compilation the quantum algorithm used to solve the preconditioned QLS.

The picture is not yet complete, since we actually use a matrix $L$ that is non-square and thus singular. As a result, the inversion operation has to be substituted by a pseudo-inversion and the condition number by the \emph{effective} condition number, the ratio between the largest and the smallest \emph{non-zero} singular value; when $A$ is invertible the effective condition number of $L$ is quadratically smaller than the condition number of $A$. We also use two different pseudo-inverses in the classical and in the quantum part of the computation: in the quantum step the Moore-Penrose pseudo-inverse $(L^\dag)^+$ is employed and in the classical preconditioning a \emph{generalised} pseudo-inverse $L^g$ chosen such that $(L^\dag)^+ L^g = A^{-1}$; thus, they yield the desired solution $\ket{\x} = \ket{A^{-1}\b}$ when applied in sequence. These extensions to non-square matrices and to different pseudo-inverses are made to give leeway in the design of the classical part of the algorithm, allowing us to meet the efficiency requirements mentioned above. Finally, we will employ the QLS solver of~\cite{Chakraborty18}, which can tackle pseudo-inversion problems and has a runtime quasi-linear in the effective condition number.

We show that a fully suitable decomposition of the form $A=LL^\dag$ can be constructed for the Sum-QLS problem, i.e., when $A$ is a sum of local PD Hamiltonian terms. In this case, the matrix $L$ is formed by several blocks, each constructed from a single Hamiltonian term, while the pseudo-inverse $L^g$ is obtained inverting the individual blocks, operations involving only small matrices and thus classically feasible. We also require that the vector $\b$ is sparse, containing only polynomially many non-zero entries, thus allowing to efficiently compute the description of $\ket{\b'} = \ket{L^g\,\b}$.

As a final technical caveat, we note that because of the mismatch between the pseudo-inverse used in the classical and in the quantum part of the algorithm ($L^g$ and $(L^\dag)^+$), the vector $\b'$ is not entirely contained in the support of $(L^\dag)^+$ and thus amplitude amplification of the component in the correct subspace is required. This incurs in a multiplicative overhead of a factor $1/\sqrt{\gamma}$, where $\sqrt{\gamma} > 0$ is a known bound for the amplitude of the ``good'' component of $\b'$. This method thus has a provable runtime improvement over competing QLS solvers whenever $1/\sqrt{\gamma}$ is sufficiently small.

\section{Notation and definitions}
\label{sec:notation}

We assume knowledge of the main quantum computation concepts, as given for instance in Ref.~\cite{NC00}. A quantum computation is described using a Hilbert space of dimension $2^n$ for some $n$, corresponds to a system of $n$ qubits, having a distinguished computational basis.

\subsection{Linear algebra and asymptotic notation}
\label{sec:linear_algebra}

We consistently denote with $N \in \mathds{N}$ the dimension of the QLS we aim to solve and we define $n := \lceil \log_2 N\rceil$, so that a vector in $\mathds{C}^N$ (possibly padded with zeroes at the end if $N$ is not a power of two) can be described as pure state of $n$ qubits. For any complex matrix $A \in \mathds{C}^{N\times M}$ having $N$ rows and $M$ columns we write its Singular Value Decomposition (SVD) as $A = V \Sigma W^\dag$ where $V$ and $W$ are unitary matrices of size $N\times N$ and $M\times M$ respectively and $\Sigma$ is a real non-negative matrix of size $N \times M$ which is uniquely determined up to reordering of the diagonal entries and contains the singular values of $A$ on the main diagonal. An Hermitian matrix $A$ is positive definite if $\bra{\textbf{v}} A \ket{\textbf{v}} > 0$ for all $\ket{\textbf{v}}$ and is is positive semi-definite if $\bra{\textbf{v}} A \ket{\textbf{v}} \geq 0$. The eigenvalues of $A$ are real, positive (in the definite case) or non-negative (in the semi-definite case) and coincide with its singular values. For a general $A$, $\varsigma_{\min},\varsigma_{\max}$ and $\lambda_{\min},\lambda_{\max}$ denote the minimum and maximum singular values and eigenvalues, respectively. The Moore-Penrose pseudo-inverse of $A$ is obtained by applying to the singular values $\varsigma_i$ of $A$ the function $f: \mathds{R} \mapsto \mathds{R}$ defined as $f(x)=1/x$ for $x\neq 0$ and $f(0)=0$. More precisely, for a diagonal matrix $\Sigma$ we define $\Sigma^+ = f(\Sigma)$, while for a general matrix $A = V \Sigma W^\dag$ the pseudo-inverse is given as $A^+ := W \Sigma^+ V^\dag$. Given $A \in \mathds{C}^{N \times M}$ a \emph{generalised} pseudo-inverse $A^g \in \mathds{C}^{M \times N}$ is any matrix satisfying the equation $A \,A^g A = A$.

In this work we employ the $\ell^2$-norm for vectors $\norm{\textbf{v}}^2 := \sum_{i=1}^N \text{v}_i^2$ and the induced operator norm for matrices $\norm{A} := \max_{\textbf{v} \neq 0} \norm{A\textbf{v}}/\norm{\textbf{v}}$. The condition number of a matrix is given by $\kappa(A):=\norm{A}||A^{-1}||$. Since we have $\norm{A} = \varsigma_{\max}(A)$ and $||A^{-1}|| = 1/\varsigma_{\min}(A)$, the condition number can be also written as $\kappa(A) = \frac{\varsigma_{\max}(A)}{\varsigma_{\min}(A)}$. For a singular matrix $A$ we define the \emph{effective} condition number to be $\kappa_\t{eff}(A) := \norm{A}||A^+||$, which is equal to the ratio between the largest and the smallest non-zero singular value of $A$. A Hermitian matrix $A$ is diagonally dominant if $\sum_{j: j\neq i} |A_{ij}| \leq |A_{ii}|$ for all $i$ and note that $|A_{ii}| = A_{ii} > 0$ when $A$ is positive definite.

We use the standard big-$\O$ and small-$\mathfrak{o}$ notations for asymptotic scaling, together with the following definitions: $f(x) \in \Omega(g(x))$ if and only if $g(x) \in \O(f(x))$, which is used to give lower bounds, and $\Theta(g(x)) = \O(g(x)) \,\cap\, \Omega(g(x))$. We also use the soft-$\O$ and soft-$\Omega$ notations where $f(x) \in \widetilde{O}(g(x))$ means $f(x) \in \O\big(g(x)\, \text{polylog}\,[g(x)]\big)$, and similarly for $\widetilde{\Omega}(g(x))$, which are used to give more coarse-grained bounds.

\subsection{Definition of the Quantum Linear System problem}
\label{sec:definitions}

In this section we introduce the main definitions that are relevant in the contest of the QLS problem, which is a quantum analogue of the classical linear algebra problem of solving the system of equations $A\x = \b$, having solution $\x = A^{-1}\b$ when $A$ is invertible.

We use pure quantum states to encode $N$-dimensional complex vectors. A vector $\textbf{v}$ enclosed in a bra or in a ket is always assumed to be normalised, $\bnorm{\ket{\textbf{v}}} = 1$. In particular we have:
\al{
	& \ket{\b} 
	= 
	\frac{\b}{\norm{\b}} 
	= 
	\frac{\sum_{i=1}^N b_i \ket{i}}
	{\left(\sum_{i=1}^N |b_i|^2\right)^{\!1/2}} \\
	& \ket{A^{-1}\b} =
	\frac{A^{-1}\ket{\b}}{\norm{A^{-1}\ket{\b}}} \;.
}

We now give a general definition of the QLS problem. The formulation is similar to the one provided in Ref.~\cite{Subasi18} and is intentionally not specifying the access models employed for the coefficient matrix $A$ and the known-term vector $\b$, for sake of generality.

\begin{definition}[Quantum Linear System]
\label{def:QLS}
Suppose we have access to a vector $\b \in \mathds{C}^N \setminus \{0\}$ and to a non-singular matrix $A \in \mathds{C}^{N \times N}$ (access is given via quantum oracles, or some kind of implicit or explicit description). We are given two real positive parameters $\varsigma_*$ and $\varsigma^*$ such that $\varsigma_* \leq \varsigma_{\min}(A)$ and $\varsigma_{\max}(A) \leq \varsigma^*$, i.e.\ the singular values of $A$ are contained in the interval $\D_A = \big[ \varsigma_*,\varsigma^* \big]$; equivalently, we are given two parameters $\kappa > 1$ and $\alpha > 0$ that provide the upper bounds $\kappa(A) \leq \kappa$ and $\norm{A}\leq \alpha$. We are also given a target precision $\varepsilon > 0$. 

The QLS problem then consists in preparing a density matrix $\rho_\x$ which is $\varepsilon$-close in trace distance to the solution vector $\ket{\x} = \ket{A^{-1}\b}$; that is, we require that
\al{
\label{eq:trace_distance}
	\bnorm{\,\rho_\x - \proj{\hspace{.5pt}\x\,}}_{\Tr}
	\leq \varepsilon
}
where the trace norm is defined as $\norm{X}_{\Tr}:= \frac{1}{2} \Tr\left(\sqrt{XX^\dag}\right)$. 
\end{definition}

\begin{definition}[Positive-definite Quantum Linear System]
\label{def:PD-QLS} 
A PD-QLS problem is a QLS problem, as given in \defin{QLS}, where the coefficient matrix $A$ is Hermitian and positive definite.
\end{definition}

We note that a more commonly employed definition requires that the QLS solver outputs a state $|\widetilde{\x}\rangle$ such that $\bnorm{\,|\widetilde{\x}\rangle - \ket{\x}} \leq \varepsilon$. We prefer to use the trace distance since it is operationally motivated, as it equals the probability of distinguishing two copies of two quantum states when optimizing over all possible measurements, see~\cite[Section 9.2.1]{NC00}. The trace distance then also bounds the maximum relative error that can be introduced when estimating the expectation value $\Tr( \proj{\x} M )$ for a given observable $M$ and computing expectation values was the end goal of Ref.~\cite{HHL}. Finally, for pure normalised states $\ket{\psi}$ and $\ket{\phi}$ the trace distance simplifies as
$
	d_\Tr (\psi,\phi) :=
	\bnorm{\,\proj{\psi} - \proj{\phi}}_\Tr 
	= 
	\sqrt{1 - |\braket{\psi}{\phi}|^2}
$
and using
$
	|\braket{\psi}{\phi}|
	\geq 
	1 - \frac{1}{2} \big|\big| \ket{\psi} - \ket{\phi} \big|\big|^2
$
we obtain the inequality
\al{
\label{eq:trace_ineq}
	d_\Tr (\psi,\phi) 
	\leq 
	\big|\big| \ket{\psi} - \ket{\phi} \big|\big| \,,
}
thus the trace distance  between $|\widetilde{\x}\rangle$ and $\ket{\x}$ is at least as small as their $\ell^2$ distance.

It is customary to assume that $A$ has been rescaled by a factor $\alpha \geq \norm{A}$, so that we take, without loss of generality, $\norm{A} \leq 1$ and the only parameter that needs to be specified is an upper bound to the condition number. This will be also our convention, unless otherwise specified.

\begin{figure}[t!]
\includegraphics[scale=1.4]{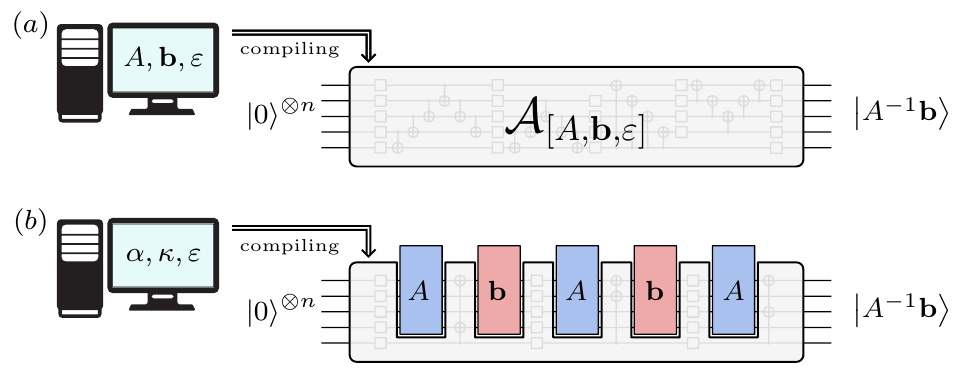}
\caption{Different access models for QLS algorithms. Panel (a) illustrates the case where a classical description of $A$ and $\b$ (and the target precision $\varepsilon$) is provided and then used to compile a quantum algorithm $\mathcal{A}$, solving the QLS for the given $A$ and $\b$. The description of $A$ and $\b$ does not need to be fully explicit: it is sufficient that it allows to efficiently compute the sequence of elementary quantum gates of $\mathcal{A}$. Panel (b) illustrates the case of a \emph{relativising} quantum algorithm; in this case, the sequence of elementary quantum gates only depends on a few parameters (e.g., $\alpha, \kappa, \varepsilon$ as in \defin{QLS}) while two fixed sub-routines specify $A$ and $\b$ and these sub-routines can be treated as black-boxes.
}
\label{fig:access_models}
\end{figure}

\subsection{Oracles for quantum linear system solving}
\label{sec:oracles}

We now define and discuss a few different access models for $A$ and $\b$, since the results we present for the PD-QLS solvers crucially depend on which access model is assumed. A fundamental distinction is between oracular and non-oracular (a.k.a.\ relativising and non-relativising) algorithms, see \fig{access_models}. For instance, in oracular settings it is often possible to establish unconditional query complexity lower bounds, while in non-oracular settings non-trivial runtime lower bounds typically can be proven only under some (reasonable) complexity theory assumptions, such as $\textsf{P} \neq \textsf{NP}$. Note that most of the literature on QLS assumes oracular access to $A$ and $\b$ \cite{HHL, Ambainis10, Childs15, Wossnig18, Subasi18, Lin19, Chakraborty18}.

We start defining the access model for $\b$ that we assume throughout this paper, except where differently specified.

\begin{definition}[State preparation oracle, as in~\cite{HHL}] 
\label{def:state_prep}
Given a vector $\b \in \mathds{C}^N$ we say that we have quantum access to a \emph{state preparation oracle} for $\b$ if there is a unitary operator $\U_\b$ such that $\U_\b \ket{0^n} = \ket{\b'}$, where $\b'$ is obtained by padding $\b$ with zeroes until its size is a power of $2$.
\end{definition}

As already noted by HHL, a general setting where it is possible to efficiently implement $\U_B$ is the one presented by Grover and Rudolph~\cite{Grover02}; another possibility is that $\U_\b$ is directly encoded in a qRAM~\cite{Giovannetti08}, as may be required in quantum machine learning contexts.

Next, we define two models for access to $A$, which we denote as $\P_A$ and $\U_A$ and which correspond to a sparse-matrix-access and matrix-block-encoding, respectively.

\begin{definition}[Sparse-matrix-access, as in~\cite{Childs15}] 
\label{def:sparse_access}
Given a Hermitian matrix\footnote{Since $A$ is Hermitian, access by rows and by columns are equivalent. The definition can be extended to non-Hermitian matrices, but we need to assume access both by rows and by columns.} $A$ which is $d$-sparse (i.e., has at most $d$ non-zero entries in each row and column) a quantum \emph{sparse-matrix-access} $\P_A$ is given by a pair of oracular functions
\al{
	\P_A := (\P_A^{pos}, \P_A^{val}) 
}
where $\P_A^{pos}$ and $\P_A^{val}$ specify the positions of the (potentially) non-zero entries of $A$ and the values of those entries, respectively, i.e.
\al{
	\label{eq:pos}
	\P_A^{pos}: \ket{i,\nu} \ \mapsto \ \ket{i,j(i,\nu)} 
	\hspace{5.5mm}
	& \qquad
	\forall ~ i,j\in \{1,\ldots,N\} 
	~ \t{and} ~ \nu \in \{1,\ldots,d\} \\[2mm]
	\P_A^{val}: \ket{i,j,z} \mapsto \ket{i,j,A_{i,j} \!\oplus\! z} 
	& \qquad
	\forall ~ z \in \{0,1\}^*
}
where $A_{i,j} \!\oplus\! z \in \{0,1\}^*$ denotes a bit string of arbitrary length that encodes the value $A_{i,j} \in \mathds{C}$. 
\end{definition}

In order to keep the presentation simple, we assume here and throughout the paper that numeric representations of complex numbers can be specified exactly or with a sufficiently high number of digits of precision.

\begin{definition}[Matrix block encoding, as in~\cite{Low16,Gilyen18}]
\label{def:U_A}
A unitary operator $\U_A$ acting on $n+a$ qubits is called an $(\alpha, a, \varepsilon)$-matrix-block-encoding of a $n$-qubit operator $A$ if \footnote{The circuit $\U_A$ may also act on other ancilla qubits, which are in $\ket{0}$ both before and after the application of $\U_A$. For a given $(\cdot, a, \cdot)$-matrix-block-encoding we only count the $a$ ancilla qubits that require post-selection to $\bra{0^a}$ to obtain the encoding of $A$.} 
\al{
	\bnorm{
	\, A - \alpha\,  (\bra{0^a} \otimes I) \,\U_A\, (\ket{0^a} \otimes I) 
	\,} 
	\leq 
	\varepsilon
}
which can be expressed also as:
\al{
	\U_A = 
	\left(
	\begin{array}{cc}
	\widetilde{A} /\alpha & * \\
	*                     & *
	\end{array}
	\right)
	\quad \t{with} ~ \norm{\widetilde{A} - A} \leq \varepsilon \,,
}
where the asterisks ($*$) denote arbitrary matrix blocks of appropriate dimensions.

We call $\alpha$ the \emph{normalization factor} of the matrix-block-encoding and we say in the special case where $\alpha = 1$ that $\U_A$ is a \emph{normalised} matrix-block-encoding of $A$.  
\end{definition}

A technique introduced by Childs allows to implement a $(d,1,0)$-matrix-block-encoding of $A$, where the normalisation constant $d$ is equal to the sparsity of $A$, using only a constant number of accesses to $\P_A$ and $\O\big(\t{poly}(n)\big)$ extra elementary gates, see Ref.~\cite{Childs15} and references therein. In short, we have the reduction:
\al{
\label{eq:ChildsReduction}
	& \P_A
	\ \Longrightarrow \  
	\U_{A}
}
where the arrows means that having access to an oracle of first type allows to efficiently implement the oracle of the second type.

We also note that other access models to $A$ have been considered in the literature; for example in Ref.~\cite{Kerenidis16} it is assumed that is possible to efficiently prepare quantum states that are proportional to each one of the columns of $A$.

\subsection{Quantum linear systems in non-oracular settings}
\label{sec:non-oracular}

We consider in \sec{hamiltonians} (and also briefly in \sec{D}) a case that we call the Sum-of-Hamiltonians QLS (Sum-QLS) problem, which is not formulated as an oracular algorithm but is based, instead, on a classical description of $A$ and $\b$. In order to obtain efficient Sum-QLS solving algorithms, it is thus necessary that the descriptions of $A$ and $\b$ are compact, depending at most on $\O\big(\t{poly}(n)\big)$ real or complex parameters.

For the known term vector $\b$, we will simply assume that it is a sparse vector in the computational basis, with at most $\O\big(\t{poly}(n)\big)$ non-zero entries, whose position is also known. Hence, a fully explicit classical description of $\b$ can be provided and this also enables efficiently implementing a state preparation circuit $\U_\b$.

For the coefficient matrix $A$ we give a more implicit description: the entries of $A$ are not specified one-by-one (which would be inefficient, as the matrix size $N \in \Theta(2^n)$ is assumed to be very large) but rather can be computed from only a relatively small set of parameters, scaling polynomially in the number of qubits. Specifically, we assume that $A$ is given as the sum of polynomially many local Hamiltonian terms:
\al{
	A =
	\sum_{j=1}^J
	H_{(j)} 
	\qquad 
	\forall j \ H_{(j)} \ \textup{is positive (semi)-definite,} 
}
where the number of terms is $J\in\O\big(\t{poly}(n)\big)$ and each Hamiltonian $H_{(j)}$ acts on a small number of qubits, namely, on at most $\O\big(\log(n)\big)$ qubits. This case has been previously considered in Ref.~\cite{Chowdhury16}.

\section{Query complexity lower bound}
\label{sec:lower_bound}

In this section we prove a $\Omega(\kappa)$ lower bound on the runtime of QLS which is alternative to the ones given by HHL in Ref.~\cite{HHL}. The main innovation we introduce is that our lower bound applies also to the PD-QLS case, while the proofs given by HHL only yield a $\widetilde{\Omega}(\sqrt{\kappa})$ lower bound when specialised to PD matrices. More precisely, we have the following result.

\begin{proposition}[Query complexity lower bound]
\label{prop:lower_bound}
Consider oracular quantum algorithms that solve the PD-QLS problem as presented in \defin{PD-QLS} for different access models to $A$ and $\b$. Namely, access to $\b$ is given via a state preparation oracle $\U_\b$ (\defin{state_prep}), while access to $A$ is given either via a sparse-matrix oracle $\P_A$ (\defin{sparse_access}) or via a matrix-block-encoding $\U_A$ (\defin{U_A}). Then, PD-QLS solving algorithms reaching a constant precision $\varepsilon \in \O(1)$ have query complexities $Q[\U_\b], Q[\U_A], Q[\P_A]$ all in $\Omega\big(\min(\kappa, N) \big)$.
\end{proposition}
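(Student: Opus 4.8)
\emph{Strategy.} The plan is to prove the bound by an oracle-to-oracle reduction that embeds a Boolean function of large quantum query complexity into a family of positive-definite linear systems. In view of Table~\ref{table:1} the natural choice is the $m$-bit MAJORITY function $\mathrm{MAJ}_m$, which needs $\Omega(m)$ quantum queries even to evaluate with bounded error. Fixing $m:=\Theta(\min(\kappa,N))$, I would construct, for each string $y\in\{0,1\}^m$, a Hermitian PD matrix $A(y)\in\mathds{C}^{N\times N}$ and a vector $\b$ for which: (a) $\norm{A(y)}\le 1$ and $\kappa(A(y))\le\kappa$ for every $y$; (b) a single \emph{fixed} two-outcome measurement performed on the QLS solution state $\ket{A(y)^{-1}\b}$ accepts with probability $\ge 2/3$ when $\mathrm{MAJ}(y)=1$ and $\le 1/3$ when $\mathrm{MAJ}(y)=0$; and (c) each call to a state-preparation oracle $\U_\b$, to a block-encoding $\U_A$, or to a sparse-access oracle $\P_A$ for $A(y),\b$ can be realised with $\O(1)$ queries to the standard bit oracle $\ket{i,z}\mapsto\ket{i,z\oplus y_i}$ plus $\mathrm{poly}(n)$ elementary gates. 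Given (a)--(c), composing the hypothetical PD-QLS solver with the simulations of (c) and the read-off measurement of (b) yields a bounded-error quantum algorithm for $\mathrm{MAJ}_m$: an $\varepsilon$-approximate solution shifts any measurement probability by at most $\varepsilon$ (cf.\ \eqref{eq:trace_ineq}), so the $2/3$--$1/3$ separation survives for any sufficiently small constant $\varepsilon$. If the solver makes $q:=Q[\U_\b]+Q[\U_A]+Q[\P_A]$ queries in total, this algorithm makes $\O(q)$ queries, hence $q=\Omega(m)=\Omega(\min(\kappa,N))$, as claimed. The cap at $N$ is forced because $A$ has only $N$ rows, so at most $\Theta(N)$ independent bits can be packed in; for $\kappa\ge N$ one just embeds $\mathrm{MAJ}$ on $\Theta(N)$ bits.

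\emph{The hard instances.} For $A(y)$ I would use a matrix of the form $A(y)=I-(1-\tfrac1\kappa)\,\Pi(y)$ with $0\preceq\Pi(y)\preceq I$: this is automatically PD, has $\norm{A(y)}\le1$ and $\lambda_{\min}(A(y))\ge 1/\kappa$, hence $\kappa(A(y))\le\kappa$, which settles (a). Crucially, the small eigenvalue $1/\kappa$ is written in \emph{by hand} through the coefficient $(1-\tfrac1\kappa)$ and is decoupled from the system dimension, unlike the $\Theta(1/N^{2/D})$ smallest eigenvalue of a discretized-Laplacian matrix — this is exactly what lets one pack $\Theta(\kappa)$ ``hard bits'' into a system of condition number $\kappa$, rather than the $\Theta(\sqrt{\kappa})$ one gets by squaring an indefinite HHL instance~\cite{HHL}. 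Expanding $A(y)^{-1}=\sum_{k\ge 0}(1-\tfrac1\kappa)^k\Pi(y)^k$, the action of $A(y)^{-1}$ on a suitable fixed $\b$ projects, up to an exponentially small tail and an overall rescaling, onto the top part of the spectrum of $\Pi(y)$. The design therefore reduces to choosing $\Pi(y)$: it must be built ``locally'' from $y$, i.e.\ each of its entries (or each term of a short sum decomposition) depending on $\O(1)$ bits of $y$ — this is what makes (c) possible — while its top invariant subspace encodes the Hamming weight of $y$ in a way a fixed measurement can read off. A natural candidate is a rank-one $\Pi(y)=\proj{\psi(y)}$ with $\psi(y)$ a locally preparable state correlated with $|y|$, or a ``clock/walk'' operator whose off-diagonal structure carries the bits $y_i$, tuned so that the length-$\Theta(m)$ walks dominating $A(y)^{-1}\b$ accumulate the partial sums $\sum_{i\le k}y_i$ into the amplitudes of the solution state.

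\emph{Assembling the reduction.} Since $A(y)$ is $\O(1)$-sparse with locally computable entries, the pair $\P_{A}=(\P_{A}^{pos},\P_{A}^{val})$ of \defin{sparse_access} is implementable with $\O(1)$ bit-oracle calls and $\mathrm{poly}(n)$ gates; by the reduction $\P_A\Rightarrow\U_A$ of \eqref{eq:ChildsReduction} the same holds for a block-encoding of $A(y)$, after a harmless rescaling to meet the promised normalisation in \defin{U_A}; and $\U_\b$ is either input-independent or costs one bit-oracle call. Composing these simulations with the PD-QLS solver and the fixed measurement produces the advertised $\O(q)$-query algorithm for $\mathrm{MAJ}_m$, and the $\Omega(m)$ quantum query lower bound for MAJORITY together with $m=\Theta(\min(\kappa,N))$ finishes the argument.

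\emph{Main obstacle.} The genuinely delicate point — the one that must improve on the $\widetilde{\Omega}(\sqrt{\kappa})$ bound obtained by squaring an indefinite instance — is to design $\Pi(y)$ so that two requirements hold simultaneously: $A(y)$ keeps condition number $\le\kappa$ with $m=\Theta(\kappa)$ embedded bits (so the top direction of $\Pi(y)$ must be $\Theta(1/m)$-close to an eigenvalue-$1$ direction, not merely $\Theta(1/m^2)$-close as a geometric graph would give), \emph{and} the top invariant subspace is sensitive enough to the individual bits that the two Hamming-weight classes $\lceil m/2\rceil$ and $\lfloor m/2\rfloor$ land on solution states separated with constant bias by a \emph{single} fixed measurement, robustly enough to absorb the constant slack $\varepsilon$. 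Because $\norm{A(y)^{-1}}=\Theta(\kappa)$ is large, a one-bit flip of $y$ \emph{can} rotate the normalised solution state by an $\Omega(1)$ angle, so this ``MAJORITY-rigidity'' is not a priori impossible; nevertheless pinning it down — or, as a fallback, replacing MAJORITY by another $\Omega(m)$-hard function that is more forgiving to encode, e.g.\ a PARITY-type problem realisable as a relative phase on one qubit — is the technical heart of the proof. The remaining ingredients (positive-definiteness, the Neumann-series estimate, the oracle bookkeeping, and the invocation of the query lower bound) are routine.
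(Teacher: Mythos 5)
Your high-level strategy is the right one and matches the paper at the template level: reduce a majority-type problem to PD-QLS, take $A=I-(1-\tfrac1\kappa)\Pi$ with $\Pi$ a projector so that positive-definiteness, $\norm{A}\le1$, and $\kappa(A)\le\kappa$ are all built in by hand, and read the answer off a fixed measurement of $\ket{A^{-1}\b}$. But the step you yourself flag as the ``technical heart'' — the design of the hard instances — is genuinely unfinished, and the direction you are leaning is not the one that closes the gap.

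Two concrete mismatches. First, the paper does \emph{not} make $\Pi$ depend on $y$. It keeps $\Pi$ fixed as the rank-one projector onto the uniform superposition, $K=\tfrac1N\boldsymbol{1}_N\boldsymbol{1}_N^{\,T}$ (padded with a reference coordinate), and encodes $y$ only in the signs: for the $\U_\b$ bound, $b_i=(-1)^{y_i}$; for the $\U_A$ and $\P_A$ bounds, $A'=DAD$ with $D$ the diagonal sign matrix $D_{ii}=(-1)^{y_i}$, which a block-encoding or sparse oracle can implement with $\O(1)$ calls to the bit oracle. Then $\bra{\boldsymbol{1}}\b\rangle$ is proportional to the Hamming-weight imbalance, and $A^{-1}=I+\tfrac{1-\epsilon}{\epsilon}K'$ amplifies exactly that component; choosing $\epsilon$ so that $\tfrac{1-\epsilon}{\epsilon}\tfrac MN=1$ both sets $\kappa=\tfrac{N+M}{M}$ and makes the sign-bearing amplitude $\Theta(1)$. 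This is the clean way to beat the $\widetilde\Omega(\sqrt\kappa)$ barrier; your alternative of letting the top invariant subspace of $\Pi(y)$ itself encode the weight, or of a clock/walk operator accumulating partial sums, is substantially harder to control and is never made to work in the proposal. Second, you reduce from exact $\mathrm{MAJ}_m$ with $m=\Theta(\min(\kappa,N))$, but exact majority has margin one, so a single bit flip can move the normalised solution state by $\Omega(1)$ — exactly the ``MAJORITY-rigidity'' worry you raise. The paper sidesteps this by reducing from \emph{promise} majority on $N$ bits with a large margin $M=\Theta(N/\kappa)$ (query complexity $\Omega(N/M)$ by Nayak–Wu): the majority value is robust to $\O(M)$ flips, so the two promise classes land on solution states separated by a constant, which is what is needed to absorb the constant slack $\varepsilon$. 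Finally, for the sparse-access bound the paper cannot use the dense all-ones $K$; it replaces it with the adjacency matrix of a constant-degree expander $B$ and uses the spectral gap to show $A^{-1}\b$ is $\O(1)$-close in norm to the expression obtained with $K$. This extra ingredient is missing from your sketch entirely. In short: same reduction target and same algebraic template, but the construction that makes it go through (fixed $\Pi$, sign encoding, promise margin, expander for sparsity) is not in your proposal, and the route you sketch instead would require a nontrivial argument that you have not supplied.
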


The proof of these lower bounds is rather technical and can be found in \app{lowerbounds}. We now present a weaker result that, however, can be easily proven as a consequence of the optimality of Grover search~\cite{Boyer96}; namely, we show that PD-QLS solving has a linear scaling in $\kappa$ for all $\kappa \leq \sqrt{N}$.

\begin{proposition}
\label{prop:lower_bound_Grover}
Consider a PD-QLS problem as presented in \defin{PD-QLS} and suppose that access to $A$ is given by a sparse-matrix oracle $\P_A$ (\defin{sparse_access}), with no assumption on the access model for $\b$. Then, a quantum algorithm that solves the QLS up to any constant precision $\varepsilon \in [0,1)$ must make $\Omega\big(\min(\kappa,\sqrt{N})\big)$ accesses to $\P_A$.
\end{proposition}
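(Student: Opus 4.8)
The plan is to reduce the unstructured search problem (Grover search) to solving a family of PD-QLS instances, so that an efficient PD-QLS solver would yield an algorithm beating the $\Omega(\sqrt{N})$ lower bound for search. First I would fix a hidden index $k \in \{1,\dots,N\}$, to be accessed only through a standard Grover-type oracle $O_k$ that marks the $k$-th basis state, and from it build a $d$-sparse PD matrix $A_k$ whose sparse-matrix-access $\P_{A_k}$ can be simulated using only $\O(1)$ calls to $O_k$ (plus trivial classical bookkeeping). The natural choice is $A_k = \Id - (1-\delta)\,\proj{k}$ for a suitable parameter $\delta \in (0,1)$: this is Hermitian, diagonal in the computational basis, and positive definite with eigenvalues $1$ (multiplicity $N-1$) and $\delta$ (multiplicity $1$), so $\kappa(A_k) = 1/\delta$. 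Its only off-diagonal structure is none at all, so it is $1$-sparse, and each row/value query to $\P_{A_k}$ is answered by a single query to $O_k$: the position oracle returns the index $i$ itself, and the value oracle returns $1$ or $\delta$ according to whether $O_k$ reports $i = k$.

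Next I would pick the known-term vector $\b$ to be the uniform superposition, $\ket{\b} = \frac{1}{\sqrt N}\sum_{i} \ket{i}$, which needs no oracle access at all (hence the proposition makes no assumption on the access model for $\b$). Then $A_k^{-1}\ket{\b} \propto \sum_{i} \ket{i} + \big(\tfrac{1}{\delta} - 1\big)\ket{k}$, so that for $\delta$ sufficiently small — concretely, $\delta = \Theta(1/\sqrt{N})$ — the solution state $\ket{A_k^{-1}\b}$ has overlap bounded below by an absolute constant (say $\ge 9/10$) with $\ket{k}$. Choosing $\kappa := 1/\delta = \Theta(\sqrt N)$, a PD-QLS solver with any constant precision $\varepsilon \in [0,1)$ produces a density matrix $\rho_\x$ that is $\varepsilon$-close in trace distance to $\proj{A_k^{-1}\b}$, hence (by the triangle inequality and the relation between trace distance and overlap stated in the excerpt) still has $\bra{k}\rho_\x\ket{k}$ bounded below by a constant strictly greater than $1/2$. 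Measuring $\rho_\x$ in the computational basis therefore identifies $k$ with constant success probability.

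Assembling the pieces: if a PD-QLS solver made $q$ queries to $\P_{A_k}$, the reduction gives a quantum algorithm making $\O(q)$ queries to $O_k$ that finds the marked element $k$ with constant probability. By the optimality of Grover search (Ref.~\cite{Boyer96}), any such algorithm needs $\Omega(\sqrt N)$ queries, so $q \in \Omega(\sqrt N)$ for this family, which has $\kappa = \Theta(\sqrt N)$. To get the stated bound $\Omega(\min(\kappa,\sqrt N))$ for all $\kappa$, I would handle $\kappa \le \sqrt N$ by the same construction with $\delta = 1/\kappa$ and $\b$ the uniform superposition over a subset of size $\Theta(\kappa^2) \le N$ containing the unknown marked index, reducing instead to search over that subset, which requires $\Omega(\kappa)$ queries; for $\kappa > \sqrt N$ the bound $\Omega(\sqrt N)$ follows from the $\kappa = \sqrt N$ case by monotonicity (a harder instance can always be padded).

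The main obstacle I anticipate is the quantitative bookkeeping in the second paragraph: one must verify that for the chosen scaling of $\delta$ relative to $N$ (and to the size of the search subset, in the $\kappa \le \sqrt N$ regime) the overlap $|\braket{k}{A_k^{-1}\b}|$ is genuinely bounded away from the regime where a constant-$\varepsilon$ error could swamp the signal — i.e. one needs the normalization $\norm{A_k^{-1}\ket{\b}}^2 = (N-1) + 1/\delta^2$ to be dominated by the $1/\delta^2$ term, which forces $\delta \lesssim 1/\sqrt N$ and hence pins $\kappa$ to be $\Omega(\sqrt N)$ in the "hard" instances, exactly matching the $\min(\kappa,\sqrt N)$ in the statement. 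A secondary point to be careful about is that the whole argument must use only $\P_A$ (not a block-encoding), since $\P_{A_k}$ is what is cheaply simulable from $O_k$; but since $\P_A \Rightarrow \U_A$ by Eq.~\eqref{eq:ChildsReduction}, this is automatically the weaker and hence appropriate access model for a lower bound proven via search.
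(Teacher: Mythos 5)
Your reduction to unstructured search via a diagonal positive-definite matrix, simulating its sparse-matrix oracle $\P_A$ with $\O(1)$ calls to a Grover-type membership oracle and measuring $\ket{A^{-1}\b}$ in the computational basis, is exactly the paper's argument; the paper parametrizes with a set $\S$ of $M$ marked elements and diagonal values $\alpha=\sqrt{(N-M)/N}$, $\beta=\sqrt{M/N}$ (yielding $\kappa\approx\sqrt{N/M}$ and a clean $1/2$--$1/2$ split of the solution state between $\S$ and its complement), covering $\kappa\le\sqrt{N}$ by varying $M$ rather than by shrinking the support of $\b$ as you do, but this is a cosmetic difference of parametrization. The paper also handles arbitrary constant $\varepsilon<1$ precisely as you anticipate in your final paragraph, by rescaling the diagonal entries so that the ideal measurement success probability exceeds $1-\varepsilon$.
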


\begin{proof}

Consider the search problem of finding an element $j \in \mathcal{S}$, where $\S \subseteq \{1,\ldots, N\}$ is a subset containing $M$ elements. The membership of a element $j$ in $\mathcal{S}$ is encoded as a quantum oracle $\P_\mathcal{S}$ which flips a ancilla qubit if $j \in \S$ and leaves the ancilla unchanged if $j \notin \S$.

Consider, next, a matrix $A$ that is diagonal (and thus 1-sparse) having entries
\al{
	A_{j,j}
	\; = \;
	\begin{cases}
	\alpha = \sqrt{\frac{N-M}{N}} & \t{if}~ j \notin \S \\
	\beta  = \sqrt{\frac{M}{N}}   & \t{if}~ j    \in \S \,.
	\end{cases}
	\label{eq:A_parameters}
}
The sparse-matrix oracle $\P_A = (\P_A^{pos}, \P_A^{val})$ for this diagonal matrix $A$ can be implemented with exactly one access to the membership oracle $\P_\S$. In fact, $\P_A^{pos}$ can be implemented without any access to $\P_\S$, since it is known that the non-zero entries are on the diagonal, while $\P_A^{val}$ can be implemented with a single access to $\P_\S$, assuming that $M$ and $N$ are known: by definition we have $\P_\S \ket{j,0} = \ket{j,0}$ if $j \notin \S$  and $\P_\S\ket{j,0} = \ket{j,1}$ if $j \in \S$, thus it is sufficient to apply on the ancilla system the transformation $\ket{0} \mapsto \ket{\alpha}$ and $\ket{1} \mapsto \ket{\beta}$, where the quantum register contains a binary representation of the numbers $\alpha$ and $\beta$, to implement $\P_A^{val}$.

Now notice that $A$ is a matrix having condition number $\kappa = \frac{\alpha}{\beta} = \sqrt{\frac{N-M}{M}} \in \Theta \Big(\sqrt{\frac{N}{M}} \,\Big)$ assuming $M \leq N/2$. Moreover, $A^{-1}$ is also diagonal, with entries
\al{
	(A^{-1})_{j,j}
	\; = \;
	\begin{cases}
	\sqrt{\frac{N}{N-M}} & \text{if}~ j \notin \S \\
	\sqrt{\frac{N}{M}}   & \text{if}~ j    \in \S \,.
	\end{cases}
}
Solving the QLS for the known-term vector $\ket{\b} = \ket{\u_N} = \frac{1}{\sqrt{N}} \sum_{j=1}^N \ket{j}$ yields
\al{
\ket{\x}\ 
& = \
	\frac{A^{-1} \ket{\u_N}}{\norm{A^{-1} \ket{\u_N}}} \\
& = \
	\frac{1}{\sqrt{2N}}
	\bigg(
	\sqrt{\frac{N}{N-M}}	
	\sum_{j \notin \S} \ket{j} +
	\sqrt{\frac{N}{M}}	
	\sum_{j \in \S} \ket{j}
	\bigg)	\\
& \equiv \
	\frac{1}{\sqrt{2}}
	\Big(
	\ket{j \notin \S} + \ket{j \in \S}
	\Big) , \label{eq:Grover}
}
where in the last line we have introduced the normalised vectors $\ket{j \notin \S} := \frac{1}{\sqrt{N-M}}	\sum_{j \notin \S} \ket{j}$ and $\ket{j \in \S} := \frac{1}{\sqrt{M}}	\sum_{j \in \S} \ket{j}$. Measuring $\ket{\x}$ in the computational basis therefore solves the search problem with probability $1/2$.

Suppose now that exists a quantum algorithm $\mathcal{A}$ that solves the QLS problem exactly ($\varepsilon = 0$) for PD matrices and that queries $\mathfrak{o}\big(\kappa\big)$ times the oracle $\P_A$. Applying $\mathcal{A}$ to the diagonal matrix $A$ and $\ket{\b} = \ket{\u_N}$ would then require only $\mathfrak{o}(\sqrt{N/M}\,)$ calls to $\P_A$, and hence to $\P_\S$, in order to produce $\ket{\x}$. This means that using $\mathcal{A}$ we can solve an unstructured search problem using $\mathfrak{o}(\sqrt{N/M}\,)$ queries to $\P_\S$, violating the optimality of Grover search.

Next, suppose that the algorithm $\mathcal{A}$ is an approximate PD-QLS solver, i.e.\ that it produces a state $\rho_\x$ such that $\bnorm{\,\rho_\x - \proj{\hspace{.5pt}\x\,}}_{\Tr} \leq \varepsilon$ for some constant $\varepsilon < 1/2$. Apply a projective measurement to $\rho_\x$ that projects it on the space spanned by $\{\ket{j}\}_{j\in\S}$ with probability $p$ and projects it onto the orthogonal subspace with probability $q=1-p$; in the ideal case ($\varepsilon=0$) we would have $p=q=1/2$. By the operational definition of the trace distance, the probability distribution $(p,q)$ must be at most $\varepsilon$-distinguishable from $(1/2,1/2)$, i.e.\ $\max \{|p-1/2|, |q-1/2|\} \leq \varepsilon$. Thus, the success probability is a constant $p \geq 1/2 - \varepsilon > 0$.

Finally, this argument can be extended to any constant precision $\varepsilon < 1$. It is sufficient to define a new diagonal matrix $\widehat{A}$ by changing the values $\alpha$ and $\beta$ in Eq.~\eqref{eq:A_parameters}, so that the probability $\widehat{p}$ of finding an element $j \in \S$ when measuring $|\widehat{\x} \rangle = |\widehat{A}^{-1} \u_N \rangle$ satisfies $\widehat{p} > 1 - \varepsilon$.

\end{proof}

Notice that in the proof the vector $\ket{\b} = \ket{\u_N}$ is fixed and easy to produce and hence the access model for $\ket{\b}$ plays no role in our reduction. We also remark that this proof can be straightforwardly modified to prove that the operation of quantum matrix-vector multiplication (i.e., obtaining a state proportional to $A\ket{\b}$) must also have a linear cost in $\kappa$. Moreover, since a sparse oracle access $\P_A$ allows to efficiently implement also a matrix-block encoding $\U_A$~\cite{Childs15}, the same reduction immediately rules out oracular algorithms that use $\mathfrak{o}(\kappa)$ accesses to $\U_A$ (for $\kappa \leq \sqrt{N}$). Finally, a simple argument shows that a $\Omega(\kappa)$ lower bound holds also for the $\U_\b$-query complexity: an initial small difference between two input states $\b$ and $\b'$ can be magnified $\kappa$-fold in the corresponding outputs $\ket{A^{-1}\b}$ and $\ket{A^{-1}\b'}$, which is impossible unless one uses at least $\kappa$ accesses to $\U_\b$~\cite{PrivateComm}.

\section{Method based on low-degree polynomial approximations of \texorpdfstring{$A^{-1}$}{A-1}}
\label{sec:block_encoding}

We start this section giving a few details on how to use the Quantum Signal Processing (QSP) method to implement polynomial functions of a matrix that we can access through a matrix-block-encoding (\sec{B}) and then we provide the explicit definition of the approximating polynomials of the inverse function (\sec{C}). Next (\sec{D}) we discuss two cases where we can implement a matrix-block-encoding of $B = \Id - \eta\,A$, as required to achieve a quadratic reduction in the degree of the approximating polynomials. Finally (\sec{from-to}) we discuss the cost of matrix-vector multiplication and summarise the cost of PD-QLS solving via this approach.

\subsection{The quantum signal processing method}
\label{sec:B}

We employ QSP as a tool to implement $\U_{A^{-1}}$, a matrix-block-encoding of $A^{-1}$, assuming that we have access to a normalised matrix-block-encoding $\U_B$ of 
\al{
	B := \Id - \eta\,A \,
	\label{eq:B_definition}
}
for some $\eta>0$. We assume that the spectrum of $B$ is contained in the interval $\mathcal{D}_B = [-1,1-\eta/\kappa]$, where $\kappa$ is an upper bound to the condition number of $A$. The QSP method can be stated, already specialised to our case of interest, as follows~\cite[Theorem 56]{Gilyen18}.

\begin{theorem}
\label{thm:Signal}
Consider a $(\beta,b,\epsilon)$-block-encoding $\U_B$ of a Hermitian matrix $B$ and let $P(x)$ be a degree-$\ell$ real polynomial with $|P(x)| \leq 1/2$ for all $x \in [-1, 1]$. Then there is a quantum circuit $\U_{P(B/\beta)}$ which is a $(1, b+2, 4\ell \sqrt{\epsilon/\beta})$-block-encoding of $P(B/\beta)$, and requires $\ell$ applications of $U_B$ and $U_B^\dag$, a single application of controlled-$U_B$ and $\O\big((n+b)\ell\big)$ elementary quantum gates. Moreover, the same result holds for polynomials satisfying $|P(x)| \leq 1$ if $P(x)$ has defined parity, i.e., $P(-x) = P(x)$ or $P(-x) = -P(x)$.
\end{theorem}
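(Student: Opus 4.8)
The plan is to obtain \thm{Signal} as the Hermitian specialization of the quantum singular value transformation of Gily\'en et al.~\cite{Gilyen18} --- it is, in essence, their Theorem~56 with $A\mapsto B$, $\alpha\mapsto\beta$, $a\mapsto b$, $d\mapsto\ell$ --- so the real work is assembling standard ingredients and tracking constants. I would first reduce to the \emph{exact} case: by \defin{U_A} a $(\beta,b,\epsilon)$-block-encoding $\U_B$ is an exact $(\beta,b,0)$-block-encoding of some Hermitian $\tilde B$ with $\norm{\tilde B-B}\le\epsilon$ and $\norm{\tilde B}\le\beta$; thus it suffices to build from $\U_B$ a circuit exactly block-encoding $P(\tilde B/\beta)$ and then control the replacement error $\tilde B\mapsto B$.

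For the exact construction I would split into two sub-cases. If $P$ has definite parity and $\snorm{P}_{[-1,1]}\le1$, invoke the phase-sequence form of quantum signal processing~\cite{Low16,Low17,Gilyen18}: there exist angles $\phi_1,\dots,\phi_\ell$ such that interleaving single-qubit rotations $e^{i\phi_k Z}$ on one auxiliary qubit with alternating applications of (controlled-)$\U_B$ and $\U_B^\dagger$ yields, in the $\bra{0^{b+1}}(\cdot)\ket{0^{b+1}}$ block, exactly $P(\tilde B/\beta)$; the multiply-controlled rotations account for $\O((n+b)\ell)$ elementary gates. For general $P$ with $\snorm{P}_{[-1,1]}\le1/2$, decompose $P=P_{\mathrm e}+P_{\mathrm o}$ into even and odd parts, note $\snorm{P_{\mathrm e}}_{[-1,1]},\snorm{P_{\mathrm o}}_{[-1,1]}\le1/2$ by the triangle inequality, so $2P_{\mathrm e},2P_{\mathrm o}$ fall under the definite-parity case, and combine their block-encodings by a single-ancilla linear combination of unitaries~\cite{Childs12,Berry14} to get $\tfrac12\big(2P_{\mathrm e}(\tilde B/\beta)+2P_{\mathrm o}(\tilde B/\beta)\big)=P(\tilde B/\beta)$ in the block, at the cost of one more ancilla (total $b+2$) and a constant-factor blow-up of the query and gate counts.

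For the error I would propagate it at the level of the \emph{unitaries}, not of the polynomial --- this is the one genuinely delicate point, since a direct Lipschitz bound on $P$ via the Markov brothers' inequality would only give the far worse $\O(\ell^2\epsilon/\beta)$. The QSP circuit is a fixed word $C[\cdot]$ in $\O(\ell)$ copies of the oracle, its inverse and its controlled version, composed with oracle-independent gates, whence $\norm{C[U]-C[U']}\le c'\ell\,\norm{U-U'}$ for any unitaries $U,U'$ (telescoping, using that multiplication by unitaries is an operator-norm isometry and that $\dagger$ and controlling preserve $\norm{U-U'}$). It remains to exhibit an exact $(\beta,b,0)$-block-encoding $\U_B^{\mathrm{id}}$ of $B$ close to $\U_B$: with $M:=B/\beta$, $\tilde M:=\tilde B/\beta$ (contractions, $\norm{M-\tilde M}\le\epsilon/\beta$), the first block-column of a block-encoding of $M$ may be taken to be $\binom{M}{\sqrt{\Id-M^\dagger M}}$, and $\norm{\sqrt{\Id-M^\dagger M}-\sqrt{\Id-\tilde M^\dagger\tilde M}}\le\sqrt{\norm{M^\dagger M-\tilde M^\dagger\tilde M}}\le\sqrt{2\epsilon/\beta}$ by operator-H\"older continuity of the square root; completing this near-isometry to a unitary keeps the perturbation $\O(\sqrt{\epsilon/\beta})$, so $\norm{\U_B-\U_B^{\mathrm{id}}}\le c\sqrt{\epsilon/\beta}$ for an absolute $c$. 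Since $C[\U_B^{\mathrm{id}}]$ exactly block-encodes $P(B/\beta)$ by the previous step, $C[\U_B]$ is a $(1,b+2,cc'\ell\sqrt{\epsilon/\beta})$-block-encoding of $P(B/\beta)$, and one checks the accumulated absolute constants (the square-root dilation, the unitary completion, the factor $2$ from the even/odd LCU) fit under the stated $4$.

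I expect the main obstacle to be precisely this $\sqrt{\epsilon}$---rather than linear---error scaling: establishing cleanly that nearby Hermitian operators admit nearby unitary block-encodings, so that the error can legitimately be tracked through the unitaries. Everything else --- the explicit phase angles $\phi_k$ realizing a prescribed $P$, the even/odd-plus-LCU reduction, the ancilla bookkeeping, and the $\O((n+b)\ell)$ gate count --- is standard QSVT material that can be cited from~\cite{Gilyen18}, and the residual effort is just constant-chasing.
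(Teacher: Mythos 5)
The paper does not supply a proof of \thm{Signal}: it is imported verbatim from Gily\'en et al.\ \cite[Theorem~56]{Gilyen18} (with $A\mapsto B$, $\alpha\mapsto\beta$, $a\mapsto b$, $d\mapsto\ell$), so there is no in-paper argument for your sketch to match. What you are really attempting is a reconstruction of the proof of that cited theorem. Your high-level outline tracks it: definite-parity QSP as the primitive, even/odd split plus a one-ancilla LCU for general $P$ with $\snorm{P}_{[-1,1]}\le 1/2$, and a $\sqrt{\epsilon}$-type robustness bound rather than a naive Markov-brothers Lipschitz bound.

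The genuine gap is in your robustness step. You want to replace $\U_B$ by a \emph{unitary} $\U_B^{\mathrm{id}}$ that exactly $(\beta,b,0)$-block-encodes $B$ and has $\norm{\U_B-\U_B^{\mathrm{id}}}\in\O(\sqrt{\epsilon/\beta})$, and then telescope through the QSP word. This runs into two problems. First, there is no guarantee that $B$ is exactly block-encodable with normalization $\beta$: the definition only gives $\norm{B-\tilde B}\le\epsilon$ with $\norm{\tilde B}\le\beta$, so $\norm{B}$ can exceed $\beta$, in which case $M=B/\beta$ is not a contraction and $\sqrt{\Id-M^\dag M}$ does not exist. Second, even when $M$ is a contraction, you take the first block-column of $\U_B$ to be the canonical $\binom{\tilde M}{\sqrt{\Id-\tilde M^\dag\tilde M}}$; for $b\ge 2$ it is instead $\binom{\tilde M}{W}$ for an essentially arbitrary $W$ with $W^\dag W=\Id-\tilde M^\dag\tilde M$, so the perturbed isometry you write down need not sit inside $\U_B$ at all, and the ``complete the near-isometry to a unitary while staying close to $\U_B$'' step is not justified. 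The proof in \cite{Gilyen18} avoids both issues by not trying to move the error into the oracle: one shows $C[\U_B]$ is an \emph{exact} block-encoding of $P^{(SV)}(\tilde B/\beta)$ and then invokes their robustness lemma for singular-value transforms, $\bnorm{P^{(SV)}(X)-P^{(SV)}(\tilde X)}\le 4d\sqrt{\norm{X-\tilde X}}$ for $\norm{X},\norm{\tilde X}\le 1$, whose own proof is a separate argument (a region split between singular values bounded away from $1$, where a linear Lipschitz bound with a $1/\sqrt{1-\lambda^2}$ blow-up applies, and those near $1$, where $|P|\le 1$ is used directly, optimized over the cutoff). If you want a self-contained sketch of \thm{Signal}, you should cite or reprove that robustness lemma rather than the unitary-dilation perturbation you propose; as written, that part of your argument does not go through.
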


Importantly, there are explicit classical algorithms that can efficiently compute a parametrization of $\,\U_{P(B/\beta)}$ for any polynomial $P$ and then compile an explicit quantum circuit that implements it, see Refs.~\cite{Chao20,Dong20} for the current state-of-the-art.

We now further motivate the need to choose $\beta=1$, i.e., that the matrix-block-encoding of $B$ is normalised; equivalently, the part proportional to the identity in the definition~\eqref{eq:B_definition} must not be rescaled. We remind that, as presented in \sec{overview1}, our goal is to implement an polynomial $P(B/\beta)$ approximating $A^{-1}$, that is  
\al{
	P(B/\beta) \approx
	f(B) =
	\frac{1}{I - B} =
	\frac{1}{I - (I-\eta\,A)} =
	\frac{1}{\eta} A^{-1} .
}
Note, however, that in this expression actually we have $P(x) \approx \frac{1}{1-\beta  x}$, a function that has a singularity in $x = 1/\beta \leq 1$. As a consequence of Bernstein's inequality~\cite{Schaeffer41} this function may have polynomial approximations with quadratically better degree only when the singularity is in $x=1$, i.e.\ when we have $\beta =1$.

\subsection{Polynomial approximation of $1/(1-x)$ }
\label{sec:C}

In this section we show analytical polynomial approximations of the function $f(x) = 1/(1-x)$ so that we can use the QSP method to implement it for the matrix $B = \Id - \eta\,A$ as in Eq.~\eqref{eq:B_definition}. To keep the notation simple we assume $\eta =1$ and that the spectrum of $A$ is contained in $\mathcal{D}_A = \left[\, \frac{1}{\kappa}, \, 2\right]$, while we can account for any value $\eta < 1$ simply by rescaling $\kappa$ to $\kappa/\eta$. Consequently, it is only necessary for the polynomial $P(x)$ to be a good approximation of our target function in the domain $\D_B = \left[\, -1, \, 1\! -\! \frac{1}{\kappa}\,\right]$.

Our starting point will be the polynomial $\hat{\T}_{\ell,\kappa}(x)$, a shifted and rescaled version of $\T_\ell(x)$, the $\ell$-degree Chebyshev polynomial of the first kind,
\al{
\label{eq:Cheb_def}
	\hat{\T}_{\ell,\kappa}(x)
	:=
	\frac{\T_\ell \left(\frac{x + \frac{1}{2\kappa}}{1-\frac{1}{2\kappa}} \right)}
	     {\T_\ell \left(\frac{1 + \frac{1}{2\kappa}}{1-\frac{1}{2\kappa}} \right)} 
	\,,
}
which is the solution of the following minimax problem (see Ref.~\cite[Theorem 6.25]{Saad03}):
\al{
	\hat{\T}_{\ell,\kappa}(x) =
	\underset{\substack{P \in \mathds{R}_\ell, \\ P(1) =1}}{\t{argmin}}
	\max_{x\in[-1,1-\frac{1}{\kappa}]}
	\big| P(x) \big| \,.
}

Chebyschev polynomials satisfy the property that $|\T_\ell(x)| \leq 1$ for all $\ell \in \mathds{N}$ and all $x \in [-1,+1]$, while $\T_\ell(1 + \delta) \geq \frac{1}{2} e^{\ell \sqrt{\delta}}$ for $0 \leq \delta \leq 1/6$, see e.g.~\cite[Lemma 13]{Lin19}. Using the changes of variables 
\al{
\label{eq:change}
	y(x) := \frac{x + \frac{1}{2\kappa}}{1-\frac{1}{2\kappa}} 
	\qquad \quad
	\delta := 
	\frac{1 + \frac{1}{2\kappa}}{1-\frac{1}{2\kappa}} - 1
	= \frac{1}{\kappa - 1/2 }
}
we can rewrite the definition in Eq.~\eqref{eq:Cheb_def} as
\al{
	\hat{\T}_{\ell,\kappa}(x)
	=
	\frac{\T_\ell \big( y(x) \big)}
	     {\T_\ell \big( 1 + \delta \big)} \,.
}
Then, the numerator satisfies $\big| \T_\ell \big( y(x) \big) \big| \leq 1$ for all $x \in \D_B = [-1, 1 - \frac{1}{\kappa}]$, while the denominator is $\T_\ell \big( 1+\delta \big) \geq \frac{1}{2} e^{\ell \sqrt{\delta}}$. This means that it is sufficient to choose $\ell \geq \frac{1}{\sqrt{\delta}}  \log \!\left( \frac{2}{\epsilon}\right)$, i.e.\ $\ell \in \Theta\big(\sqrt{\kappa} \log(1/\epsilon)\big)$, to obtain $\big|\hat{\T}_{\ell,\kappa}(x)\big| \leq \epsilon$ on the interval $\D_B$.

We then use the following $(2\ell -1)$-degree polynomial as our approximation of $f(x) = \frac{1}{1-x}$:
\al{
\label{eq:poly_approx}
	P_{2\ell-1,\kappa}(x) := \frac{1}{1-x} \left[1 - \hat{\T}_{\ell,\kappa}(x) \right]^2 .
}
To see that $P_{2\ell-1,\kappa}(x)$ is indeed a polynomial, note that $[1 - \hat{\T}_{\ell,\kappa}(x)]$ has a twofold root in $x = 1$, thus is exactly divisible by $1-x$ and moreover $P_{2\ell-1,\kappa}(1) = 0$. This last property is useful because it allows us to implement the pseudo-inverse $A^+$ for a singular matrix $A$; i.e., in the case in which $A$ has some eigenvalues that are equal to $0$ (equivalently, $B = I-A$ has eigenvalues equal to $1$) these will be mapped to $0$ and if moreover all the non-zero eigenvalues are separated from zero by a gap $1/\kappa$, the the polynomial in Eq.~\eqref{eq:poly_approx} is a close approximation of the mapping $(1-\lambda) \mapsto 1/\lambda$ for all $\lambda \neq 0$ in the spectrum of $A$. We choose the degree $\ell \in \Theta\big(\sqrt{\kappa} \log(\kappa/\varepsilon)\big)$ in such a way that $\big|\hat{\T}_{\ell,\kappa}(x)\big| \leq \varepsilon / (3\kappa)$ for all $x \in \D_B$ and thus we obtain from Eq.~\eqref{eq:poly_approx} 
\al{
\label{eq:P_bound}
	\left\vert
	P_{2\ell-1,\kappa}(x) - \frac{1}{1-x}
	\right\vert
	\; \leq \;
	\kappa 
	\left(
	2\,\frac{\varepsilon}{3\kappa} + 
	\frac{\varepsilon^2}{9\kappa^2}
	\right) 
	\; \leq \;
	\varepsilon
	\qquad \quad
	\forall x \in \D_B\,,
} 
that is, we have an $\varepsilon$-close polynomial approximation on the interval $\D_B = [-1, 1 - \frac{1}{\kappa}]$. This directly implies $\norm{P_{2\ell-1,\kappa}(B) - A^{-1}} \leq \varepsilon$ in operator norm.

Finally, the polynomial in Eq.~\eqref{eq:poly_approx} has to be normalised so that it becomes compatible with the QSP method. Therefore we define
\al{
\label{eq:polynomial}
	\hat{P}_{2\ell-1,\kappa}(x) :=
	\frac{P_{2\ell-1,\kappa}(x)}{K}\,,
	\qquad 
	\t{where} ~ 
	K := \,2\!\! \max_{x\in [-1,+1]} |P_{2\ell-1}(x)| \;.
}
With this definition we have $|\hat{P}_{2\ell-1}(x)| \leq \frac{1}{2}$ for $x\in[-1,+1]$, as required. The normalisation constant satisfies $K \in \Theta(\kappa)$ for $\ell \in \Omega(\sqrt{\kappa})$, see \app{polynomial} for the proof of this bound. In conclusion, the QSP method allows us to implement a $(K, b+2, \varepsilon)$-matrix-block-encoding of $A^{-1}$, where $b$ is the number of ancilla qubits required in the block-encoding of $B$.

\subsection{Implementing normalised matrix-block-encodings of \texorpdfstring{$B = I$ - $\eta A$}{B=I-ηA}}
\label{sec:D}

In this subsection we show two methods that, under different assumptions, allow us to efficiently implement a normalised matrix-block-encoding of $B = \Id - \eta\,A$. Preliminarily, we remark that this is a non-trivial task, as we argue with the following three considerations.

First, any Hermitian matrix $M \in \mathds{C}^{N \times N}$ satisfying $\norm{M} \leq 1$ can be implemented as a normalised sub-block of a unitary, since an explicit construction is given by~\cite{Low16}
\al{
	\U_M
	=
	\left(
	\begin{array}{cc}
	M & -\sqrt{\Id - M^2}\\
	\sqrt{\Id - M^2} & M
	\end{array}
	\right).
}
Implementing the circuit corresponding to $\U_M$ in general requires up to $\O(N^2)$ elementary quantum gates~\cite{Shende06} and is thus inefficient; however, efficient constructions are possible in specialised cases.

Second, standard quantum methods allow us to efficiently implement \emph{sub-normalised} matrix-block-encodings of $B$. As a first example, Childs' quantum walk operator uses $\O(1)$ accesses to a sparse-matrix oracle $\P_B$ to implement $\U_{M}$ for $M = B/d$, where $d$ is the column-sparsity of $B$~\cite{Childs15}. A second example is to assume that we have access to $\U_{A}$, a block-encoding of $A$ with $\norm{A}\leq 1$, and then use the LCU lemma~\cite{Childs12} to implement a linear combination of $\Id$ and $-\U_{A}$, yielding a normalised block-encoding of $p \,\Id - (1-p)A \equiv p\,B$, for some $p \in (0,1)$. However, any ``black-box'' method that aims at amplifying a block-encoding of $B / \beta$ (with $\beta > 1$) to a normalised block-encoding of $B$ is in general inefficient. This can be proven, for instance, applying the lower bound in Ref.~\cite[Theorem 73]{Gilyen18} to the function $f(x) =  \beta\,x$.

Third, it is currently an open problem whether it is possible or not to implement a normalised matrix-block-encoding $\U_M$ given access to a sparse-matrix oracle $\P_M$ with $\norm{M} \leq 1$. In absence of general results of this kind, we then turn to developing specialised methods to efficiently implementing a normalised block-encoding $\U_B$, with $B = \Id - \eta\,A$, in the cases where $(i)$ $A$ is diagonally dominant or $(ii)$ $A$ is the sum of positive semi-definite local Hamiltonians.

\subsubsection{Diagonally-dominant coefficient matrix}

In this section, we implement a normalised block-encoding of $B = \Id -A$ employing the method described in Ref.~\cite[Lemma~47]{Gilyen18}, which we report here in \lem{Gram}. Our construction requires the preparation of some families of states $\{\ket{\psi_i}\}_i, \{\ket{\phi_j}\}_j$ that are well-defined only when $A$ is diagonally-dominant, while attempts at extending the method to $B = \Id - \eta\,A$ for non-diagonally-dominant PD matrices results in non-normalisable states for any $\eta > 0$. We also remark that the problem of solving linear systems involving diagonally dominant PD matrices (which includes the noteworthy class of Laplacian matrices \cite{Merris94}) is well studied in classical linear algebra: for these classes of matrices there are classical algorithms that can solve a linear system substantially faster than what is possible for more general matrices \cite{Spielman10}.

\begin{lemma} 
\label{lem:Gram} 
Suppose that we have access to two ``state preparation'' unitaries $U_L$ and $U_R$ (left and right) acting on $a+s$ qubits such that
\al{
	U_L : \ket{0^a} \ket{i}\, & \mapsto \ket{\psi_i} \\ 
	U_R : \ket{0^a} \ket{j} & \mapsto \ket{\phi_j} ,
}
for any $i,j \in \{1, \ldots, 2^s\}$ and for some families of states $\{\ket{\psi_i}\}_i$ and $\{\ket{\phi_j}\}_j$. Then, it is immediate to see that $U_L^\dag U_R$ is a $(1,a,0)$-matrix-block-encoding of the Gram matrix $H$ such that $H_{ij} = \braket{\psi_i}{\phi_j}$.
\end{lemma}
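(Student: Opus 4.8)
The plan is to verify directly the defining identity of \defin{U_A}: I will show that the top-left corner block of the unitary $V := U_L^\dagger U_R$ is \emph{exactly} the Gram matrix $H$, so that the normalisation factor is $\alpha = 1$ and the error is $\varepsilon = 0$, by evaluating $(\bra{0^a}\otimes I)\, V\, (\ket{0^a}\otimes I)$ in the computational basis indexed by $i,j \in \{1,\ldots,2^s\}$.

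First I would record the ingredients. Since $U_L$ and $U_R$ are unitaries on $a+s$ qubits, the composition $V = U_L^\dagger U_R$ is again a unitary on $a+s$ qubits, hence a legitimate candidate block-encoding once we designate the first $a$ qubits as the register to be post-selected onto $\ket{0^a}$; in particular $(\bra{0^a}\otimes I)\, V\, (\ket{0^a}\otimes I)$ automatically has operator norm at most $1$, being a corner sub-block of a unitary, which is consistent with the requirement $\norm{H} \leq 1$ entailed by a $(1,a,0)$-block-encoding. Next, taking the adjoint of $U_L\, (\ket{0^a}\otimes\ket{i}) = \ket{\psi_i}$ gives $(\bra{0^a}\otimes\bra{i})\, U_L^\dagger = \bra{\psi_i}$, while by hypothesis $U_R\, (\ket{0^a}\otimes\ket{j}) = \ket{\phi_j}$.

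The computation is then a one-liner: for every pair of basis indices,
\al{
	(\bra{0^a}\otimes\bra{i})\, U_L^\dagger U_R\, (\ket{0^a}\otimes\ket{j})
	\;=\; \braket{\psi_i}{\phi_j}
	\;=\; H_{ij} \,.
}
Since this holds for all $i,j \in \{1,\ldots,2^s\}$, we conclude $(\bra{0^a}\otimes I)\, V\, (\ket{0^a}\otimes I) = H$, so $V = U_L^\dagger U_R$ is a $(1,a,0)$-matrix-block-encoding of $H$ in the sense of \defin{U_A}. I do not expect any genuine obstacle here; the only points that require care are the convention fixing the order of the ancilla versus system registers (the $\ket{0^a}\otimes I$ pattern of \defin{U_A}) and the complex conjugation implicit in turning the ket relation $U_L(\ket{0^a}\otimes\ket{i}) = \ket{\psi_i}$ into the bra relation $(\bra{0^a}\otimes\bra{i})\, U_L^\dagger = \bra{\psi_i}$. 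Once these conventions are pinned down the Gram-matrix identity is immediate, which is why the lemma statement can already assert that it is ``immediate to see.''
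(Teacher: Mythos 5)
Your proof is correct and matches the argument the paper has in mind (and leaves implicit by declaring it "immediate"): taking the adjoint of the $U_L$-relation and pairing it with the $U_R$-relation yields $\bra{0^a,i}\,U_L^\dag U_R\,\ket{0^a,j} = \braket{\psi_i}{\phi_j} = H_{ij}$, which is precisely the $(1,a,0)$-block-encoding condition with zero error.
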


Let $A$ be a Hermitian $d$-sparse diagonally-dominant matrix, i.e.\ $\sum_{j\neq i}|A_{ij}| \leq A_{ii} \leq 1$ for all $i$. By Gershgorin theorem~\cite{Gersh}, diagonal dominance of a Hermitian matrix is sufficient to guarantee  positive semi-definiteness, i.e.\ $\lambda_{\min}(A)\geq 0$, and both $\lambda_{\min}(A) = 0$ and $\lambda_{\min}(A) \neq 0$ are possible\footnote{If $A$ is \emph{strictly} diagonally dominant we have $\lambda_{\min}(A) \geq \min_i \big\lbrace A_{ii} - \sum_{j\neq i}|A_{ij}| \big\rbrace > 0$ and then the condition number is immediately bounded by $\kappa(A) \leq 1/\lambda_{\min}(A)$ when $\norm{A} \leq 1$.}. Consider then the states
\al{
\label{eq:psi_state}
	& \ket{\psi_i} := 
	\sum_{l \in \t{supp}(A_i)}
	\sqrt{\delta_{il} - A_{il}} \ket{l}
	\; + \; \sqrt{r_i} \ket{N+1} 
	\\
	& \ket{\phi_j} := 
	\sum_{k \in \t{supp}(A_j)}
	\sqrt{\vphantom{A_{il}} \smash[b]{\delta_{jk} - A_{jk}^*} } \ket{k}
	\; + \; \sqrt{r_j} \ket{N+1} 
}
where $\t{supp}(A_i)$ are the position of the (at most) $d$ non-zero entries of the column vector $A_i$ and the value $0 \leq r_i \leq 1$ can be computed so that the states are normalised, since we have
\al{
\label{eq:normalization}
	|r_i| =
	1 - \sum_{l\in \t{supp}(A_i)} 
	\left|\sqrt{\delta_{il} - A_{il}}\right|^2
	=
	A_{ii} - \sum_{l\neq i} \big| A_{il} \big| 
   \geq 0
}
where we have used $A_{ii} \leq 1$ and the diagonal dominance of $A$. We then define the following state-preparation unitaries:
\al{
\label{eq:U_L}
	U_L : \ket{0^a} \ket{i} & 
	\; \mapsto \; 
	|0^b\rangle \ket{i} \ket{\psi_i^*}  \\
	U_R : \ket{0^a} \ket{j} & 
	\; \mapsto \; 
	|0^b\rangle \ket{\phi_j} \ket{j}, 
}
for certain numbers $a$ and $b$ of ancilla qubits initialised in $\ket{0}$, and where $\ket{\psi_i^*}$ is the complex conjugate of the state $\ket{\psi_i}$ w.r.t.\ the computational basis. Then, $U_L^\dag U_R$ is a normalised encoding of the matrix $B = I - A$, as one can verify using $A_{ji}^* = A_{ij}$:
\al{
	B_{ij} = \braket{0^b,i,\psi_i^*}{0^b,\phi_j,j} 
	= 
	\underbrace
	{ \sqrt{\vphantom{A_{il}} \smash[b]{\delta_{ji}-A_{ji}^*}} }_
	{\braket{i}{\phi_j}}
	\underbrace
	{ \sqrt{\delta_{ij}-A_{ij}} }_
	{\braket{\psi_i^*}{j}}
	= 
	\delta_{ij}-A_{ij} \;.
}

The quantum circuit $U_L$ can be implemented efficiently (and similarly for $U_R$), as we now show. We use $d$ calls to $\P_A^{pos}$ to recover the values $ j_\nu \equiv j(i,\nu)$ for $\nu \in \{1, \ldots, d\}$, i.e., the positions of all the (potentially) non-zero entries of $A_i$. This corresponds to implementing the following isometry (i.e., a unitary circuit plus the possibility of adding ancillas):
\al{
\label{eq:comp1}
	\ket{i}
	\overset
	{d \times \P_{\!A}^{pos}}	
	{\mapsto}
	\ket{i} \ket{j_1, \ldots, j_d} .
}
Next, we use $d$ calls to $\P_A^{val}$ to recover all the values $A_{i,j}$, i.e.:
\al{
\label{eq:comp2}
	\eqref{eq:comp1}
	\overset
	{d \times \P_{\!A}^{val}}	
	{\mapsto}
	\ket{i}\ket{j_1, \ldots, j_d}\ket{A_{ij_1}, \ldots, A_{ij_d}} .
}
We then use reversible (classical) computation to calculate the numerical values of all the amplitudes $\boldsymbol{\psi}^{(i)} := (\sqrt{-A_{ij_1}}^{\,*}, \ldots, \sqrt{1-A_{ii}}^{\,*}, \ldots, \sqrt{-A_{ij_d}}^{\,*}, \sqrt{r_i}^{\,*})^T$:
\al{
\label{eq:comp3}
	\eqref{eq:comp2}
	\overset
	{\t{compute}}	
	{\mapsto}
	\ket{i}\ket{j_1, \ldots, j_d}\ket{A_{ij_1}, \ldots, A_{ij_d}} 
	| \boldsymbol{\psi}^{(i)} \rangle \,.
}
Then, we use a general state preparation quantum circuit which, given a classical description of the amplitudes of a $(d+1)$-dimensional quantum state, effectively prepares the corresponding state:
\al{
\label{eq:comp4}
	\eqref{eq:comp3}
	\overset
	{\t{prepare}}
	{\mapsto}
	\ket{i}\ket{j_1, \ldots, j_d}\ket{A_{ij_1}, \ldots, A_{ij_d}} 
	| \boldsymbol{\psi}^{(i)} \rangle 
	\sum_{\nu=1}^{d+1} 
	\psi_\nu^{(i)} \ket{\nu}	.
}
Next, we use a single call to $\P_A^{pos}$ in quantum superposition to map $\ket{i,\nu} \mapsto \ket{i,j(i,\nu)} = \ket{i,j_\nu}$ and thus we obtain
\al{
\label{eq:comp5}
	\eqref{eq:comp4}
	\overset
	{\P_{\!A}^{pos}}
	{\mapsto}
	\ket{i}\ket{j_1, \ldots, j_d}\ket{A_{ij_1}, \ldots, A_{ij_d}} 
	| \boldsymbol{\psi}^{(i)} \rangle 
	\sum_{\nu=1}^{d+1} 
	\psi_\nu^{(i)} \ket{j_\nu}
}
and now note that, adopting the definition $j(i,d+1) := N+1$, on the right-hand side we have obtained 
$
	\sum_{\nu=1}^{d+1} 
	\psi_\nu^{(i)} \ket{j_\nu}
	=
	\ket{\psi_i^*}
$,
the complex conjugate (w.r.t.\ the computational basis) of the state defined in Eq.~\eqref{eq:psi_state}. Finally, we ``uncompute'' the intermediate registers $\ket{j_1, \ldots, j_d}$, $\ket{A_{ij_1}, \ldots, A_{ij_d}}$ and $|\boldsymbol{\psi}^{(i)}\rangle$, mapping them to $\ket{0^b}$ (where $b$ is the number of left-over ancillas), performing the steps \eqref{eq:comp1} to \eqref{eq:comp4} in reverse. With a final swapping of $\ket{i}$ and $\ket{0^b}$, we have implemented the circuit $U_L$ given in Eq.~\eqref{eq:U_L}.

We can now estimate the query and gate cost of implementing $U_L$ (and the cost of implementing $U_R$ is the same). Going through the derivation, we see that $4d+1$ oracle calls to $\P_A = (\P_A^{pos},\P_A^{val})$ are required, that is the query complexity is $Q[\P_A] \in \O(d)$. Regarding gate complexity, step \eqref{eq:comp3} requires $\O\big(d \,\t{poly}(p)\big)$ Toffoli gates (which are universal for classical reversible computation) to perform the computation up to $p$ digits of precision; moreover, step \eqref{eq:comp4} can be performed using $\O(d)$ control-nots and single-qubit rotations employing the methods of Ref.~\cite{Shende06}. In conclusion, treating the number of digits of precision as a constant and the single-qubit rotations as exact, both the query and the gate complexities are in $\O(d)$ and we thus obtain the following result.

\begin{proposition} [Normalised matrix-block-encoding, diagonally-dominant case]
\label{prop:implement_B1}
Suppose that we have access to a $d$-sparse diagonally-dominant PD matrix $A \in \mathds{C}^{N \times N}$ via $\P_A$ as in \defin{sparse_access}. Then it is possible to implement a normalised block-encoding of $B = \Id - A$ with $\O(d)$ query and gate complexity, assuming exact single-qubit rotations and that all arithmetic operations are performed with a constant number of digits of precision. 
\end{proposition}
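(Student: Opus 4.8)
The plan is to reduce the problem to the Gram-matrix construction of \lem{Gram}: it suffices to exhibit two efficiently-implementable ``state-preparation'' unitaries $U_L$ and $U_R$ whose output states have pairwise inner products equal to the entries of $B = \Id - A$, since then $U_L^\dag U_R$ is automatically a $(1,a,0)$-matrix-block-encoding of $B$, i.e.\ a normalised one. First I would attach one extra basis state $\ket{N+1}$ and define, for each column index $i$, the state $\ket{\psi_i}$ with amplitude $\sqrt{\delta_{il}-A_{il}}$ on each position $l\in\t{supp}(A_i)$ and a ``slack'' amplitude $\sqrt{r_i}$ on $\ket{N+1}$, and symmetrically $\ket{\phi_j}$ with amplitudes $\sqrt{\delta_{jk}-A_{jk}^*}$. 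The point where the hypothesis enters is normalisation: a one-line computation gives $r_i = A_{ii} - \sum_{l\neq i}|A_{il}| \geq 0$, which is non-negative \emph{exactly because} $A$ is diagonally dominant (and $A_{ii} = |A_{ii}| \leq \norm{A} \leq 1$ since $A$ is PD), so the $\ket{\psi_i},\ket{\phi_j}$ are genuine normalised states. I would then set $U_L:\ket{0^a}\ket{i}\mapsto\ket{0^b}\ket{i}\ket{\psi_i^*}$ and $U_R:\ket{0^a}\ket{j}\mapsto\ket{0^b}\ket{\phi_j}\ket{j}$ and check that the inner product factors as $\braket{0^b,i,\psi_i^*}{0^b,\phi_j,j} = \braket{i}{\phi_j}\braket{\psi_i^*}{j} = \sqrt{\delta_{ji}-A_{ji}^*}\,\sqrt{\delta_{ij}-A_{ij}} = \delta_{ij}-A_{ij} = B_{ij}$, using Hermiticity $A_{ji}^*=A_{ij}$; \lem{Gram} then finishes this part.

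The second half is the cost analysis of implementing $U_L$ (that of $U_R$ being symmetric). I would build $U_L$ as a sequence of reversible steps: $d$ queries to $\P_A^{pos}$ to load the column-positions $j_1,\dots,j_d$; $d$ queries to $\P_A^{val}$ to load the values $A_{ij_\nu}$; classical reversible arithmetic to compute the amplitude vector $\boldsymbol{\psi}^{(i)}$ (including $\sqrt{r_i}$) to $p$ digits; a generic $(d+1)$-dimensional state-preparation circuit on a fresh register controlled by $\boldsymbol{\psi}^{(i)}$, yielding $\sum_{\nu=1}^{d+1}\psi^{(i)}_\nu\ket{\nu}$; one more superposed query to $\P_A^{pos}$ relabelling $\ket{\nu}\mapsto\ket{j_\nu}$ with the convention $j(i,d+1):=N+1$, which turns that register into $\ket{\psi_i^*}$; and finally uncomputation of all scratch registers by reversing the earlier steps, plus a swap moving $\ket{i}$ into an ancilla slot. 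Counting gives $4d+1$ oracle calls, so $Q[\P_A]\in\O(d)$; the arithmetic contributes $\O(d\,\t{poly}(p))$ Toffoli gates and the state preparation $\O(d)$ controlled-NOTs and single-qubit rotations (e.g.\ via Ref.~\cite{Shende06}), so treating $p$ as constant and rotations as exact the gate count is also $\O(d)$, which gives the claim.

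I do not anticipate a real obstacle here, since \lem{Gram} does the structural work and the identity $B_{ij}=\braket{\psi_i^*}{\phi_j}$ is immediate. The one place that needs genuine care is keeping the complex conjugations and the deliberate left/right asymmetry in the definitions of $U_L$, $U_R$ consistent with the Hermiticity of $A$, so that the factored inner product reproduces $B_{ij}$ rather than $\overline{B_{ij}}$ or $B_{ji}$; a careless sign or conjugate there would break the argument. A minor secondary point is the ancilla bookkeeping: after uncomputation one must verify that the state is exactly $\ket{0^b}\ket{i}\ket{\psi_i^*}$ (up to the final swap), so that \lem{Gram} applies with the correct ancilla count, but this is routine.
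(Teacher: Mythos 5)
Your proposal is correct and follows essentially the same route as the paper's own proof: the same Gram-matrix construction via \lem{Gram}, the same states $\ket{\psi_i},\ket{\phi_j}$ with slack amplitude $\sqrt{r_i}$ on $\ket{N+1}$, the same verification of $B_{ij}$ via Hermiticity, and the same $4d+1$ oracle-call accounting with the $j(i,d+1):=N+1$ convention. Nothing to add.
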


We finally remark that, in some cases, it might be possible to implement $U_L$ and $U_R$ with a query and gate complexity in $\O(\sqrt{d})$ instead of linearly in $d$. First, we may assume that we have directly access to an oracle $\P_\psi$ that directly returns the amplitudes $\psi_\nu^{(i)}$ (including the value $\sqrt{r_i}$), instead of needing to compute these vales from the non-zero entries of $A_i$. Second, one can use an algorithm that generalises Grover search to prepare the state $\ket{\psi_i^*}$ using $\O(\sqrt{d})$ accesses to $\P_\psi$~\cite{Grover00}, and an even more efficient implementation can be realised using the method of Ref.~\cite{Sanders19}, which avoids synthesising arithmetical operations and brings about an improvement of two orders of magnitude over prior works for realistic levels of precision.

\subsubsection{Sum of positive semi-definite Hamiltonians}

We now consider the case where $A\in \mathds{C}^{N\times N}$ is given by the sum of positive semi-definite Hamiltonian terms; i.e., we consider the case 
\al{
\label{eq:A_here}
	A =
	\sum_{j=1}^J
	H_{(j)} 
	\qquad 
	\forall j \ H_{(j)} \ \textup{is positive semi-definite,} 
}
which is similar to the case presented in \sec{hamiltonians}, but here we allow the Hamiltonian terms to have eigenvalue zero. We assume that the number $J$ of Hamiltonian terms scales polynomially in $n = \lceil \log_2 N \rceil$ and that each Hamiltonian term $H_{(j)}$ is local, i.e., that it acts upon a small number of qubits; specifically, we require that the set $\S_j \subseteq \{1,\ldots, n\}$ of qubits upon which $H_j$ acts non-trivially satisfies $|\S_j| \leq s \in \O(\log n)$ for all $j$. Each $H_{(j)}$ can thus be expressed as\footnote{
The correct expression for an operator $H_{(j)}$ acting on a set $\S_j \subseteq \{1,\ldots, n\}$ of $s$ qubits is
\al{
\label{eq:footnote}
	H_{(j)} 
	= 
	\sum_{\substack{a_1 \cdots a_s \in \{0,1\}^s \\ 
		            b_1 \cdots b_s \in \{0,1\}^s}}
	h^{(j)}_{a_1 \cdots a_s,b_1 \cdots b_s}
	\bigotimes_{r=1}^n \;
	\mathfrak{O}_r
	\qquad
	\t{with} ~ \mathfrak{O}_r =
	\begin{cases}
	I = \proj{0} \!+\! \proj{1} & \t{if} ~ r \notin \S_j \\
	\ket{a_r} \! \bra{b_r}      & \t{if} ~ r \in    \S_j
	\end{cases} 
	\;.
} 
}
\al{
\label{eq:H_j_here}
	H_{(j)} = 
	h_{(j)} \otimes I_{\neg \S_j}
}
where $h_{(j)}$ is a positive-definite matrix of dimension $2^{s} \times 2^{s}$, which can be fully specified with $2^{\O(\log n)} = \O( \t{poly} \, n)$ parameters. We also impose $\norm{h_{(j)}} \leq 2$ for all $j$.

Now we define $w_{(j)} := \Id - h_{(j)}$ and note that it is a small matrix, of at most $\O(\t{poly}\, n)$ size, with $\norm{w_{(j)}} \leq 1$. Then, we can rapidly compute, with classical algorithms, a unitary extension $u_{(j)}$ for each $w_{(j)}$, each requiring only one ancilla qubit. Specifically, we define 
\al{
\label{eq:u_j}
	u_{(j)} := 
	\left(
	\begin{array}{cc}
	w_{(j)} & - \sqrt{\Id - w_{(j)}^2}\\
	\sqrt{\Id - w_{(j)}^2}  & w_{(j)}
	\end{array}
	\right)
}
and then we (implicitly) define $W_{(j)}\in\mathds{C}^{N\times N}$ and the unitary $U_{(j)} \in \mathds{C}^{2N \times 2N}$
\al{
\label{eq:W_j}	
	W_{(j)} & = 
	w_{(j)} \otimes I_{\neg \S_j}\\
\label{eq:U_j}
	U_{(j)} & = 
	u_{(j)} \otimes I_{\neg \S_j}
}
where the interpretations are the same as in Eq.~\eqref{eq:H_j_here}. Note that each $U_{(j)}$ can be efficiently compiled as a quantum circuit: it is sufficient to determine the gate decomposition of the $(s+1)$-qubit matrix $u_{(j)}$ and then embed the circuit in a $(n+1)$-qubit quantum register. We assume that the ancilla qubit used for the extension given in Eq.~\eqref{eq:u_j} is one and shared across all circuits $U_{(j)}$.

We then employ the LCU lemma~\cite{Childs12} as follows. Given access to the controlled version of each circuit $U_{(j)}$, it is possible to efficiently implement the multi-controlled unitary
\al{
	U_\textsc{Select} = \sum_{j=1}^J \proj{j}_c \otimes U_{(j)} 
}
where the subscript $c$ denotes the control register. Then, defining a unitary $\t{Had}$ that acts as $\t{Had}\ket{0}_c = \sum_{j=1}^J \frac{1}{\sqrt{J}} \ket{j}_c$, we obtain:
\al{
	\big(\bra{0}_c \t{Had}^\dag \otimes \Id\big) \,
	U_\textsc{Select}\,
	\big(\t{Had}\ket{0}_c \otimes \Id \big) 
	& =
	\sum_{j=1}^J
	\frac{1}{J}
	U_{(j)}
}
and further post-selecting to the ``top-left'' corner of each $U_{(j)}$, according to Eq.~\eqref{eq:u_j} we obtain:
\al{
	\big(\bra{0} \otimes \Id\big)\,
	\sum_{j=1}^J
	\frac{1}{J}
	U_{(j)}\,
	\big(\ket{0} \otimes \Id\big)\,
	& =
	\sum_{j=1}^J
	\frac{1}{J}
	W_{(j)}
	= 
	\sum_{j=1}^J
	\frac{1}{J}
	\left(\Id - H_{(j)}\right) 
	=
	\Id - \frac{1}{J} A\,.
}

In conclusion, 
$	\U_B :=
	(\t{Had}^\dag \otimes \Id ) 
	\, U_\textsc{Select}\,
	(\t{Had} \otimes \Id )
$ 
is a normalised matrix-block-encoding of $B = \Id - \eta A$, where the factor $\eta =  1/J$ scales, by assumption, polynomially in $n$. The gate complexity of $\U_B$ can be estimated as follows. Each $u_{(j)}$ requires $\O(2^{2s})$ elementary gates to be implemented~\cite{Shende06} and thus the gate complexity of $U_{(j)}$ is also in $\O(2^{2s})$, assuming that two-qubit gates can be applied among arbitrary pairs of qubits. Then, $U_\textsc{Select}$ has gate complexity scaling as the sum of the complexities of the individual $U_{(j)}$~\cite{Berry14} and is thus in $\O(J2^{2s})$. The complexity of $\t{Had}$ is $\O(\log J)$, a sub-leading additive term that can be neglected in the asymptotic gate complexity of $\U_B$. Hence we have the following result.

\begin{proposition} [Normalised matrix-block-encoding, Sum-of-Hamiltonians case]
\label{prop:implement_B2}
Suppose that we have an explicit classical description of positive semi-definite matrices $h_{(j)} \in \mathds{C}^{2^s \times 2^s}$ with $j \in \{1, \ldots, J\}$ and consider the positive semi-definite Hamiltonian terms $H_{(j)}$ each obtained by applying $h_{(j)}$ to a subset of qubits $\S_j \subseteq \{1,\ldots,n\}$, as given in Eq.~\eqref{eq:H_j_here}. Consider then a coefficient matrix $A \in \mathds{C}^{2^n \times 2^n} $ as given in Eq.~\eqref{eq:A_here}. Then it is possible to implement a normalised block-encoding of $B = \Id - \frac{1}{J} A$ with a gate complexity in $\O(J2^{s})$, assuming exact single-qubit rotations.
\end{proposition}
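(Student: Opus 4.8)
The plan is to build $B = \Id - \frac{1}{J}A$ directly as a \emph{normalised} linear combination of unitaries, exploiting the trivial identity $\Id = \sum_{j=1}^{J}\frac{1}{J}\Id$ together with the observation that each local operator $\Id - H_{(j)}$ already sits in the top-left corner of a small unitary. First I would pass from $h_{(j)}$ to the $2^{s}\times 2^{s}$ matrix $w_{(j)} := \Id - h_{(j)}$: since $h_{(j)} \succeq 0$ and $\norm{h_{(j)}} \le 2$ by hypothesis, the eigenvalues of $h_{(j)}$ lie in $[0,2]$ and hence those of $w_{(j)}$ lie in $[-1,1]$, giving $\norm{w_{(j)}} \le 1$ and $\Id - w_{(j)}^{2} \succeq 0$. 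This is exactly what makes the dilation $u_{(j)}$ of Eq.~\eqref{eq:u_j} a well-defined unitary; it can be computed classically (only a $2^{s}\times 2^{s}$ matrix square root is involved) and compiled into an $(s+1)$-qubit circuit. Tensoring with the identity on the remaining qubits, as in Eqs.~\eqref{eq:W_j}--\eqref{eq:U_j}, yields unitaries $U_{(j)}$ on $n+1$ qubits sharing a single dilation ancilla, each a $(1,1,0)$-block-encoding of $W_{(j)} = w_{(j)}\otimes I_{\neg\S_j} = \Id - H_{(j)}$ in the sense of \defin{U_A}.

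Next I would apply the LCU construction. Using an index register of $\lceil\log_2 J\rceil$ qubits, assemble $U_\textsc{Select} = \sum_{j=1}^{J}\proj{j}_{c}\otimes U_{(j)}$ from the controlled-$U_{(j)}$ gates, and let $\t{Had}$ be any circuit with $\t{Had}\ket{0}_{c} = \frac{1}{\sqrt{J}}\sum_{j=1}^{J}\ket{j}_{c}$. Then $\U_B := (\t{Had}^{\dagger}\otimes\Id)\,U_\textsc{Select}\,(\t{Had}\otimes\Id)$ obeys
\al{
(\bra{0}_{c}\otimes\Id)\,\U_B\,(\ket{0}_{c}\otimes\Id) = \frac{1}{J}\sum_{j=1}^{J}U_{(j)},
}
and sandwiching the shared dilation ancilla between $\bra{0}$ and $\ket{0}$ then extracts
\al{
\frac{1}{J}\sum_{j=1}^{J}W_{(j)} = \frac{1}{J}\sum_{j=1}^{J}\bigl(\Id - H_{(j)}\bigr) = \Id - \frac{1}{J}A = B .
}
Hence $\U_B$, with the index register and the dilation qubit as the post-selected ancillas, is an $(\alpha, a, 0)$-matrix-block-encoding of $B$ with normalisation factor $\alpha = 1$; one also notes $\norm{B}\le 1$, since $B$ is a convex combination of the contractions $W_{(j)}$, so the encoding is consistent.

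Finally I would bound the cost. The classical preprocessing needed to compute each $u_{(j)}$ is $\t{poly}(2^{s})$ work and does not enter the quantum gate count. On the quantum side, a general $(s+1)$-qubit unitary can be synthesised with $\O(2^{2s})$ elementary gates by the decomposition of Ref.~\cite{Shende06}, so each $U_{(j)}$ and its controlled version cost $\O(2^{2s})$ gates; assembling $U_\textsc{Select}$ costs the sum of the individual costs, i.e.\ $\O(J\,2^{2s})$, while the two copies of $\t{Had}$ add only $\O(\log J)$ gates, a sub-leading additive term. Treating single-qubit rotations as exact, this yields the stated gate complexity (and, since $s\in\O(\log n)$ and $J\in\O(\t{poly}\,n)$, a circuit of size $\t{poly}(n)$). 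The argument is essentially a bookkeeping exercise in the LCU formalism; the only step requiring genuine care is tracking the normalisation through the two projections --- checking that nonnegative weights summing to exactly $1$, combined with each $U_{(j)}$ putting $\Id - H_{(j)}$ exactly in the corner of a unitary, produce a normalisation factor of precisely $1$ rather than a sub-normalised encoding (which is what distinguishes this construction from the generic sparse-access and amplification methods discussed above) --- together with verifying at the outset that the hypotheses $h_{(j)}\succeq 0$ and $\norm{h_{(j)}}\le 2$ are precisely what make $u_{(j)}$ unitary. Everything else is standard.
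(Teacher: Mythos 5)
Your proof is correct and follows essentially the same route as the paper: pass from $h_{(j)}$ to $w_{(j)}=\Id-h_{(j)}$, note $\norm{w_{(j)}}\le 1$, build the unitary dilation $u_{(j)}$ of Eq.~\eqref{eq:u_j}, tensor with identity to get $U_{(j)}$ sharing one dilation ancilla, and apply LCU with uniform weights $1/J$ so that post-selecting both the index and dilation registers yields $\frac{1}{J}\sum_j W_{(j)}=\Id-\frac{1}{J}A$ exactly, i.e.\ a $(1,a,0)$-block-encoding. One remark: your gate-count derivation gives $\O(J2^{2s})$, which is in fact what the paper's own proof derives (and what is used downstream in \prop{Algorithm2}); the bound $\O(J2^{s})$ printed in the proposition statement appears to be a typo, so your conclusion is consistent with the paper's argument even though it does not literally match the stated exponent.
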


\subsection{From matrix inversion to solving the quantum linear system problem}
\label{sec:from-to}

Suppose now that we have a matrix-block-encoding of 
\al{
	\hat{P}_{2\ell-1,\kappa}(B) 
	\approx 
	\frac{1}{K} \frac{1}{\Id-B}
	=
	\frac{A^{-1}}{\eta\,K}	
}
where $K \in \Theta(\kappa/\eta)$ is the normalization factor of the matrix-block-encoding (recall the rescaling of $\kappa$ to $\kappa/\eta$), upper bounded by $\O(\kappa/\eta)$ as we show in \app{polynomial}. This is equivalent to say that we have implemented the unitary
\al{
	\U_{A^{-1}} 
	\approx 
	\left(
	\begin{array}{cc}
	A^{-1} / (\eta\,K) & * \\
	*		  & *
	\end{array}
	\right)
}
where the left-upper block corresponds to having $a$ ancilla qubits in $\ket{0^a}$. Then, one can directly solve a QLS by applying the unitary quantum circuit $\U_{A^{-1}}$ to the vector $\ket{0^a}\ket{\b}$ and then post-select the outcome $\ket{0^a}$ on the ancilla system; however, post-selection might introduce large overheads, since we have
\al{
	\U_{A^{-1}}: \ket{0^a}\ket{\b}
	\quad \mapsto \hspace{8mm}
	\frac{1}{\eta\, K} \hspace{7mm}
	& \ket{0^a} A^{-1}\ket{\b} + \sqrt{1-\frac{1}{\eta^2K^2}}
	\, \big|\Psi^\perp\big\rangle \\
	= \, 
	\frac{\norm{A^{-1}\ket{\b}}}{\eta\, K}
	& \ket{0^a} \ket{A^{-1}\b} + \sqrt{1-\frac{1}{\eta^2K^2}}
	\, \big|\Psi^\perp\big\rangle 	
}
where $\big|\Psi^\perp\big\rangle$ is a state perpendicular to all states of the form $\ket{0^a,\psi}$. Therefore, the probability of successfully obtaining the state $\ket{A^{-1}\b}$ when post-selecting on the ancilla measurement is
\al{
	p_\t{succ} = 
	\frac{\norm{A^{-1}\ket{\b}}^2}{\eta^2 K^2} \,.
}

A PD-QLS solver that prepares a matrix-encoding of $A^{-1}$ and obtains $\ket{A^{-1}\b}$ via post-selection requires $\O( 1 / p_\t{succ})$ accesses to $\U_\b$ and $\O( (2\ell-1) / p_\t{succ})$ accesses to $\U_B$. Recall, $2\ell-1$ is the degree of $\hat{P}_{2\ell-1,\kappa}(B)$ and 
$
	\ell \in 
	\Theta \! \left( \sqrt{\frac{\kappa}{\eta}} 
	\log \frac{\kappa}{\eta \varepsilon} \right)
$.
The query complexities can be quadratically improved to $\O( 1 / \sqrt{p_\t{succ}})$ and $\O( (2\ell-1) / \sqrt{p_\t{succ}})$, respectively, using amplitude amplification~\cite{Brassard02}. Having implemented an approximate matrix-encoding of $\widetilde{A}^{-1}$ that is $\varepsilon$-close to $A^{-1}$ (using \defin{U_A}), the output state $\sket{\widetilde{A}^{-1}\b} = \widetilde{A}^{-1}\ket{\b} / \snorm{\widetilde{A}^{-1}\ket{\b}\!}$ satisfies~\cite[Proposition 9]{Childs15} 
\al{
	\norm{\big| \widetilde{A}^{-1}\b\big\rangle - \ket{A^{-1}\b}}
	\leq 
	4 \,\varepsilon \,.
}
The same inequality holds in trace distance because of~\eqref{eq:trace_ineq} and it is true both when using post-selection and when using amplitude amplification. In conclusion, we have the following results.

\begin{proposition} [Complexity of the PD-QLS solver]
\label{prop:Algorithm1}
Suppose that we have access to a $(1,b,0)$-matrix-block-encoding $\U_B$ of $B = \Id - \eta \, A$, where $\eta \in (0,1]$ and $A \in \mathds{C}^{N\times N}$ is a PD matrix with eigenvalues contained in the interval $\D_A = \big[\frac{1}{\kappa}, 2\big]$ for some known value $\kappa>1$; see e.g.\ \prop{implement_B1} and \prop{implement_B2} for explicit constructions.

First, using the QSP method of \thm{Signal} and the polynomial approximation given in Eq.~\eqref{eq:poly_approx}, it is possible to implement a $(K,b+2, \varepsilon)$-matrix-block-encoding of $A^{-1}$, where $K \in \Theta(\kappa/\eta)$, and the method has a query complexity 
\al{
	Q[\U_B] 
	& \in 
	\O\!\left(\sqrt{\frac{\kappa}{\eta}} \, 
	\log \frac{\kappa}{\eta \,\varepsilon} \right) ,
}
where $Q[\U_B]$ denotes the number of accesses to $\U_B$. Moreover, the algorithm is gate-efficient, i.e.\ it requires $\O\big(\t{poly}(n) \, Q[\U_B]\big)$ extra elementary quantum gates.

Second, suppose that we want to solve a PD-QLS as in \defin{PD-QLS}, where access to $A$ is given (indirectly) by $\U_B$ and access to $\b$ via a state preparation oracle $\U_\b$ as in \defin{state_prep}. Then, the QLS can be solved up to precision $\O(\varepsilon)$ using the $(K,b+2,\varepsilon)$-matrix-block-encoding of $A^{-1}$ and employing amplitude amplification to perform matrix-vector multiplication with constant success probability. The total query complexities, in terms of accesses to $\U_B$ and $\U_\b$, are
\al{
	Q[\U_\b] 
	& \in
	\O\!\left(
	\frac{\kappa}{\norm{A^{-1} \ket{\b}}}
	\right) \\
	Q[\U_B] 
	& \in
	\O\!\left(
	\sqrt{\frac{\kappa}{\eta}}
	\frac{\kappa}{\norm{A^{-1} \ket{\b}}}
	\log \frac{\kappa}{\eta\,\varepsilon} \right)
}
and the algorithm is gate efficient, that is, the gate complexity is in $\O\big(Q \,\t{poly}(\log Q,\log N)\big)$. A quadratic speed-up in $\kappa$ (up to polylogarithmic factors) is achieved over general QLS solvers when $\norm{A^{-1} \ket{\b}} \in \O(\kappa)$.
\end{proposition}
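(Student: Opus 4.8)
The plan is to chain together the tools built in \sec{B}, \sec{C} and \sec{from-to}, the real work being a careful bookkeeping of normalisation factors. \textbf{Stage 1: a block-encoding of $A^{-1}$.} First I would use the rescaling convention to reduce to $\eta=1$ at the price of replacing $\kappa$ by $\kappa':=\kappa/\eta$, so that the spectrum of $B=\Id-\eta A$ lies in $\D_B=[-1,\,1-1/\kappa']$. Choosing the degree parameter $\ell\in\Theta\!\big(\sqrt{\kappa'}\log(\kappa'/\varepsilon)\big)$ as in \sec{C} makes the scalar estimate Eq.~\eqref{eq:P_bound} hold on $\D_B$, and since the spectrum of $B$ is contained in $\D_B$ the spectral theorem upgrades this to $\norm{P_{2\ell-1,\kappa'}(B)-A^{-1}}\le\varepsilon$ in operator norm. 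The normalised polynomial $\hat{P}_{2\ell-1,\kappa'}=P_{2\ell-1,\kappa'}/K$ from Eq.~\eqref{eq:polynomial} obeys $|\hat{P}_{2\ell-1,\kappa'}(x)|\le 1/2$ on $[-1,1]$ — the factor $2$ in the definition of $K$ is precisely what allows \thm{Signal} to be invoked without a parity hypothesis — and $K\in\Theta(\kappa')=\Theta(\kappa/\eta)$ by \app{polynomial}. Feeding the given $(1,b,0)$-block-encoding $\U_B$ (so $\beta=1$, $\epsilon=0$, hence the QSP error $4\ell\sqrt{\epsilon/\beta}$ vanishes) together with $\hat{P}_{2\ell-1,\kappa'}$ into \thm{Signal} yields a $(1,b+2,0)$-block-encoding of $\hat{P}_{2\ell-1,\kappa'}(B)\approx A^{-1}/(\eta K)$ (Eq.~\eqref{eq:P_bound}), i.e.\ the asserted $(K,b+2,\varepsilon)$-block-encoding of $A^{-1}$, using $\O(\ell)$ applications of $\U_B$ and $\U_B^\dag$, one controlled-$\U_B$ and $\O((n+b)\ell)$ further gates. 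Thus $Q[\U_B]\in\O(\ell)=\O\!\big(\sqrt{\kappa/\eta}\,\log(\kappa/(\eta\varepsilon))\big)$, and gate-efficiency follows since the QSP phase angles are classically computable in polynomial time (\cite{Chao20, Dong20}) and the per-query overhead is $\t{poly}(n)$.

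\textbf{Stage 2: from matrix inversion to the QLS solution.} Given the $(K,b+2,\varepsilon)$-block-encoding $\U_{A^{-1}}$ and the state-preparation oracle $\U_\b$, apply $\U_{A^{-1}}$ to $\ket{0^{b+2}}\ket{\b}$ (using one call to $\U_\b$). By \defin{U_A} the component whose flag register reads $\ket{0^{b+2}}$ equals $\tfrac{1}{\eta K}\widetilde{A}^{-1}\ket{\b}$ for some $\widetilde{A}^{-1}$ with $\snorm{\widetilde{A}^{-1}-A^{-1}}\le\varepsilon$, so this ``good'' component has squared amplitude $p_{\t{succ}}=\snorm{\widetilde{A}^{-1}\ket{\b}}^2/(\eta K)^2$, as computed in \sec{from-to}. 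Since $\norm{A^{-1}\ket{\b}}\ge 1/\norm{A}\ge 1/2$ and $\big|\,\snorm{\widetilde{A}^{-1}\ket{\b}}-\norm{A^{-1}\ket{\b}}\,\big|\le\varepsilon$, for $\varepsilon$ below a fixed constant we get $p_{\t{succ}}=\Theta\!\big(\norm{A^{-1}\ket{\b}}^2/\kappa^2\big)$. Amplitude amplification (\cite{Brassard02}) on this flag then isolates the good component in $\Theta(1/\sqrt{p_{\t{succ}}})=\Theta\!\big(\kappa/\norm{A^{-1}\ket{\b}}\big)$ rounds, each using one $\U_\b$ and one $\U_{A^{-1}}$ (hence $\O(\ell)$ calls to $\U_B$), which produces the stated $Q[\U_\b]$ and $Q[\U_B]$. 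Tracing out the flag leaves $\sket{\widetilde{A}^{-1}\b}$, and $\snorm{\sket{\widetilde{A}^{-1}\b}-\ket{A^{-1}\b}}\le 4\varepsilon$ by \cite[Proposition~9]{Childs15}, which Eq.~\eqref{eq:trace_ineq} carries to trace distance; absorbing the constant yields precision $\O(\varepsilon)$, and gate-efficiency is inherited from Stage 1 plus the $\t{poly}(n)$ cost of the amplification reflections. Finally, comparing $Q[\U_B]\in\widetilde{\O}\!\big(\sqrt{\kappa/\eta}\cdot\kappa/\norm{A^{-1}\ket{\b}}\big)$ (with $\eta\in\Theta(1)$, e.g.\ the diagonally-dominant case of \prop{implement_B1}) with the $\widetilde{\Omega}(\kappa)$ query cost of general QLS solvers gives the advertised quadratic speed-up precisely when the amplification overhead $\kappa/\norm{A^{-1}\ket{\b}}$ is $\t{polylog}$-bounded, i.e.\ when $\norm{A^{-1}\ket{\b}}$ is of its largest possible order $\Theta(\kappa)$.

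\textbf{Expected main obstacle.} The argument is mostly assembly, so I expect the delicate points to be: (a) legitimising the passage from the scalar bound Eq.~\eqref{eq:P_bound} on $\D_B$ to an operator-norm bound, which hinges on $\U_B$ being a \emph{normalised} ($\beta=1$) encoding so that the QSP output is genuinely $\hat{P}_{2\ell-1,\kappa'}(B)$ and not $\hat{P}$ of a rescaled matrix — the subtlety emphasised in \sec{B}; (b) verifying that replacing $A^{-1}$ by $\widetilde{A}^{-1}$ does not shrink $p_{\t{succ}}$ by more than a constant factor, which works only because $\norm{A^{-1}\ket{\b}}$ is bounded below by a constant; and (c) carrying the three scales $K$, $\eta$, $\kappa$ consistently through the whole pipeline so that the $\kappa/\norm{A^{-1}\ket{\b}}$ amplification overhead and the $\sqrt{\kappa/\eta}$ polynomial degree combine exactly as in the statement. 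Item (c) is where an error would most easily slip in.
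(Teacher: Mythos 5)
Your proposal is correct and follows essentially the same route as the paper: rescale $\kappa\mapsto\kappa/\eta$, invoke the polynomial of Eq.~\eqref{eq:poly_approx} with degree $\ell\in\Theta(\sqrt{\kappa/\eta}\,\log(\kappa/(\eta\varepsilon)))$ and normalisation $K\in\Theta(\kappa/\eta)$ via \thm{Signal}, then run amplitude amplification on the flag register with success amplitude $\norm{A^{-1}\ket{\b}}/(\eta K)$, finishing with the error bound of \cite[Prop.~9]{Childs15} and Eq.~\eqref{eq:trace_ineq}. Your item (b) in the obstacle list — that $\widetilde{A}^{-1}$ vs.~$A^{-1}$ shifts $p_{\mathrm{succ}}$ only by a constant because $\norm{A^{-1}\ket{\b}}\geq 1/\norm{A}\geq 1/2$ — is a point the paper glosses over, so it is a welcome addition rather than a deviation.
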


We now proceed to a worst-case, average case, and best-case scenario analysis of a PD-QLS solver as given in the previous Proposition.

\paragraph*{Worst-case scenario:}
In the worst case we have $\norm{A^{-1}\ket{\b}} \in \O(1)$ and consequently the post-selection success probability is $p_\t{succ} \in \Omega(1/\kappa^2)$. One can in alternative use $\O(\kappa)$ rounds of amplitude amplification to reach a constant success probability. Using amplitude amplification, the overall query complexity is in $\O(\kappa^{3/2})$, which is an improvement compared to the $\O(\kappa^{2})$ runtime achieved by the HHL algorithm, but still falls shorts of the $\widetilde{\O}(\kappa)$ runtime that can be achieved using more advanced methods such as Variable-Time Amplitude Amplification (VTAA)~\cite{Ambainis10} or eigenpath transversal~\cite{Subasi18}.

\paragraph*{Average-case scenario:}
We now look at the distribution of runtimes that arises when using a randomly chosen vector $\ket{\b}$. As observed in the work by Subaşı and Somma~\cite[Section III.B]{Somma20} one could model the eigenvalues of a positive-definite matrix $A$ as uniformly distributed over the interval $[1/\kappa,1]$ while if $\ket{\b}$ is chosen from the outcome of a random quantum circuit its amplitudes are sampled according to a Porter-Thomas distribution; then, $\norm{A^{-1}\ket{\b}} \in \O(\sqrt{\kappa})$ almost surely in the regime $1 \ll \kappa \ll N$. This implies that a randomly sampled PD-QLS problem (according to the  specified distribution) can be solved almost surely with a query complexity in $\O(\kappa)$, employing amplitude amplification in the post-selection step. This method then matches (actually, improves by a polylog($\kappa$) factor) the asymptotic runtime of more sophisticated methods (such as those that employ VTAA or adiabatic evolution) when considering ``typical'' instances of PD-QLS.

\paragraph*{Best-case scenario:}
The largest value that $\norm{A^{-1}\ket{\b}}$ can reach is $\kappa$. In such case the post-selection probability is constant and the overall query complexity is in $\O(\sqrt{\kappa})$, even without employing amplitude-amplification. This is a quadratic improvement for PD-QLS solving over competing methods working for indefinite QLS, since implementing a block-encoding of $A^{-1}$ for indefinite matrices already requires $\O(\kappa)$ oracle calls to $\U_A$~\cite{Childs15}. We note these best-case problems almost never occur under the probabilistic model described before, but real-world problems have intrinsic structure that could make them depart from the Porter-Thomas distribution and thus have $\norm{A^{-1}\ket{\b}} \gg \sqrt{\kappa}$: for instance, this is the case if $\ket{\b}$ has constant overlap with the eigenvector relative to the largest eigenvalue of $A^{-1}$. Note, finally, that it is not required that we know in advance how large the success probability is, since by definition $\ket{0^a}$ heralds the success.

\begin{figure}[t!]
\includegraphics[scale=.52,trim=18mm 15mm 0 0]{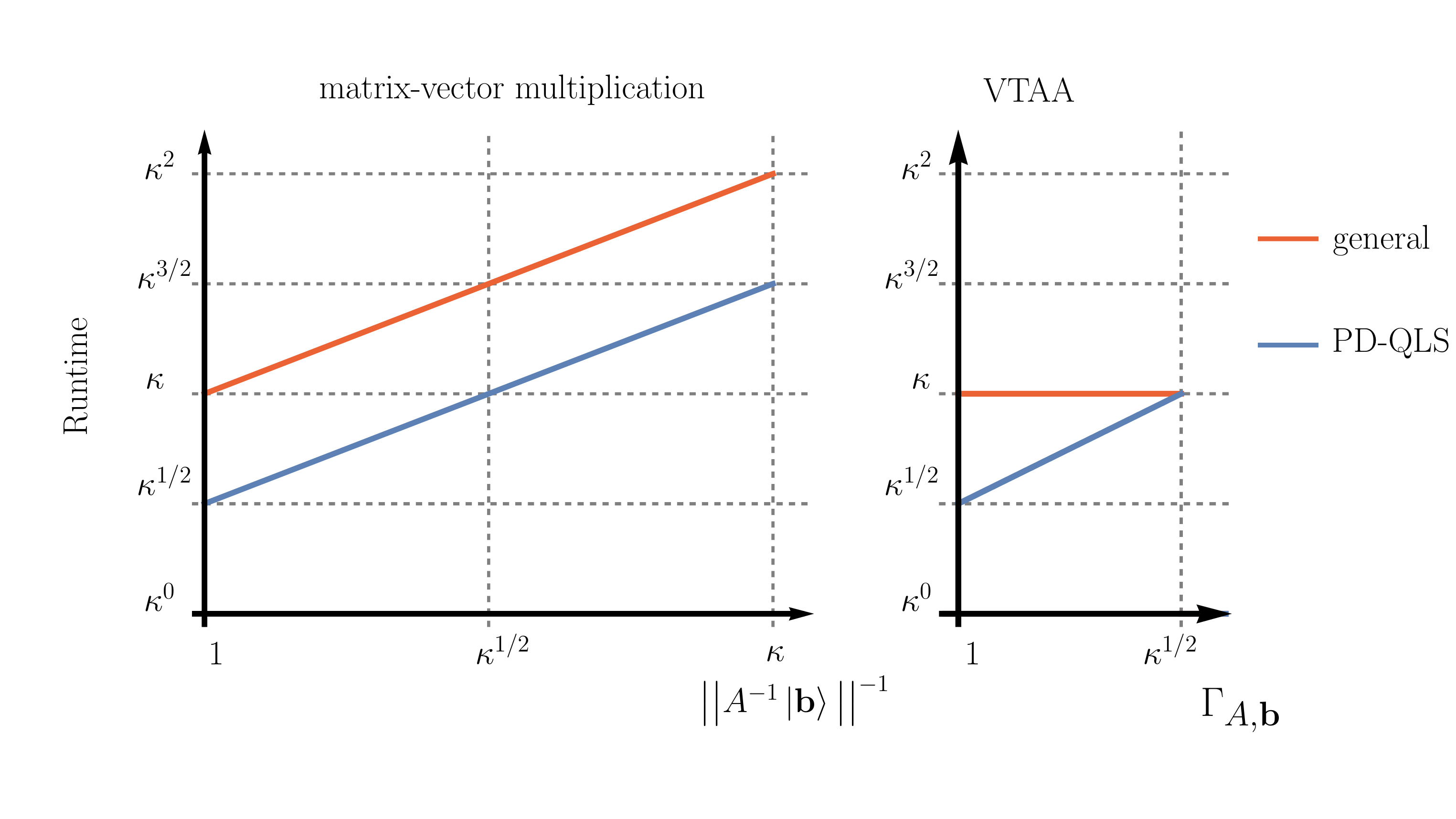}
\caption{Bi-logarithmic plots of the asymptotic runtimes of general QLS solvers and of our PD-QLS solvers. The plot on the left represents the scaling of the query complexity for methods based on direct matrix-vector multiplication (together with amplitude amplification) in terms of the variable $\norm{A^{-1}\ket{\b}}^{-1} \in [\,1,\kappa\,]$. The plot on the right represents the scaling of the query complexity for methods based on VTAA in terms of the variable $\Gamma_{A,\b} := \sqrt{\kappa} \,\frac{\norm{A^{-1/2}\ket{\b}}}{\norm{A^{-1}\ket{\b}}} \in [\,1,\sqrt{\kappa}\,]$. In both plots we ignore poly-logarithmic multiplicative factors and we express all the variables in units of powers of $\kappa$, assuming $\kappa \gg 1$ as constant. See the main text for details.}
\label{fig:Runtimes}
\end{figure}

\subsection{Optimization using Variable-Time Amplitude Amplification} 

In \app{VTAA} we show how to use VTAA to obtain a PD-QLS solver having improved asymptotic query complexities. We proceed as in Ref.~\cite[Section 5]{Childs15} and Ref.~\cite[Section 3]{Chakraborty18}: first we reformulate our algorithm as a variable-stopping-time quantum algorithm and afterwards we apply the VTAA optimisation to improve its runtime. There is, however, a technical hurdle to overcome: all previous VTAA-based QLS solvers use a phase estimation subroutine having a $\O(\kappa)$ runtime; its use would then preclude us from achieving a runtime sub-linear in $\kappa$. The main new idea we introduce is to replace phase estimation with efficient ``windowing functions''  whose implementation via QSP requires only $\widetilde{\O}(\sqrt{\kappa})$ accesses to $\U_B$.

More precisely, in Ref.~\cite{Childs15} the so-called Gapped Phase Estimation (GPE) method is introduced, with the purpose of reliably selecting eigenvalues of $A$ that are larger than some value $\delta$. For these eigenvalues it is possible to implement an approximate inverse at a reduced cost, scaling as $\O(1/\delta)$ instead of $\O(\kappa)$. Then, a sequence of increasingly small values of $\delta$ is considered, until $\delta \leq 1/\kappa$, and VTAA is employed to enhance the success probability. Since GPE has a query complexity in $\Omega(1/\delta)$ and the required precision is $\delta \leq 1/\kappa$, its complexity is in $\Omega(\kappa)$.

Instead, we introduce a ``windowing function'' $W_{\epsilon,\delta}(\lambda)$ to select eigenvalues of $B = \Id -\eta A$ that satisfy $\lambda \in [-1+2\delta, 1-2\delta]$ and to reject eigenvalues $\lambda \in [-1, -1 +\delta] \cup [1-\delta,+1]$, except for a small error probability $\epsilon$. Thus, $W_{\epsilon,\delta}(x)$ is a polynomial $\epsilon$-close to 1 in the center of the interval $[-1,+1]$ and $\epsilon$-close to 0 near the edges of the interval, with a steep fall around the points $\pm (1- 1.5 \delta)$. The intervals where the function derivative is large (of order $1/\delta$) are very close to the extrema of the interval $[-1,+1]$. According to Bernstein's inequality~\cite{Schaeffer41} it is not prohibited that a windowing function could be implemented with a polynomial having a degree $\ell \in \widetilde{O}(1/\sqrt{\delta})$, a quadratically smaller degree compared to case where the large derivative is near the center of the interval. In \app{VTAA} we then show, with an explicit construction, that windowing polynomial with degree $\ell \in \widetilde{O}(1/\sqrt{\delta})$ indeed can be implemented. We defer to the Appendix for further details.

The end result is summarized in the following Proposition.

\begin{proposition}[Complexity of PD-QLS with VTAA]
\label{prop:VTAA}
Consider a PD-QLS where we have access to a normalised matrix-encoding of $B = \Id - \eta\,A$ and to a state preparation unitary for $\b$. Then, there is a VTAA-based solver having target precision $\varepsilon$, constant success probability, and query complexities given by
\al{
	& Q[\U_\b] 
	\in 
	\O\!\left(
	\sqrt{\log(\kappa)} +
	\frac{ \kappa}{\norm{A^{-1}\ket{\b}}}
	\right) \\
	& Q[\U_B] 
	\in 
	\O\!\left(
	\sqrt{\frac{\kappa}{\eta}} \, 
	\Gamma_{A,\b} \,
	\t{polylog}(\kappa, {\tilde \epsilon}^{-1}, \eta^{-1})
	\right) \\
	\t{with} \quad &
	\Gamma_{A,\b} 
	:= 
	\sqrt{\kappa} \,
	\frac{\norm{A^{-1/2}\ket{\b}}}{\norm{A^{-1}\ket{\b}}} ,
	\qquad
	{\tilde \epsilon} 
	\in 
	\O\!\left(
	\frac{\varepsilon}{\kappa \sqrt{\log \kappa}}
	\right)
	\\[1mm]
	\t{and} \quad &
	\t{polylog}(\kappa, {\tilde \epsilon}^{-1}, \eta^{-1})
	=
	\log^{2} (\kappa) 
	\log^{7/4}({\tilde \epsilon}^{-1}) 
	\log^{3/2} (\eta^{-1})
	.
}
Moreover, the algorithm is gate efficient. 
\end{proposition}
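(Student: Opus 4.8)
The plan is to follow the two-step template of Childs--Kothari--Somma~\cite[Section~5]{Childs15} and of Chakraborty et al.~\cite[Section~3]{Chakraborty18}: first recast the block-encoding-based solver of \prop{Algorithm1} as a variable-stopping-time quantum algorithm $\A = \A_m \cdots \A_1$ in the sense of Ambainis~\cite{Ambainis10}, and then invoke the VTAA master theorem to amplify the heralding flag $\ket{0^a}$ to constant success probability. The number of stages is $m \in \O\big(\log(\kappa/\eta)\big)$, with stage $j$ tied to a threshold $\delta_j := 2^{-j}$ decreasing geometrically from $\O(1)$ down to $\delta_m \in \Theta(1/\kappa)$, the spectral gap of $A$. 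Stage $\A_j$ first tests, non-destructively, whether the current eigenvalue $\lambda$ of $A$ satisfies $\lambda \gtrsim \delta_j$; if so it halts the clock and applies a block-encoding of the inverse polynomial $\hat{P}_{2\ell_j-1}(B)$ of \sec{C} tailored to eigenvalues $\lambda \geq \delta_j/2$ (degree $\ell_j \in \widetilde{\O}(1/\sqrt{\eta\delta_j})$), heralding success on an ancilla; otherwise it passes control to $\A_{j+1}$.

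The one genuinely new ingredient is the test itself, for which I would construct a family of real \emph{windowing polynomials} $W_{\epsilon,\delta}$ with $|W_{\epsilon,\delta}(x)| \le 1$ on $[-1,+1]$, with $|W_{\epsilon,\delta}(x) - 1| \le \epsilon$ for $x \in [-1+2\delta,\,1-2\delta]$ and $|W_{\epsilon,\delta}(x)| \le \epsilon$ for $x \in [-1,\,-1+\delta] \cup [1-\delta,\,+1]$, and of degree $\ell_W \in \widetilde{\O}(1/\sqrt{\delta})$; applied to $B = \Id - \eta A$ via \thm{Signal}, this passes exactly the eigenvalues of $A$ in a band around $\delta/\eta$. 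I would build it by composing an odd polynomial approximation of the sign function (equivalently a rectangle-function approximant as in~\cite{Low17,Gilyen18}), placed so that its two steep transitions sit at $\pm(1 - 1.5\delta)$, a distance $\Theta(\delta)$ from the endpoints $\pm 1$. Bernstein's inequality~\cite{Schaeffer41} is then not an obstruction to a quadratic degree reduction: a polynomial whose derivative is of order $1/\delta$ only within $\Theta(\delta)$ of $\pm 1$ can be realised with degree $\widetilde{\O}(1/\sqrt{\delta})$ rather than $\widetilde{\O}(1/\delta)$, exactly the phenomenon already exploited for the inverse polynomial $P_{2\ell-1,\kappa}$. I would present the explicit family in \app{VTAA} and verify the three pass/reject bounds, the uniform bound $|W_{\epsilon,\delta}|\le 1$ and definite parity needed for unit-normalised QSP, and the degree estimate directly.

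With the $\A_j$ fixed, each is gate-efficient and makes $\widetilde{\O}(1/\sqrt{\eta\delta_j})$ calls to $\U_B$ (one call to $\U_\b$ suffices per run), so the stage-$j$ stopping time telescopes (geometric sum) to $t_j \in \widetilde{\O}(1/\sqrt{\eta\delta_j})$, with $t_m \in \widetilde{\O}(\sqrt{\kappa/\eta})$. The halting-time distribution is $p_j = \sum_{i:\,\lambda_i \in (\delta_j, 2\delta_j]} |\beta_i|^2$, the weight of $\ket{\b}$ on eigenvectors of $A$ with eigenvalue near $\delta_j$ (so $\sum_j p_j = 1$ since $\lambda_{\min}(A)\ge 1/\kappa$), and the overall success probability is $p_\t{succ} = \norm{A^{-1}\ket{\b}}^2/(\eta K)^2 = \Theta\big(\norm{A^{-1}\ket{\b}}^2/\kappa^2\big)$, using $\eta K \in \Theta(\kappa)$ from \sec{C}. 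The VTAA master theorem~\cite{Ambainis10} then yields an amplified, constant-success algorithm with $\U_B$-query complexity $\widetilde{\O}\big(t_m + T_{\ell^2}/\sqrt{p_\t{succ}}\big)$, where $T_{\ell^2}^2 := \sum_{j=1}^m p_j\, t_j^2$, and $\U_\b$-query complexity $\O\big(\sqrt{m} + 1/\sqrt{p_\t{succ}}\big)$. The key computation is the telescoping $T_{\ell^2}^2 = \sum_j p_j\,\widetilde{\O}(1/(\eta\delta_j)) = \widetilde{\O}\big(\tfrac{1}{\eta}\sum_i |\beta_i|^2/\lambda_i\big) = \widetilde{\O}\big(\tfrac{1}{\eta}\norm{A^{-1/2}\ket{\b}}^2\big)$; substituting $p_\t{succ}$ and recalling $\Gamma_{A,\b} = \sqrt{\kappa}\,\norm{A^{-1/2}\ket{\b}}/\norm{A^{-1}\ket{\b}}$ gives $T_{\ell^2}/\sqrt{p_\t{succ}} \in \widetilde{\O}\big(\sqrt{\kappa/\eta}\,\Gamma_{A,\b}\big)$ and $1/\sqrt{p_\t{succ}} \in \O\big(\kappa/\norm{A^{-1}\ket{\b}}\big)$, while the $t_m$ and $\sqrt{m}$ terms supply the $\widetilde{\O}(\sqrt{\kappa/\eta})$ and $\O(\sqrt{\log\kappa})$ summands. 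For the error bookkeeping, the per-stage windowing error $\epsilon$, the per-stage polynomial-inversion error, and the QSP error-amplification factor $4\ell\sqrt{\epsilon/\beta}$ of \thm{Signal} all live in the $m$ mutually orthogonal halting branches and hence add in quadrature, contributing total error $\sim\sqrt{m}\,\epsilon$; since the heralded amplitude is only $\Theta(1/\kappa)$, this forces $\epsilon = \tilde\epsilon \in \O\big(\varepsilon/(\kappa\sqrt{m})\big) = \O\big(\varepsilon/(\kappa\sqrt{\log\kappa})\big)$, and propagating $\log(1/\tilde\epsilon)$, $\log(1/\eta)$ and $\log\kappa$ through the degree bounds and the $\O(\log)$-depth VTAA recursion yields the stated $\log^2(\kappa)\log^{7/4}(\tilde\epsilon^{-1})\log^{3/2}(\eta^{-1})$ factor.

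I expect the main obstacle to be twofold. First, the explicit windowing construction: attaining the sharp $\widetilde{\O}(1/\sqrt{\delta})$ degree while simultaneously guaranteeing the two-sided pass/reject behaviour, the uniform bound $|W_{\epsilon,\delta}|\le 1$ on all of $[-1,+1]$, and a definite parity compatible with \thm{Signal} is more delicate than the one-sided Chebyshev shift used for the inverse polynomial. Second, making the variable-time accounting rigorous: one must verify that chaining the $\epsilon$-imperfect tests across the $m$ stages genuinely realises the Ambainis stopping-time structure — in particular that an eigenvalue missed at stage $j$ is caught at a later stage with only $\O(\epsilon)$ leakage into wrong branches — and that the $\ell^2$-averaged stopping time telescopes to $\norm{A^{-1/2}\ket{\b}}$ and not to a larger norm. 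It is precisely this last point that replaces the worst-case $\sqrt{\kappa}$ overhead of \prop{Algorithm1} by the instance-dependent factor $\Gamma_{A,\b} \le \sqrt{\kappa}$.
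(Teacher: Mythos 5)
Your proposal matches the paper's proof in its high-level structure, key technical idea, and final bookkeeping: you correctly identify that the one new ingredient is a windowing polynomial of degree $\widetilde{\O}(1/\sqrt{\delta})$ whose steep transitions sit within $\O(\delta)$ of $\pm 1$ (replacing the $\O(\kappa)$ GPE subroutine of Childs--Kothari--Somma), you get the stopping-time distribution $p_j \propto \sum_{\lambda_i \approx \delta_j}|\beta_i|^2$ and the telescoped $\ell^2$-average $t_\t{avg} \in \widetilde{\O}(\eta^{-1/2}\norm{A^{-1/2}\ket{\b}})$ right, and your error propagation ($m$ branches add in quadrature against a worst-case heralded amplitude $\Theta(1/\kappa)$, forcing $\tilde\epsilon \in \O(\varepsilon/(\kappa\sqrt{\log\kappa}))$) is the same argument the paper makes. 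The paper also spends some space on an explicit un-compute algorithm $\B$ to clean out the clock register, but that is a bookkeeping device rather than an idea you are missing.

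The one substantive weakness is in how you propose to construct $W_{\epsilon,\delta}$. You suggest shifting existing sign/rectangle approximants (Low--Chuang, Gily\'en et al.) so the transitions sit at $\pm(1-1.5\delta)$. Those constructions have degree $\widetilde{\O}(1/\Delta)$ in the transition width $\Delta$, and that degree does not automatically improve when you merely translate the transition toward the endpoints; Bernstein's inequality only tells you that a quadratic reduction is \emph{not forbidden}, not how to achieve it. The paper instead builds the window from scratch: it takes a product of two shifted normal CDFs $\Phi\big((\pm x + 1 - 1.5\delta)/\sigma\big)$ with $\sigma \in \Theta\big(\delta/\sqrt{\log \epsilon^{-1}}\big)$ (even by construction), and then bounds the Chebyshev truncation error by estimating the growth of the analytic continuation of this function on a Bernstein ellipse of parameter $r = \frac{1+\sqrt{\sigma}}{1-\sqrt{\sigma}}$. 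It is this ellipse estimate — showing $M \in \O(\epsilon^{-2})$ on $E_r$ so the series error decays like $r^{-\ell}$ — that delivers $\ell \in \widetilde{\O}(1/\sqrt{\delta})$, and it has no analogue in the sign-function route you cite. You do flag this as the delicate point, which is the correct instinct, but a complete proof would need to abandon the reuse-existing-approximants idea and supply an analyticity argument of this kind.
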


Now we discuss the runtime of this VTAA PD-QLS solver.

\begin{itemize}

\item We always have $Q[\U_B] \geq Q[\U_\b]$, thus the $\U_B$-complexity is the dominant factor.

\item Compared to \prop{Algorithm1}, the $\U_\b$ query complexity increases here only by an additive $\sqrt{\log(\kappa)}$ factor, while the $\U_B$ complexity typically (i.e., for almost all values of $\Gamma_{A,\b}$) has a polynomial improvement. To prove it, note that $\norm{A^{-1/2}\ket{\b}}^2 = \bra{\b}A^{-1} \ket{\b} \leq \kappa$ and thus $\Gamma_{A,\b} = \sqrt{\kappa}\,\frac{\norm{A^{-1/2}\ket{\b}}}{\norm{A^{-1}\ket{\b}}} \leq  \frac{\displaystyle \kappa}{\norm{A^{-1}\ket{\b}}}$. 

\item Compared to the general VTAA-based QLS solver of Ref.~\cite{Childs15}, our PD-QLS solver has a polynomial speed-up for almost all values of $\Gamma_{A,\b}$, see the right plot in \fig{Runtimes}. This is a consequence of \lem{Gamma}, where we prove that $\Gamma_{A,\b} \in [\,1,\sqrt{\kappa}\,]$.

\end{itemize}

\section{Method based on a quadratic reduction of the condition number via matrix decomposition}
\label{sec:hamiltonians}

In this section we start giving some preliminary considerations on the approach that we are going to present (\sec{preliminary}) and then give a formal statement of the Sum-QLS problem that we solve (\sec{prob_def}). Next, we describe a classical pre-processing step that quadratically reduces the condition number (\sec{positive}) and then present the quantum algorithm solving the pseudo-inversion problem that originates from the preconditioning (\sec{pseudo-inverse}). Finally, we estimate the gate complexity of the resulting Sum-QLS solver (\sec{runtime_estimation}).

\subsection{General considerations}
\label{sec:preliminary}

We present now the general features of any algorithm that solves PD-QLS exploiting a decomposition of the form $A = LL^\dag$, as already summarised in \sec{overview2}. Note that such $L$ exists for any PD matrix $A$ and that it may not be unique, especially if we allow $L$ to be non-square. The key property is that any $L$ such that $A = LL^\dag$ satisfies $\kappa_\t{eff}(L) = \sqrt{\kappa(A)}$, hence a system of the form $L^\dag\x = \b'$ (for any $\b'$) is quadratically better conditioned than the original system $A\x = \b$.

The method we introduce is based on finding matrices $L \in \mathds{C}^{N\times M}$ and $L^g \in \mathds{C}^{M \times N}$, with $M > N$, such that the decomposition $A = L L^\dag$ holds and $L^g$ satisfies $LL^g = \Id$, i.e.\ it is a right pseudo-inverse; moreover, $L^g$ is rectangular with more rows than columns and thus $L^gL \neq \Id$, i.e.\ $L^g$ cannot be a left pseudo-inverse. We make then the following observations. \vspace{-1mm}

\begin{enumerate} [itemsep=-.5\parsep]
\item $L^g L L^\dag \x = L^g \b$ is a linear system equivalent to the original one, having the vector $\x = A^{-1}\b$ as the unique solution, but with no guarantee that the condition number of $L^g L L^\dag$ is small.
\item $L^\dag \x = L^g \b$ is a over-constrained linear system that may be inequivalent to the original one (since $L^g L \neq \Id$) and typically has no proper solution $\x$.
\item Finding $\t{argmin}_\x \norm{\, L^\dag \x - L^g \b\,}$ is a problem equivalent to the original system. The unique solution is $\x = (L^\dag)^+ L^g \b$, therefore using\footnote{By assumption $A$ is invertible, hence $L^\dag \in \mathds{C}^{M \times N}$ is full-rank, and thus using the SVD $L^\dag= W \Sigma^\dag V^\dag$ we get $(L L^\dag)^{-1} L = (V \Sigma \Sigma^\dag V^\dag)^{-1} V \Sigma W^\dag = V (\Sigma^\dag)^+ W^\dag = (L^\dag)^+$.} $(L^\dag)^+ = (L L^\dag)^{-1} L = A^{-1} L$ and $LL^g = \Id$ we get the required result $\x = A^{-1} LL^g \b = A^{-1}\b$.
\end{enumerate} \vspace{-1mm}

The goal is thus to convert the linear system $A\x=\b$ into the linear regression problem $\t{argmin}_\x \norm{\, L^\dag \x - L^g \b\,}$, having solution $\x = (L^\dag)^+ L^g \b$. This is a non-trivial task, as we need, given access to $A$ via sparse-matrix oracle or via some succinct description, to construct a suitable access to $L^\dag$ and, moreover, given access to $\b$, to construct a suitable access to $\b' := L^g\b$. The latter requirement seems particularly worrisome, since it involves a pseudo-inversion of the exponentially large matrix $L$. We show, however, that for the Sum-QLS problem, whereby $A$ is provided as a sum of local PD terms, one can find a suitable $L^g$ for which a compact classical description can be efficiently computed.

We also remark that a pseudo-inversion problem can be interpreted as a regular matrix inversion on the subspace where the matrix $L^\dag$ is full-rank; thus, solving pseudo-inversion entails a larger runtime compared to solving a standard QLS, since the appropriate subspace has to be selected via projection or via amplitude amplification. More formally, the operator $(L^\dag)^+\!: \mathds{C}^M \rightarrow \mathds{C}^N$ (with $M > N$) has rank equal to $N$ and thus we have the orthogonal decomposition
\al{
	\mathds{C}^M 
	= 
	\t{supp}\big((L^\dag)^+\big) +
	\t{ker }\big((L^\dag)^+\big) 
	\qquad
	\begin{cases} 
	\dim \t{supp}\big((L^\dag)^+\big) = N \\
	\dim \t{ker }\big((L^\dag)^+\big) = M - N
	\end{cases}	
}
where the support is by definition the subspace orthogonal to the kernel. Then, calling $\Pi$ and $\Pi^\perp = \Id - \Pi$ the orthogonal projectors on the support and on the kernel of $(L^\dag)^+$, respectively\footnote{Using the identity $(L^\dag)^+ = A^{-1} L$ we obtain $\t{supp}((L^\dag)^+) = \t{supp}(L)$ and moreover $\t{ker}((L^\dag)^+) = \t{ker}(L)$.}, we obtain the identity
\al{
	(L^\dag)^+ \ket{\b'}
	=
	(L^\dag)^+ \, \Pi       \ket{\b'} +
	(L^\dag)^+ \, \Pi^\perp \ket{\b'}
	=
	(L^\dag)^+ \, \Pi \ket{\b'}	
}
since by definition we have $(L^\dag)^+\, \Pi^\perp = 0$. It is then evident that only the component $\Pi \ket{\b'}$, which is in general a sub-normalised quantum state, plays a role in the pseudo-inversion algorithm, while the orthogonal component $\Pi^\perp \ket{\b'}$ can be arbitrary. Therefore, any quantum pseudo-inversion algorithms implicitly requires the amplification of the $\Pi \ket{\b'}$ component, which therefore entails a gate complexity in $\O(1/\sqrt{\gamma})$, for some known lower bound $\sqrt{\gamma} \leq \norm{\Pi \ket{\b'}}$.

\subsection{Problem statement}
\label{sec:prob_def}

In the Sum-QLS we assume that the coefficient matrix $A\in \mathds{C}^{N\times N}$ is given by an \emph{explicit classical description}, rather than via oracular access. This allows us to evade the lower bounds given in \prop{lower_bound}, since those bounds are formulated for relativising (i.e.\ oracular) algorithms. We assume, specifically, that $A$ has the form~\cite{Chowdhury16}
\al{
\label{eq:A_sum}
	A = \sum_{j=1}^J H_{(j)}
	\qquad 
	\forall j \ H_{(j)} \ \textup{is positive definite.} 
}
Here we impose that each $H_{(j)}$ is strictly positive definite (rather than semi-definite) because of a technical condition that will become clear later; in essence, the expression in Eq.~\eqref{eq:gamma_main} can diverge if any $H_{(j)}$ is singular, resulting in an infinite runtime. Each term $H_{(j)}$ is a local Hamiltonian, i.e.\ it can be expressed as [see also Eq.~\eqref{eq:footnote} in the footnote]
\al{
\label{eq:H_j}
	H_{(j)} = 
	h_{(j)} \otimes I_{\neg \S_j}
}
where each $h_{(j)}$ is an operator acting on a subset $\S_j \subseteq \{1,\ldots, n\}$ of at most $s$ qubits, corresponding to a matrix of size at most $2^{s} \times 2^{s}$. We assume that $J$, the number of Hamiltonian terms, and $s$ are ``small'', i.e.\ we take $J \in \O(\t{poly}\,n)$ and $s \in \O(\log n)$, and that we have a complete classical description of each operator $h_{(j)}$ and of each subset $\S_{j}$. The matrix $A$ is fully specified with at most $J 2^{2s}$ real parameters and with $J n$ boolean values (defining the sets $\S_j$), i.e.\ the number of parameters is $J 2^{2s} + Jn \in \O(\t{poly}\,n)$.

We require moreover that the known-term vector $\b$ is sparse, containing at most $d_\b$ non-zero entries in the computational basis, where $d_\b$ also scales polynomially in $n$. This implies that a preparation circuit $\U_\b$ can be given as an explicit small quantum circuit. This leads to the following definition for the Sum-QLS problem.

\begin{definition}[Sum-of-Hamiltonians Quantum Linear System]
\label{def:Sum-QLS}
A Sum-QLS problem is a PD-QLS as in \defin{PD-QLS} with the following restrictions. The coefficient matrix $A \in \mathds{C}^{N \times N}$, for $N = 2^n$, is provided as the sum of PD Hamiltonian terms $A = \sum_{j=1}^J H_{(j)}$, where each $H_{(j)}$ acts on at most $s$ qubits; each $H_{(j)}$ is fully specified as a PD matrix $h_{(j)}$ of size $2^{s_j} \times 2^{s_j}$ with $s_j \leq s$, together with the set $\S_j$ of $s_j$ qubits on which $H_{(j)}$ acts upon. The vector $\b \in \mathds{C}^N$ is $d_\b$-sparse and the value and position of each of the $d_\b$ non-zero entries is provided. 

\end{definition}

In \app{non-dequantization} we show that the Sum-QLS problem is $\mathsf{BQP}$-hard, by adapting a proof given in HHL~\cite{HHL}. That is, we show that any polynomial-time quantum computation (in the $\textsf{BQP}$ class) can be re-formulated as a Sum-QLS problem for some artfully constructed coefficient matrix $A$ and known-term vector $\b$; therefore no polynomial-time classical probabilistic algorithm (in the $\textsf{BPP}$ class) can solve the Sum-QLS problem\footnote{In this context, a classical probabilistic algorithm ``solves'' a QLS problem if it outputs a value $n\in \{1, \ldots, N\}$ with probability approximately equal to $|x_{n}|^2$, the square of the $n$-th amplitude of the quantum state $\ket{\x} = \ket{A^{-1}\b}$.} (unless $\textsf{BPP} = \textsf{BQP}$).

\subsection{Classical pre-processing step}
\label{sec:positive}

In this section we describe the classical pre-processing step, providing a quadratic improvement of the condition number. We decompose each matrix $h_{(j)}$ as 
\al{
	h_{(j)} = l_{(j)} l_{(j)}^\dag \,,
}
which can be accomplished for example via the Cholesky decomposition~\cite{Cholesky}, in which case $l_{(j)}$ is a lower triangular matrix. Notice that each matrix $h_{(j)}$ is a small matrix of size $2^s \times 2^s \in \O(\t{poly}\, n)$ and thus the Cholesky decomposition can be performed numerically on a classical computer using $\O(\t{poly}\,n)$ operations; specifically, Cholesky factorisation of a $m\times m$ matrix requires $m^3/3$ arithmetic operations, $m^3/6$ additions and $m^3/6$ multiplications~\cite{Cholesky}. The total number of Hamiltonian terms is $J \in \O(\t{poly}\,n)$, implying that the total runtime for performing the Cholesky decomposition for all Hamiltonian terms is $\O(2^{3s} J)$, which also is polynomial in $n$ under our assumptions.

We save the Cholesky decompositions $l_{(j)}$ in a classical memory, storing $\O(J 2^2s) = \O(\t{poly}\,n)$ complex values, for later use. These decompositions implicitly define the operators 
\al{
	L_{(j)} =  l_{(j)} \otimes I_{\neg \S_j}
}
where each $L_{(j)}$ is of size $2^n \times 2^n$ and where the interpretation of this equation is the same as in Eq.~\eqref{eq:H_j}. We now introduce the rectangular matrix $L \in \mathds{C}^{N \times JN}$, with $N=2^n$, given by
\al{
\label{eq:L}
	L
	:=
	\left(
	L_{(1)} \ \bigg| \ \cdots \ \bigg| \ L_{(J)}
	\right)
}
and thus we have the required decomposition
\al{
	L L^\dag 
	= 
	\sum_{j=1}^J L_{(j)}L_{(j)}^\dag 
	= 
	\sum_{j=1}^J  H_{(j)} 
	= 
	A \,.
}
We can efficiently implement quantum circuits $\P_{L}$ ($\P_{L^\dag}$) that provide sparse-matrix access to $L$ ($L^\dag$). To this end, it is sufficient to convert the classical random access memory that stores the positions and values of the entries of $L$ ($L^\dag$) into a qRAM~\cite{Giovannetti08}. The scheme presented in Ref.~\cite[Section 6.3.5]{Gilyen14} implements this qRAM with a gate complexity in $\O(n J 2^{2s})$ and a circuit depth in $\O\big(\log (J 2^{2s})\big) = \O\big(s + \log J\big)$.

Since $h_{(j)}$ is by assumption non-singular, each $l_{(j)}$ is a non-singular lower-triangular matrix and its inverse $l_{(j)}^{-1}$ is an upper-triangular matrix which we can efficiently compute and store in a classical memory using a polynomial amount of space. These matrices implicitly define operators $L_{(j)}^{-1} =  l_{(j)}^{-1} \otimes I_{\neg \S_j}$ such that $L_{(j)} L_{(j)}^{-1} = \Id$. We can then define the matrix
\al{
\label{eq:L^g}
	L^g := 
	\frac{1}{J}
	\left(
	\begin{array}{c}
	L_{(1)}^{-1} \\[1mm]
	\hline
	\vdots \\[1mm]
	\hline 
	L_{(J)}^{-1}
	\end{array}
	\right) .
}
which is a generalised right pseudo-inverse of $L$, i.e.\ $L^g$ satisfies the equation 
\al{
	L \, L^g = \frac{1}{J} \sum_{j=1}^J  \Id = \Id
}
and thus using $(L^\dag)^+ = A^{-1} L$ we get the required relation $(L^\dag)^+ L^g = A^{-1} L \, L^g = A^{-1}$. Using a qRAM the gate complexity of $\P_{L^g}$ is equal to that of $\P_{L}$ and is thus in $\O(n J 2^{2s})$.

We finally introduce the quantum state
\al{
\label{eq:b'_definition}
	\ket{\b'} :=
	\ket{L^g \, \b} .
}
By assumption, $\b$ is sparse and has $d_\b \in \O(\t{poly}\, n)$ non-zero entries, hence the vector $\b'$ has sparsity $d_{\b'} \leq d_\b J 2^s \in \O(\t{poly}\,n)$. It is then possible efficiently classically compute the positions and the values of all the non-zero entries of $\b'$, with a gate complexity in $\O(n \, d_{\b'})$, and thus also compute the normalisation factor $\norm{\b'}$, with a gate complexity in $\O(d_{\b'})$. Using the method described in Ref.~\cite{Malvetti20}, we can then efficiently compile a quantum circuit $\U_{\b'}$ that prepares $\ket{\b'}$ and has a gate complexity in $\O(n\, d_{\b'}) = \O(n\,d_\b J 2^s)$, assuming that all single qubit rotations are performed exactly.

Employing the classical pre-processing described up to now, we can efficiently implement quantum circuits $\P_{L^\dag}$ and $\U_{\b'}$ that act as a sparse access to $L^\dag$ and state preparation circuit for $\b'$, respectively; importantly, the gate complexities of these unitaries are independent from $\kappa$. We thus have at hand the necessary tools to implement the quantum pseudo-inversion algorithm that we present in the upcoming \sec{pseudo-inverse}.

\subsection{Efficient pseudo-inversion quantum algorithm}
\label{sec:pseudo-inverse}

In this section we look into a quantum algorithm for the linear regression problem
\al{
\label{eq:psedo-QLS}
	\underset{\x}{\t{argmin}} 
	\norm{\,L^\dag \x - \b' \,} 
}
having solution $\ket{\x} = \ket{(L^\dag)^+ \b'} = \ket{A^{-1} \b}$. We then employ the quantum pseudo-inversion algorithm of~\cite[Corollary 31]{Chakraborty18} which we report here for completeness.

\begin{proposition}[Complexity of pseudo-inverse state preparation] 
\label{prop:pseudo-inverse}

Suppose $\widetilde{\kappa} \geq 2$, $\mathcal{L} \in \mathds{C}^{N \times N}$ is a Hermitian matrix whose non-zero eigenvalues are contained in the domain $[-1, -1/\widetilde{\kappa}] \cup [1/\widetilde{\kappa},1]$, and $\varepsilon$ is the target precision. Assume that we have access to $\mathcal{U}_\mathcal{L}$, an $(\alpha, a, \delta)$-matrix-block-encoding of $\mathcal{L}$ with $\delta \in \mathfrak{o} \left( \varepsilon / (\widetilde{\kappa}^2 \log^3 \frac{\widetilde{\kappa}}{\varepsilon}) \right)$ and $a \in \Omega(\log N)$, and to a state preparation oracle $\mathcal{U}_\textbf{v}$ for a vector $\textbf{v}$ such that $\norm{\Pi_\mathcal{L} \ket{\textbf{v}}} \geq \sqrt{\gamma}$, where $\Pi_\mathcal{L}$ is the orthogonal projector onto the support of $\mathcal{L}^+$ and $\gamma$ is a known positive parameter. Then, there is a (VTAA-based) quantum algorithm that produces a state $\varepsilon$-close to $\ket{\mathcal{L}^+ \textbf{v}}$ and has:
\al{
\label{eq:1}
	Q\left[\U_\mathcal{L}\right]
	\; & \in \; 
	\O \! \left(
	\frac{\alpha}{\sqrt{\gamma}} \, 
	\widetilde{\kappa} \log^3(\widetilde{\kappa}) \log^2 (1/\varepsilon)
	\right) \\
\label{eq:2}
	Q\left[\U_\textbf{v}\right]
	\; & \in \; 
	\O \! \left(
	\frac{1}{\sqrt{\gamma}} \, \widetilde{\kappa} \log(\widetilde{\kappa})
	\right) 
}
where $Q[\U_\mathcal{L}]$ and $Q[\U_\b]$ are the query complexity in terms of access to $\U_\mathcal{L}$ and $\U_\b$. The algorithm is gate-efficient, only requiring $\O \big(a \, Q\left[\mathcal{U}_\mathcal{L}\right]\big)$ extra elementary gates.

\end{proposition}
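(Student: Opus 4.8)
The plan is to follow the matrix-inversion strategy of Childs, Kothari and Somma~\cite{Childs15}, in the refined ``pseudo-inverse'' form of Chakraborty et al.~\cite{Chakraborty18}: build a block-encoding of a low-degree polynomial of $\mathcal{L}$ that approximates $\mathcal{L}^+$ up to a known rescaling, apply it to $\ket{\textbf{v}}$ and amplify the flagged ``good'' component, and then replace the naive amplitude amplification by Variable-Time Amplitude Amplification (VTAA)~\cite{Ambainis10}, which trades the worst-case circuit length for an $\ell^2$-averaged one and thereby shaves off a factor $\widetilde{\kappa}$.

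For the first step, recall that the polynomial approximation of $1/x$ of~\cite[Lemma~14]{Childs15}, suitably rescaled, gives an odd real polynomial $P$ of degree $\ell\in\O\big(\widetilde{\kappa}\log(\widetilde{\kappa}/\varepsilon')\big)$ with $P(0)=0$, $|P(x)|\le 1$ on $[-1,1]$, and $\big|P(y)-\tfrac{1}{\widetilde{\kappa}\alpha\,y}\big|\le\varepsilon'$ for $|y|\in[1/(\alpha\widetilde{\kappa}),\,1/\alpha]$, i.e.\ on the range of the non-zero eigenvalues of $\mathcal{L}/\alpha$. Feeding $\U_\mathcal{L}$, which encodes $\mathcal{L}/\alpha$, into the QSP construction of \thm{Signal} produces a $(1,a+2,\O(\ell\sqrt{\delta/\alpha}))$-block-encoding $\U_P$ of $P(\mathcal{L}/\alpha)$; since $P(0)=0$ this operator annihilates $\t{ker}(\mathcal{L})$ exactly and, on $\t{supp}(\mathcal{L}^+)$, is $\O(\varepsilon'+\ell\sqrt{\delta/\alpha})$-close in operator norm to $\tfrac{1}{\widetilde{\kappa}\alpha}\mathcal{L}^+$. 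Applying $\U_P$ to $\ket{0^{a+2}}\ket{\textbf{v}}$ flags, with the ancillas in $\ket{0^{a+2}}$, a subnormalised state proportional to $\mathcal{L}^+\ket{\textbf{v}}$ of norm $\tfrac{1}{\widetilde{\kappa}\alpha}\norm{\mathcal{L}^+\Pi_\mathcal{L}\ket{\textbf{v}}}\ge\tfrac{\sqrt{\gamma}}{\widetilde{\kappa}\alpha}$, using that the smallest non-zero singular value of $\mathcal{L}$ is at most $1$ so $\norm{\mathcal{L}^+\Pi_\mathcal{L}\ket{\textbf{v}}}\ge\norm{\Pi_\mathcal{L}\ket{\textbf{v}}}\ge\sqrt{\gamma}$. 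Plain amplitude amplification~\cite{Brassard02} would then use $\O(\alpha\widetilde{\kappa}/\sqrt{\gamma})$ rounds of a length-$\widetilde{\O}(\widetilde{\kappa})$ circuit, i.e.\ $Q[\U_\mathcal{L}]\in\widetilde{\O}(\alpha\widetilde{\kappa}^2/\sqrt{\gamma})$ --- a factor $\widetilde{\kappa}$ short of the claim.

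To remove that factor I would reformulate the inversion as a \emph{variable-stopping-time} algorithm, exactly as in~\cite[Section~5]{Childs15} and~\cite{Chakraborty18}. Partition the spectral domain of $\mathcal{L}/\alpha$ into $m\in\O(\log\widetilde{\kappa})$ geometric shells $|y|\in[2^{-k-1}/\alpha,\,2^{-k}/\alpha]$, $k=1,\dots,m$; at stage $k$ run a gapped-phase-estimation (spectral-gate) subroutine of cost $\widetilde{\O}(2^{k})$ that tests, up to error probability $\epsilon_k$, whether the eigenvalue scale is $\gtrsim 2^{-k}/\alpha$, and if so halt the branch and apply the degree-$\widetilde{\O}(2^{k})$ inverse polynomial tuned to that shell. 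The worst-case circuit length is still $\widetilde{\O}(\widetilde{\kappa})$, but the $\ell^2$-averaged length over the halting branches is much smaller, and the VTAA master theorem (the form of~\cite{Ambainis10} used in~\cite{Chakraborty18}) then produces a state $\O(\varepsilon)$-close to $\ket{\mathcal{L}^+\textbf{v}}$ at the stated cost, $Q[\U_\mathcal{L}]\in\O\!\big(\tfrac{\alpha}{\sqrt{\gamma}}\widetilde{\kappa}\log^3\widetilde{\kappa}\log^2(1/\varepsilon)\big)$ and $Q[\U_\textbf{v}]\in\O\!\big(\tfrac{1}{\sqrt{\gamma}}\widetilde{\kappa}\log\widetilde{\kappa}\big)$, with $\O(a\,Q[\U_\mathcal{L}])$ extra gates since each QSP/GPE step adds $\O(a)$ gates and VTAA contributes only polylogarithmic overhead. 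The block-encoding error budget follows by tracking how $\delta$ is amplified --- by the polynomial degree inside \thm{Signal}, by the $\Theta(\widetilde{\kappa}\alpha)$ rescaling needed to turn $P(\mathcal{L}/\alpha)$ back into $\mathcal{L}^+$, and across the $\O(\log\widetilde{\kappa})$ gapped-phase-estimation stages --- which, together with a comparably small choice of $\varepsilon'$ that keeps $\ell\in\widetilde{\O}(\widetilde{\kappa})$, forces $\delta\in\mathfrak{o}\!\big(\varepsilon/(\widetilde{\kappa}^2\log^3\tfrac{\widetilde{\kappa}}{\varepsilon})\big)$.

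The main obstacle I expect is the VTAA analysis itself: defining the variable-stopping-time algorithm so that the gapped-phase-estimation mis-classification errors over the $\O(\log\widetilde{\kappa})$ stages accumulate in a controlled way, and verifying the hypotheses of the VTAA master theorem (in particular the bound on the $\ell^2$-averaged stopping time and the ``monotone halting flag'' condition). A secondary subtlety is the pseudo-inverse structure: because $\Pi_\mathcal{L}$ projects onto $\t{supp}(\mathcal{L}^+)$ rather than the whole space, one must check that the $\t{ker}(\mathcal{L})$ component of $\ket{\textbf{v}}$ is annihilated at \emph{every} stage (guaranteed by $P(0)=0$ and by the spectral gate rejecting near-zero eigenvalues) and never leaks into the amplified subspace, so that the algorithm indeed outputs $\ket{\mathcal{L}^+\textbf{v}}$ and the $1/\sqrt{\gamma}$ overhead is tight.
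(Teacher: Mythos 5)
The paper does not prove this proposition; it is imported verbatim from Chakraborty, Gily\'en, and Jeffery~\cite[Corollary~31]{Chakraborty18} and is explicitly introduced with ``we report here for completeness,'' so there is no in-paper argument against which to compare your sketch. That said, your sketch does correctly reconstruct the ingredients of the cited proof: the Childs--Kothari--Somma odd polynomial approximation of $1/x$ with $P(0)=0$ (so that $\mathrm{ker}(\mathcal{L})$ is annihilated and the pseudo-inverse, rather than a divergent inverse, is implemented), the QSP block-encoding via \thm{Signal}, the variable-stopping-time reformulation using gapped phase estimation over $\O(\log\widetilde{\kappa})$ geometric eigenvalue shells, and the VTAA master theorem to bring the naive $\widetilde{\O}\!\left(\alpha\widetilde{\kappa}^2/\sqrt{\gamma}\right)$ query cost down to $\widetilde{\O}\!\left(\alpha\widetilde{\kappa}/\sqrt{\gamma}\right)$. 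The obstacles you flag at the end --- controlling accumulated mis-classification error across the GPE stages, verifying the VTAA hypotheses (the $\ell^2$-averaged stopping time and the nested ``maybe-good'' projectors), and confirming the kernel component never leaks into the amplified subspace --- are precisely the technical content handled in~\cite{Chakraborty18}; the paper's own VTAA treatment in \app{VTAA}, written for the positive-definite case of \prop{VTAA}, exhibits the same accounting machinery and would be a useful template if you wanted to fill the sketch in.
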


We will apply \prop{pseudo-inverse} using as coefficient matrix $\mathcal{L}$ the Hermitian extension of $L^\dag$ and $\widetilde{\kappa} \equiv \sqrt{\kappa}$. That is, we consider the Hermitian matrix $\mathcal{L} \in \mathds{C}^{(J+1)N \times (J+1)N}$ and vector $\textbf{v} \in \mathds{C}^{(J+1)N}$ 
\al{
	\mathcal{L} = 
	\left(
	\begin{array}{cc}
	0 & L^\dag \\
	L & 0
	\end{array}
	\right)
	\quad	
	\textbf{v} =
	\left(
	\begin{array}{c}
	\b' \\ 0
	\end{array}
	\right)
}
which, after pseudo-inversion, results in the vector
\al{
	\x 
	= 
	\mathcal{L}^+ \textbf{v} 
	= 
	\left(
	\begin{array}{cc}
	0 & L^+ \\
	(L^\dag)^+ & 0
	\end{array}
	\right)
	\left(
	\begin{array}{c}
	\b' \\ 0
	\end{array}
	\right)
	=
	\left(
	\begin{array}{c}
	0 \\ (L^\dag)^+\, \b'
	\end{array}
	\right)
}
which encodes the quantum state $\ket{J} \otimes \ket{(L^\dag)^+\, \b'}$ and the solution is obtained discarding the $(J+1)$-level ancilla system in the state $\ket{J}$.

\subsection{Runtime estimation}
\label{sec:runtime_estimation}

In this section we look into methods for estimating of the parameters $\alpha,\kappa$ and $\gamma$ (i.e., the normalisation of the block-encoding of $L^\dag$, condition number, and overlap with the support space) that determine the runtime of the pseudo-inversion algorithm of \prop{pseudo-inverse}, and thus determine the overall complexity of the Sum-QLS solver.

First, we can implement sparse-matrix-accesses $\P_L$ and $\P_{L^\dag}$ using the information stored in a qRAM, as previously explained. Childs' walk operator~\cite{Childs15} then allows to realise a block-encoding $\U_\mathcal{L}$ of $\mathcal{L}$ using $\O(1)$ accesses to $\P_L$ and $\P_{L^\dag}$. Note that $\U_\mathcal{L}$ is also a block-encoding of $L^\dag$ (after a swap of the position of the block). The normalisation factor of this block-encoding is $\alpha = J 2^s \in \O(\t{poly}\,n)$, equal to the sparsity of $\mathcal{L}$.

Second, we can explicitly bound the condition number of $A$ as follows. Positive-definiteness of the Hamiltonian terms $H_{(j)}$ implies positive-definiteness of $A$ and, moreover, the smallest and largest eigenvalues of $A$ satisfy $\lambda_{\min}(A) \geq \sum_{j=1}^J \lambda_{\min}(h_{(j)})$ and $\lambda_{\max}(A) \leq \sum_{j=1}^J \lambda_{\max}(h_{(j)})$. Since each $h_{(j)}$ can be efficiently diagonalised, this means that it is possible to classically compute these values, which then yield the explicit upper bound 
\al{
\label{eq:kappa_bound}
	\kappa(A) 
	\leq 
	\frac{\sum_{j=1}^J \lambda_{\max}(h_{(j)})}
	     {\sum_{j=1}^J \lambda_{\min}(h_{(j)})}
	\equiv \kappa
	\,.
}
Tighter bounds to $\kappa(A)$ could also be obtained via more computationally intensive numerical methods, e.g.\ by first summing together groups of Hamiltonian terms and then diagonalizing each sum of Hamiltonians.

We now move on to lower-bounding the value of the overlap parameter 
\al{
	\bnorm{\, \Pi_\mathcal{L} \ket{\textbf{v}} }
	=
	\bnorm{\,\Pi_L \ket{\b'} } ,
}
where $\Pi_\mathcal{L}$ are the projectors on the supports of $\mathcal{L}^+$ and of $(L^\dag)^+$, respectively, and note that the supports of $L$ and $(L^\dag)^+$ are equal. Using the identity $\Pi_{L} = L^+ L$ we thus obtain
\al{
	\Pi_{L} 
	= 
	L^\dag A^{-1} L 
	& = 
	\sum_{i,j=1}^J
	\ket{i}\! \bra{j} \otimes
	L_{(i)}^\dag A^{-1} L_{(j)} \,.
}
Moreover, we have:
\al{
	\ket{\b'} = 
	\ket{L^g \, \b} = 
	\frac{1}{\sqrt{\N}} 
	\sum_{j=1}^J \ket{j} \otimes L_{(j)}^{-1} \ket{\b}
}
where the normalisation factor $\N$ is given by
\al{
	\N 
	= 
	\sum_{j=1}^J \norm{ L_{(j)}^{-1} \ket{\b} }^2 
	=
	\sum_{j=1}^J \bra{\b}  H_{(j)}^{-1} \ket{\b} .	
}
Then we can compute
\al{
	\Pi_L \ket{\b'}
	& =
	\frac{1}{\sqrt{\N}}
	\sum_{i,j=1}^J
	\ket{i} \otimes L_{(i)}^\dag A^{-1} L_{(j)} L_{(j)}^{-1} \ket{\b} \\
	& =
	\frac{J}{\sqrt{\N}}
	\sum_{i=1}^J
	\ket{i} \otimes L_{(i)}^\dag A^{-1} \ket{\b}
}
and finally we obtain
\al{
\label{eq:gamma1}
	\norm{\Pi_L \ket{\b'}}^{-1}
	& =
	\frac{\sqrt{\N}}{J}
	\left[
	\bra{\b} A^{-1} 
	{\textstyle \sum_{i=1}^J} 
	\big( L_{(i)} L_{(i)}^\dag \big) A^{-1} \ket{\b} 
	\right]^{-1/2}
	\\
\label{eq:gamma2}
	& =
	\frac{1}{J}	\,
	\sqrt{ 
	\frac{ \sum_{j=1}^J \bra{\b} H_{(j)}^{-1} \ket{\b}}
	     {\bra{\b} A^{-1} \ket{\b}} 
	     }
	\;.
}

We now suppose that a value $\gamma > 0$ such that $\norm{\Pi_\mathcal{L} \ket{\textbf{v}}} \geq \sqrt{\gamma}$ is known and we remind that the quantum psuedo-inversion algorithm has a runtime quasi-linear in $1/\sqrt{\gamma}$. We extensively comment on the values that $\gamma$ can take in order to understand in which cases the Sum-QLS solver yields an advantage over competing methods.

\begin{enumerate}

\item We have $\norm{\Pi_L \ket{\b'}} \leq 1$, and the inequality is saturated when $H_{(j)} = A / J$ for all $j$.

\item The bound $\sum_{j=1}^J \bra{\b} H_{(j)}^{-1} \ket{\b} \leq J\,\lambda_*^{-1}$ holds, where $\lambda_* := \min_j \lambda_{\min} (H_{(j)})$. Assuming that $\lambda_* \in \Omega \big( \lambda_{\min}(A)/J \big)$, i.e.\ there is no Hamiltonian term having a minimum eigenvalue significantly smaller than the average minimum eigenvalue, we obtain:
\al{
\label{eq:gamma_bound}
	\norm{\Pi_L \ket{\b'}}^{-1}
	& \in 
	\O \bigg(
	\frac{1}{J}	\,
	\frac{ \sqrt{J \lambda_*^{-1}} }
	     { \sqrt{\bra{\b} A^{-1}\ket{\b}} } 
	\bigg)
	= 
	\O \bigg(
	\frac{ \sqrt{ \kappa(A) } }
	     { \sqrt{\bra{\b} A^{-1}\ket{\b}} } 
	\bigg) \;.
}

\item The numerator in Eq.~\eqref{eq:gamma2} can be explicitly calculated, while the denominator is in general difficult to compute\footnote{One could use techniques related to amplitude estimation to bound $\big|\big|A^{-1/2} \ket{\b}\big|\big|$, but this operation could be as difficult as solving the QLS in the first place.}. However, assuming $\norm{A} \leq 1$, we have the bounds 
\al{
	\sqrt{\bra{\b} A^{-1}\ket{\b}} = \norm{A^{-1/2} \ket{\b}} \in \big[\,1, \sqrt{\kappa}\,\big]
}

\item Importantly, the expression $\bra{\b} A^{-1}\ket{\b}$ appears at the denominator, so that a more ``ill-conditioned'' vector $\b$ results in a larger overlap and thus in a faster Sum-QLS solver.

\item As in the analysis of \sec{from-to} we can study the runtime in an average-case scenario. For randomly chosen $A$ and $\b$ (sampled according to suitable probability distributions) we have $\norm{A^{-1}\ket{\b}} \in \Theta(\sqrt{\kappa(A)})$ almost surely; under the same assumptions, we also have $\norm{A^{-1/2}\ket{\b}} \in \Theta(\kappa(A)^{1/4})$ almost surely. Inserting this estimation in Eq.~\eqref{eq:gamma_bound} we have that
\al{
	\norm{\Pi_L \ket{\b'}}^{-1}
	& \; \in \; 
	\Theta \Big( \kappa(A)^{1/4} \Big)	
}
holds almost surely. We conclude that for an average-case Sum-QLS problem the runtime is in $\widetilde{\O}(\sqrt{\kappa/\gamma}) = \widetilde{\O}(\kappa^{3/4})$, if $\sqrt{\gamma}$ is a tight lower bound for $\norm{\Pi_L \ket{\b'}}$.

\end{enumerate}

Summarising, we have the following result.

\begin{proposition}[Complexity of the Sum-QLS solver]
\label{prop:Algorithm2}
Consider the Sum-QLS problem with parameters $N,J,s,\kappa$ and $d_\b$ as in \defin{Sum-QLS}. There is a classical-quantum algorithm $\mathcal{A}$ solving the Sum-QLS that has the following features.

The first part of $\mathcal{A}$ consists of an efficient classical pre-processing algorithm, which outputs a description of the quantum circuits implementing $\U_{L^\dag}$ and $\U_{\b'}$; here $\U_{L^\dag}$ is a $(2^sJ,1,0)$-matrix-block-encoding of $L^\dag$ [as given in Eq.~\eqref{eq:L}] with gate complexity in $\O(n J 2^{2s})$ and circuit depth in $\O\big(s + \log J\big)$; while $\U_{\b'}$ is a state preparation unitary for $\b'$ [as given in Eq.~\eqref{eq:L^g}] with gate complexity in $\O(n\,d_\b \,J 2^s)$. By definition, $L^\dag$ and $\b'$ satisfy $(L^\dag)^+ \b' = A^{-1}\b$. 

The second part of $\mathcal{A}$ consists of using the quantum pseudo-inversion algorithm given in \prop{pseudo-inverse}, using the unitaries $\U_{L^\dag}$ and $\U_{\b'}$ as sub-routines, in order to produce a $\varepsilon$-close approximation of the ideal output $\ket{\x} = \ket{A^{-1}\b}$. This algorithm requires the knowledge of a value $\sqrt{\gamma}$ that satisfies $\sqrt{\gamma} \leq \norm{\Pi_L \b'}$, where $\Pi_L$ is the projector onto the support of $L$, equivalently:
\al{
	\frac{1}{J^2}
	\frac{ \sum_{j=1}^J \bra{\b} H_{(j)}^{-1} \ket{\b}}
	     {\bra{\b} A^{-1} \ket{\b}} 
	\leq 
	\frac{1}{\gamma} \,.
	\label{eq:gamma_main}
}
Inserting the previously given expressions for the relevant parameters in Eqs.~(\ref{eq:1}-\ref{eq:2}) results in
\al{
	\textup{gate complexity} 
	& \in
	\O\! \left( 
	\big(n\, J 2^{2s}\big) \, \frac{J 2^s}{\sqrt{\gamma}}\, 
	\kappa \log^3(\kappa) \log^2(1/\varepsilon) 
	\,+\,
	\big(n\, d_\b J 2^s\big)
	\frac{1}{\sqrt{\gamma}}\,
	\kappa \log(\kappa)
	\right) 
	\label{eq:intermediate_estimation}\\
	& =
	\O\! \left( 
	\sqrt{\frac{\kappa}{\gamma}} \;
	\t{poly}\Big(n,\log(\kappa/\varepsilon)\Big) 
	\right)
	\label{eq:final_estimation}
}
assuming that $J, 2^s$ and $d_\b$ have polynomial dependence on $n = \log_2 N$. A quadratic speed-up in $\kappa$ (up to polylogarithmic factors) is achieved over general QLS solvers when $\gamma \in \Omega(1)$.

\end{proposition}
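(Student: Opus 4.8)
The plan is to assemble the constructions of \sec{positive}--\sec{runtime_estimation} into a single classical--quantum algorithm and to verify that its two components meet the hypotheses of \prop{pseudo-inverse}, whose complexity bounds can then be quoted verbatim and simplified. \emph{Classical part:} for each $j$ compute a Cholesky factorisation $h_{(j)}=l_{(j)}l_{(j)}^\dag$ and the inverse $l_{(j)}^{-1}$; since each $h_{(j)}$ is a $2^{s_j}\times2^{s_j}$ matrix with $s_j\le s$, all $J$ factorisations cost $\O(2^{3s}J)\subseteq\O(\t{poly}\,n)$ arithmetic operations. These implicitly define $L$ and $L^g$ as in Eqs.~\eqref{eq:L}--\eqref{eq:L^g} and the vector $\b'=L^g\b$, of sparsity $d_{\b'}\le d_\b J 2^s\in\O(\t{poly}\,n)$. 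Loading the non-zero entries of $L^\dag$ and of $\b'$ into a qRAM produces a sparse-matrix access $\P_{L^\dag}$ with gate complexity $\O(nJ2^{2s})$ and depth $\O(s+\log J)$, and, via the compiler of Ref.~\cite{Malvetti20}, a state-preparation circuit $\U_{\b'}$ for $\ket{\b'}$ with gate complexity $\O(n\,d_\b J 2^s)$; both are $\kappa$-independent. The identity $(L^\dag)^+\b'=A^{-1}\b$ follows from $LL^\dag=A$, $LL^g=\Id$ and $(L^\dag)^+=(LL^\dag)^{-1}L=A^{-1}L$ as in \sec{preliminary}.

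\emph{Block-encoding and spectral interval:} form the Hermitian extension $\mathcal{L}$ of $L^\dag$ and, applying Childs' walk operator to $\P_{L},\P_{L^\dag}$, obtain a $(2^sJ,a,0)$-matrix-block-encoding $\U_\mathcal{L}$ of $\mathcal{L}$, with normalisation $\alpha=2^sJ$ (the sparsity of $\mathcal{L}$), ancilla count $a=\Theta(\log N+\log J+s)\in\Omega(\log N)$, and $\O(nJ2^{2s})$ elementary gates per call. To invoke \prop{pseudo-inverse} one needs the non-zero eigenvalues of $\mathcal{L}$ inside $[-1,-1/\widetilde\kappa]\cup[1/\widetilde\kappa,1]$: the singular values of $L^\dag$ are the square roots of the eigenvalues of $A$, so under the normalisation $\norm{A}\le1$ one has $\varsigma_{\max}(L^\dag)\le1$, while the classically computable bound~\eqref{eq:kappa_bound} gives $\varsigma_{\min}(L^\dag)\ge1/\sqrt\kappa$; hence the hypothesis holds with $\widetilde\kappa\equiv\sqrt\kappa$ (the requirement $\widetilde\kappa\ge2$ only discards the trivial regime $\kappa<4$), and $\delta=0$ meets the block-encoding precision requirement trivially.

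\emph{Overlap, invocation and simplification:} with $\textbf{v}:=(\b',0)^T$, the computation~\eqref{eq:gamma1}--\eqref{eq:gamma2} shows $\bnorm{\Pi_\mathcal{L}\ket{\textbf{v}}}=\bnorm{\Pi_L\ket{\b'}}$ equals the right-hand side of~\eqref{eq:gamma2}, so any $\sqrt\gamma$ satisfying~\eqref{eq:gamma_main} is a valid lower bound that can be evaluated classically from the small matrices $h_{(j)}$ and the sparse $\b$, without the (hard) denominator. Feeding $\U_\mathcal{L}$, $\U_{\b'}$, $\widetilde\kappa=\sqrt\kappa$, $\alpha=2^sJ$ and $\gamma$ into \prop{pseudo-inverse} yields a state $\varepsilon$-close to $\ket{\mathcal{L}^+\textbf{v}}=\ket{J}\otimes\ket{(L^\dag)^+\b'}=\ket{J}\otimes\ket{A^{-1}\b}$; discarding the $(J{+}1)$-level ancilla (left in $\ket{J}$) gives $\ket{A^{-1}\b}$, and~\eqref{eq:trace_ineq} converts the $\ell^2$ guarantee into the trace-distance guarantee of \defin{Sum-QLS}. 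Substituting $\widetilde\kappa=\sqrt\kappa$ and $\alpha=2^sJ$ into Eqs.~\eqref{eq:1}--\eqref{eq:2}, multiplying the resulting query counts by the per-call gate costs recorded above, and absorbing the $\O(a\,Q[\U_\mathcal{L}])$ auxiliary gates (which are dominated), gives~\eqref{eq:intermediate_estimation}; collecting the $\t{poly}(n)$ and $\t{polylog}(\kappa/\varepsilon)$ factors gives~\eqref{eq:final_estimation}, and the final remark is immediate since $\gamma\in\Omega(1)$ leaves $\sqrt{\kappa/\gamma}=\O(\sqrt\kappa)$.

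\emph{Expected main obstacle.} Because the individual ingredients are already established, the work is mostly careful bookkeeping, and the step most prone to error is the interface with \prop{pseudo-inverse}: one must check that it is $\mathcal{L}$ itself --- not its $2^sJ$-rescaled block-encoding --- whose spectrum lies in $[-1,1]$ (which hinges on $\norm{A}\le1$, hence on $\varsigma_{\max}(L^\dag)\le1$), that the walk-operator normalisation $\alpha=2^sJ$ enters only $Q[\U_\mathcal{L}]$ and not $Q[\U_{\b'}]$, and that the ancilla count $a$ of $\U_\mathcal{L}$ satisfies the $\Omega(\log N)$ requirement. A second delicate point is confirming that~\eqref{eq:gamma1}--\eqref{eq:gamma2} are exact equalities, since both the validity and the efficient classical computability of the lower bound $\sqrt\gamma$ --- and hence the claim that $1/\sqrt\gamma$ is the only $\b$-dependent runtime factor --- rest on them.
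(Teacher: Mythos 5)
Your proposal follows essentially the same route as the paper: classical Cholesky factorisation of each small $h_{(j)}$ and qRAM loading to realise $\P_{L^\dag}$ (hence, via Childs' walk operator, a $(J2^s,\cdot,0)$-block-encoding of the Hermitian extension $\mathcal{L}$) and $\U_{\b'}$; the identification $\widetilde\kappa\equiv\sqrt{\kappa}$ from $LL^\dag=A$; the overlap computation in Eqs.~\eqref{eq:gamma1}--\eqref{eq:gamma2}; and a direct invocation of \prop{pseudo-inverse} followed by discarding the clock ancilla. One small bookkeeping remark worth flagging: substituting $\widetilde\kappa=\sqrt{\kappa}$ into Eqs.~(\ref{eq:1}--\ref{eq:2}) faithfully produces $\sqrt{\kappa}\,\log^3(\sqrt{\kappa})\log^2(1/\varepsilon)$ and $\sqrt{\kappa}\,\log(\sqrt{\kappa})$ factors, in agreement with the $\sqrt{\kappa/\gamma}$ scaling of Eq.~\eqref{eq:final_estimation}, so the $\kappa$ appearing in Eq.~\eqref{eq:intermediate_estimation} should be read as $\widetilde\kappa=\sqrt{\kappa}$; your derivation is the correct one.
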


We remark that the family of Sum-QLS instances where all the parameters in \prop{Algorithm2} scale polynomially (with the promise, in particular, that Eq.~\eqref{eq:gamma_main} holds for some $\gamma \in \O(\t{poly}\,n)$), can be solved in polynomial time on a quantum computer, as was already shown in Ref.~\cite{Chowdhury16}. This means that the subset of problems having a polynomial scaling of the parameters, which we denote Sum-QLS$_\t{poly}$, is contained in BQP. Moreover, the reduction in \app{non-dequantization} can map polynomial-sized quantum circuits onto an instance of Sum-QLS$_\t{poly}$, thus showing that Sum-QLS$_\t{poly}$ is also \textsf{BQP}-hard. These two inclusion then show that Sum-QLS$_\t{poly}$ is \textsf{BQP}-complete\footnote{While the original QLS problem was proven to be BQP-complete already in Ref.~\cite{HHL}, our contribution is to show that adding the constraint that $A$ is the sum of positive-definite local Hamiltonians does not change the complexity class of the problem. }.

\section{Discussion and outlook}
\label{sec:discussion}

In this work we have presented two algorithms aiming at solving QLS problems in the case where the coefficient matrix is positive definite and having (for certain problem instances) a runtime in $\O(\sqrt{\kappa})$, a quadratic improvement compared to what can be obtained using general QLS solvers. This improvement has the potential of greatly expanding the classes of problems where quantum computation can provide a quantum speed-up. For instance, the discretization of partial differential equations in $D$ dimensions results in PD linear system with $\kappa \in \O(N^{2/D})$~\cite{Montanaro16} and thus having a runtime improvement from $\O(\kappa)$ to $\O(\sqrt{\kappa})$ is crucial to yield a quantum speed-up in the physically relevant cases $D=2$ and $D=3$. As a second example, it is possible estimate the hitting time of a Markov chain solving a QLS for the matrix $A = \Id - S$, where $S$ is related to the discriminant matrix of the Markov chain~\cite{Chowdhury16}; since $A$ is positive definite and decomposable as a sum of PD local Hamiltonian terms~\cite[Appendix A]{Chowdhury16} our second algorithm could be applicable to this problem.

In the spirit of finding in the near future real-world applications of quantum algorithms, we note that there is considerable interest in the possibility of realising QLS solvers in Noisy Intermediate-scale Quantum (NISQ) devices~\cite{Bravo19, Huang19} and we argue that some of our results might be implementable in NISQ devices too. In particular, the crux of our first algorithm is to find a good polynomial approximation for $A^{-1}$ with a degree in $\O(\sqrt{\kappa})$ and then implement it with the quantum signal processing method~\cite{Low16,Gilyen18}. The quadratic reduction in the degree of the polynomials renders their realisation more easily compatible with the next generation of quantum processors.

We note that many further improvements and extensions to our algorithms may be possible. Regarding the first algorithm (\sec{block_encoding}), it would be important to extend the classes of matrices for which a normalised matrix-block-encoding of $B = \Id - \eta\,A$ can be efficiently implemented to make the method more generally applicable. Regarding the second algorithm (\sec{hamiltonians}) we note that the specific choice of the generalised pseudo-inverse $L^g$ in the classical step results in a $\O(1/\sqrt{\gamma})$ multiplicative overhead in the runtime, where $\gamma$ is given in Eq.~\eqref{eq:gamma_main}. An open question is whether a different choice of the pseudo-inverse could improve, or eliminate altogether, this overhead. We also mention the possibility that the decomposition of $A$ as a sum of local PD Hamiltonians could be computed on-the-fly by the solver, instead of being given as an external input. We note that if the sparsity pattern satisfies certain conditions (it is a \emph{chordal} graph) a decomposition $A=L L^\dag$ that does not increase the sparsity of $A$ exists, and the characterisation given in Ref.~\cite[Theorem~2.6]{Jiang17} could be employed to compute it.

We finally mention an open research idea that may be worth investigating. The eigenpath transversal method has been used in some new algorithms to solve the QLS problem with time complexity in $\widetilde{\O}(\kappa)$~\cite{Subasi18, An19, Bravo19, Huang19, Lin19}; this method is simpler than the Variable-Time Amplitude Amplification (VTAA) method and also results in (marginally) improved runtimes, however, it is not directly applicable to solve a pseudo-inversion problem. It would be interesting to find a way to adapt the eigenpath transversal method to make it work also in the case where the coefficient matrix is singular. As a by-product, it could replace the algorithm given in Ref.~\cite{Chakraborty18} as the sub-routine used in our second algorithm to solve the pseudo-inversion problem, therefore making it more practical.

%An interesting open question is whether one can avoid the assumption that the condition number is known; variations of our Grover reduction show that any provably correct estimation of the condition number of a matrix requires at least $\sqrt{N}$ operations (consider a diagonal matrix with all entries equal to $1$ except for one entry equal to $1/\kappa$); hence, it seems that there is no hope to solve a QLS if an estimation of $\kappa$ is not a-priori given. But notice that we only need an estimate of $\kappa$ in the Krylov subspace, and maybe with some perturbation over $\b$ and stuff like that we might get somewhere \emph{without} explicitly computing the condition number of $A$.

\section*{Acknowledgments}

This work was supported by the Dutch Research Council (NWO/OCW), as part of the Quantum Software Consortium programme (project number 024.003.037). Some ideas present in the paper originated from discussions with Anirban Chowdhury while DO was visiting the Center for Quantum Information and Control (CQuIC). The authors acknowledge discussion with Markus Mieth, Anderas Sp\"orl, and thank Andr\'as Gily\'en for reading the manuscript and for giving us several insightful comments and suggestions.

\section*{Appendices}

\appendix
\renewcommand{\theequation}{\thesection.\arabic{equation}}

\section{Proof of the query complexity lower bound}
\label{app:lowerbounds}
\setcounter{equation}{0}

In this Appendix we give the proof of the query complexity lower bound presented in \sec{lower_bound}, which we present here in a more extended form.

\begin{proposition}[Query complexity lower bound]
\label{prop:extended_bounds}
Consider oracular quantum algorithms that solve the PD-QLS problem as presented in \defin{PD-QLS} for different access models to $A$ and $\b$. Namely, access to $\b$ is given via a state preparation oracle $\U_\b$ (\defin{state_prep}), while access to $A$ is given either via a sparse-matrix oracle $\P_A$ (\defin{sparse_access}) or via a matrix-block-encoding $\U_A$ (\defin{U_A}). Then, PD-QLS solving algorithms reaching a constant precision $\varepsilon \in \O(1)$ have query complexities $Q[\U_\b], Q[\U_A], Q[\P_A]$ all in $\Omega\big(\min(\kappa, N) \big)$. More precisely, we have:
\begin{enumerate}[itemsep=-.8\parsep]
\item 
$Q[\U_b] \in \Omega\big(\min(\kappa, N) \big)$, independently from the access model for $A$;
\item 
$Q[\U_A] \in \Omega\big(\min(\kappa, N) \big)$, independently from the access model for $\b$, when $\U_A$ is a normalised matrix-block-encoding;
\item 
$Q[\P_A] \in \Omega\big(\min(\kappa, N) \big)$, independently from the access model for $\b$, when $A$ is a matrix with constant sparsity.
\end{enumerate}

\end{proposition}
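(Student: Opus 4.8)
The plan is to prove the three bounds by two complementary devices. Items~1 (the $\U_\b$ bound) and~2 (the $\U_A$ bound, which assumes a \emph{normalised} encoding) succumb to a light ``perturbation plus hybrid'' argument; item~3 (the $\P_A$ bound) requires a genuinely different route, because the \emph{value} component of a sparse-matrix oracle encodes entries digitally and is therefore discontinuous under arbitrarily small changes of $A$, so I would reduce from the majority function $\mathrm{MAJ}_m$, whose bounded-error quantum query complexity is $Q(\mathrm{MAJ}_m)=\Theta(m)$ --- this is the ``reduction to majority problem'' of Table~\ref{table:1}, and it also reproves item~2.

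For items~1 and~2 I would fix a diagonal positive-definite matrix with small condition number, say $A=\mathrm{diag}\!\big(1/\kappa_0,1/\kappa_0,\tfrac12,\ldots,\tfrac12\big)$ with $\kappa_0:=\min(\kappa,N)$, so that $\kappa(A)\le\kappa$ and $\snorm A\le\tfrac12$. For item~1 I compare the inputs $\b=\ket 3$ and $\b'\propto\cos\theta\,\ket 3+\sin\theta\,\ket 1$ with $\sin\theta=c/\kappa_0$: then $\snorm{\ket\b-\ket{\b'}}=\Theta(1/\kappa_0)$, but the small-eigenvalue coordinate $\ket 1$ is amplified $\kappa_0$-fold by $A^{-1}$, so $\ket{A^{-1}\b}=\ket 3$ and $\ket{A^{-1}\b'}$ lie at a fixed positive trace distance; realising $\b'$ by the oracle $\U_{\b'}:=R\,\U_\b$ with $R$ the rotation from $\ket\b$ to $\ket{\b'}$ gives $\snorm{\U_\b-\U_{\b'}}_{\mathrm{op}}=\Theta(1/\kappa_0)$, so the standard hybrid argument bounds the distance of the two solver outputs after $T$ oracle calls by $\O(T/\kappa_0)$, and for a sufficiently small constant $\varepsilon$ this must be $\Omega(1)$, whence $T=\Omega(\kappa_0)$. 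For item~2 I instead keep $\b=\ket 1$ fixed and compare $A_0=A$ with $A_1=A_0+\tfrac c{\kappa_0}\proj{-}$ supported on the $\{1,2\}$-block, $\ket{-}:=\tfrac1{\sqrt2}(\ket 1-\ket 2)$: $A_1$ is positive definite with the same condition number, $\snorm{A_0-A_1}=\Theta(1/\kappa_0)$, yet $\ket{A_0^{-1}\b}$ and $\ket{A_1^{-1}\b}$ are at a fixed distance; since $\snorm{A_i}\le\tfrac12$ the canonical normalised block-encodings $\U_{A_i}$ (the $2\times2$ block construction displayed in \sec{D}) are $\O(1/\kappa_0)$-close in operator norm --- the $\sqrt{I-A_i^2}$ blocks are Lipschitz here, which is exactly where normalisation is used --- and the same hybrid argument gives $Q[\U_A]=\Omega(\kappa_0)$. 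Both arguments are visibly insensitive to the unused oracle.

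For item~3 I would reduce from $\mathrm{MAJ}_m$ with $m=\Theta(\min(\kappa,N))$, the input $x\in\{0,1\}^m$ given by a bit oracle $O_x$. The core is to build a positive-definite $A(x)\in\mathds{C}^{N\times N}$ and one fixed, $x$-independent, trivially preparable $\b$ such that: (i)~$\kappa(A(x))\le\kappa$ for every $x$; (ii)~$A(x)$ has $\O(1)$ sparsity and each of its entries is a fixed function of only $\O(1)$ bits of $x$, so one query to $\P_{A(x)}$ is simulated by $\O(1)$ queries to $O_x$ and a fixed circuit; and (iii)~a single computational-basis measurement of $\ket{A(x)^{-1}\b}$, followed by at most one further query of $O_x$ at the measured index, returns $\mathrm{MAJ}_m(x)$ with probability $\tfrac12+\Omega(1)$. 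Granting such a family, a PD-QLS solver making $T$ calls to $\P_{A(x)}$ --- plus $\O(1)$ calls to $\U_\b$, free since $\b$ is fixed --- becomes a bounded-error quantum algorithm for $\mathrm{MAJ}_m$ with $\O(T)$ calls to $O_x$, so $Q(\mathrm{MAJ}_m)=\Omega(m)$ forces $T=\Omega(m)=\Omega(\min(\kappa,N))$; independence of the access model for $\b$ is automatic, and every constant $\varepsilon<1$ is reached by tuning the majority threshold in (iii), exactly as in the last paragraph of the proof of \prop{lower_bound_Grover}. Applying the same reduction to the rescaled matrix $A(x)/d$ (same condition number, same solution, sparsity $d=\O(1)$) --- whose sparse access, via the Childs reduction, furnishes a \emph{normalised} block-encoding --- also reproves item~2.

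The main obstacle is the construction of $A(x)$ in item~3. A single computational-basis measurement can expose $\mathrm{MAJ}_m(x)$ with constant advantage only if the distribution of $\ket{A(x)^{-1}\b}$ changes sharply as the Hamming weight $\snorm x$ crosses $m/2$; but $A(x)$ must stay $\kappa$-conditioned for every $x$, so this sharp change cannot come from $A(x)$ becoming singular and must instead be synthesised from the cumulative effect of the $m$ locally-defined entries, each acting through a condition-number budget of order $N$, while keeping the sparsity $\O(1)$. Reconciling positive-definiteness, the $\kappa$-bound, constant sparsity, and a constant-gap majority signal is the technical crux; once a suitable instance family is in hand, everything else --- the oracle simulations, the transfer of the $\Omega(m)$ lower bound, and the bookkeeping of the precision parameter --- is routine.
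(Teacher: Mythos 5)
Your items~1 and~2 are correct, but they take a genuinely different route than the paper, which reduces all three bounds from the \textsc{PromiseMajority}$_M$ problem (\app{lowerbounds}). Your perturbation-plus-hybrid argument is more elementary: for item~1 it is essentially the folklore argument the paper itself gestures at in the last paragraph of \sec{lower_bound} (``an initial small difference between two input states $\b$ and $\b'$ can be magnified $\kappa$-fold\ldots''). For item~2 your observation that the canonical $2\times2$ block-encoding is Lipschitz in $A$ when $\snorm{A}\le 1/2$ is sound and cleanly isolates why normalisation matters; the paper instead re-uses the majority reduction, encoding the bit string $y$ as a diagonal conjugation $A'=DAD$ with $D_{ii}=(-1)^{y_i}$. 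What the paper's route buys is uniformity --- one family of hard instances handles all three oracles at once, and the $\Omega(\kappa)$ bound is obtained directly from Nayak--Wu rather than via the BBBV hybrid.

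For item~3, however, there is a genuine gap, and you have correctly identified it yourself: you describe the desiderata for $A(x)$ but do not exhibit it, and this construction is the entire technical content of the hardest case. Reconciling constant sparsity, positive-definiteness, a $\kappa$-bound, and a constant-advantage majority signal is exactly what the paper's Case~3 solves, and it is not routine. The trick in the paper is to take $A = I - (1-\epsilon)B'$ where $B$ is the adjacency matrix of a $d$-regular, constant-sparsity, constant-spectral-gap \emph{expander} graph (padded with a reference coordinate), and to encode $y$ only in the \emph{signs} of the entries via $A'=DAD$, $D_{ii}=(-1)^{y_i}$; each non-zero $A'_{ij}=(-1)^{y_i+y_j}A_{ij}$ then depends on exactly two input bits, so $\P_{A'}$ is simulated with $\O(1)$ calls to the bit oracle. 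The constant spectral gap of $B$ is what lets one bound $\snorm{\sum_t (1-\epsilon)^t(B^t-K^t)}\le 1/\delta(B)=\O(1)$, so $A^{-1}$ differs from the rank-one ideal $I+\frac{1-\epsilon}{\epsilon}K'$ only by an $\O(1)$ perturbation --- which in turn is why a single swap test against $\ket{``+"}$ reads out the majority with constant advantage. Without this expander mechanism (or some substitute with the same quantitative spectral control), your reduction does not close, because a constant-sparsity $A(x)$ has no mechanism to make $A(x)^{-1}$ behave like the dense averaging operator $K$ that the majority read-out requires.
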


In the main text we prove a weaker result using a reduction to the quantum search problem, which has a query complexity in $\Omega(\sqrt{N/M})$, where $M\in [N] := \{1,\ldots,N\}$ is the number of marked elements; here, we use instead a reduction to a ``promise majority'' problem, which has a query complexity in $\Omega(N/M)$, where $M \in [N]$ is the margin of the majority. As a result, we can prove that solving a PD-QLS has linear scaling of the query complexity in the condition number for all $\kappa \in \O(N)$. To prove \prop{extended_bounds}, we first introduce the \textsc{PromiseMajority}$_M$ problem as follows.

\begin{definition} [\textsc{PromiseMajority}$_M$]
Given a vector $y \in \{0,1\}^N$, a value $M \in [N]$ (we assume for simplicity that $N+M$ is even) and given the promise that we either have 
\begin{itemize}[left=18mm,itemsep=-.8\parsep]
\item[(Case 0)] $y_i = 0$ for $N/2 + M/2$ of the entries 
\item[(Case 1)] $y_i = 1$ for $N/2 + M/2$ of the entries 
\end{itemize}
the \textsc{PromiseMajority}$_M$ problem consists in determining which of the two is the case. 
\end{definition}

We assume that we have access to $y$ via a quantum oracle $\P_y$ that acts as $\P_y \ket{i,z} = \ket{i,z\oplus y_i}$ for all $i\in [N]$ and for $z\in \{0,1\}$. We also remind that the two-sided bounded-error quantum query complexity $\mathcal{Q}_2$ of a boolean function is defined as the minimum number of accesses to the input of the function (i.e., to $\P_y$) that are necessary to correctly output the value of the function with probability at least $2/3$, both for the positive and for the negative instances. Then we have the following Lemma:

\begin{lemma}
The two-sided bounded-error quantum query complexity $\mathcal{Q}_2$ of \textsc{PromiseMajority}$_M$, in terms of accesses to $\P_y$, is $\mathcal{Q}_2(\textsc{PromiseMajority}_M) \in \Omega(N/M)$.
\end{lemma}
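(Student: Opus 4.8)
The plan is to prove this lower bound with Ambainis's quantum adversary method rather than with the polynomial method: a direct symmetrization of the acceptance polynomial only yields $\Omega(\sqrt{N/M})$, since the ``hard'' transition between the two promised Hamming weights lies near the middle of $[0,N]$, where Markov--Bernstein estimates are quadratically weaker than at the endpoints. The adversary method sidesteps this and gives the tight $\Omega(N/M)$, which is the quantum analogue of the classical approximate-counting obstruction.

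Concretely, I would take $X$ to be the set of Case-1 inputs, i.e.\ strings $x\in\{0,1\}^N$ of Hamming weight $(N+M)/2$, and $Y$ the set of Case-0 inputs, i.e.\ strings of weight $(N-M)/2$ (both integers since $N+M$ is even). Define the relation $R\subseteq X\times Y$ by $(x,y)\in R$ iff $y\le x$ bitwise; equivalently, $x$ is obtained from $y$ by flipping exactly $M$ of the $(N+M)/2$ zeros of $y$ to ones. Since $\textsc{PromiseMajority}_M$ evaluates to $1$ on the $X$-part and to $0$ on the $Y$-part of every pair, and related pairs differ in exactly $M\ge1$ positions, $R$ is an admissible adversary relation.

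I would then compute the four adversary parameters. For fixed $x\in X$, the related $y$'s are obtained by choosing which $M$ of the $(N+M)/2$ ones of $x$ to switch off, so $m=\binom{(N+M)/2}{M}$; by symmetry (choosing $M$ of the $(N+M)/2$ zeros of $y$ to switch on) $m'=\binom{(N+M)/2}{M}$. For the per-coordinate counts, fix $x\in X$ and a position $i$: if $x_i=0$ then every $y\le x$ agrees with $x$ at $i$ and contributes nothing, whereas if $x_i=1$ the related $y$'s differing from $x$ at $i$ are exactly those for which $i$ is among the $M$ switched-off ones, of which there are $\binom{(N+M)/2-1}{M-1}$; hence $\ell=\binom{(N+M)/2-1}{M-1}$, and symmetrically $\ell'=\binom{(N+M)/2-1}{M-1}$. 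Invoking the adversary bound $\mathcal{Q}_2(f)\in\Omega\!\big(\sqrt{mm'/(\ell\ell')}\big)$ and the elementary identity $\binom{n}{k}=\tfrac{n}{k}\binom{n-1}{k-1}$ with $n=(N+M)/2$, $k=M$,
\al{
	\sqrt{\frac{mm'}{\ell\ell'}}
	= \frac{\binom{(N+M)/2}{M}}{\binom{(N+M)/2-1}{M-1}}
	= \frac{N+M}{2M}
	\;\ge\; \frac{N}{2M},
}
which yields $\mathcal{Q}_2(\textsc{PromiseMajority}_M)\in\Omega(N/M)$ as claimed.

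The one slightly delicate point is the coordinate-wise count defining $\ell$: it is essential to observe that positions where $x$ carries a $0$ contribute no related $y$ differing there, so that $\ell$ is only $\binom{n-1}{k-1}$ rather than something larger; this is precisely what makes the ratio $m/\ell$ collapse to $n/k=(N+M)/(2M)$. Everything else — admissibility of $R$, the symmetries $m=m'$ and $\ell=\ell'$, and the binomial identity — is routine, so I do not anticipate any real difficulty beyond bookkeeping.
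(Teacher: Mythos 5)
Your adversary-method proof is correct. The relation $R=\{(x,y): y\le x \text{ bitwise}\}$ between the weight-$(N+M)/2$ strings $X$ and the weight-$(N-M)/2$ strings $Y$ gives $m=m'=\binom{(N+M)/2}{M}$ and $\ell=\ell'=\binom{(N+M)/2-1}{M-1}$, whence $\sqrt{mm'/(\ell\ell')}=(N+M)/(2M)\ge N/(2M)$, and the unweighted Ambainis bound applies since related pairs differ in exactly $M\ge 1$ coordinates. This is a genuinely different route from the paper's proof, which is a one-line citation of Nayak and Wu's Corollary~1.2, a polynomial-method result; your argument is self-contained and does not require importing that machinery, so it is a reasonable and arguably more elementary alternative.

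However, the remark with which you motivate the switch to the adversary method has the Bernstein--Markov dichotomy exactly reversed. You claim that direct symmetrization ``only yields $\Omega(\sqrt{N/M})$'' because the transition sits near the middle of $[0,N]$, ``where Markov--Bernstein estimates are quadratically weaker than at the endpoints.'' The opposite is true. For a degree-$d$ polynomial bounded by $1$ on $[0,N]$, Bernstein's inequality bounds the derivative at an interior point $\xi$ by $\O\big(d/\sqrt{\xi(N-\xi)}\big)$, which near $\xi\approx N/2$ is $\O(d/N)$; a jump of order $1$ over a width-$M$ window centred at $N/2$ therefore already forces $d\in\Omega(N/M)$. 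It is near the \emph{endpoints}, where only the weaker Markov bound $|p'|\in\O(d^2/N)$ is available, that a width-$M$ transition forces merely $d\in\Omega(\sqrt{N/M})$. Since the promise-majority gap lives at $N/2$, the polynomial method is already in its strong regime (modulo the usual technicality of passing from boundedness on integers to boundedness on the real interval, which is what Nayak--Wu handle); your adversary argument is correct and welcome, but it is not circumventing an obstruction that the polynomial method actually faces for this problem.
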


\begin{proof}
This follows immediately from Ref.~\cite[Corollary 1.2]{Nayak99}.
\end{proof}

We now show that (relativising) PD-QLS solving algorithms can be used to compute \textsc{PromiseMajority}$_M$; the lower bound on the query complexity of \textsc{PromiseMajority}$_M$ directly translates into a lower bound on the query complexity of the PD-QLS solvers. We will prove separately the three cases of \prop{extended_bounds}, with each proof building upon the previous ones.

\begin{proof}\emph{Case 1.}

We assume that $y\in \{0,1\}^N$ is in the domain of \textsc{PromiseMajority}$_M$, i.e.\ $y$ either contains exactly $N/2+M/2$ zeros or $N/2+M/2$ ones, and we define $\b \in \mathds{C}^{N+1}$:
\al{
\left\lbrace
\begin{array}{l}
	b_i = (-1)^{y_i} \qquad \qquad \t{for} \ i\in [N] \\[2mm]
	b_{N+1} = \sqrt{N+M}
\end{array}
\right.
\label{eq:b_reference}
}
where the value $b_{N+1}$ is fixed to provide a ``phase reference'' and avoid ambiguity on the global sign. We have $\ket{\b} = \b/\sqrt{2N+M}$ and $\ket{\b}$ can be implemented by first preparing a state proportional to $(1,\ldots,1, \sqrt{N+M})$ and then applying the correct phases; this can be done with one oracle call to each of $\P_y$ and $\P_y^\dag$, via the transformations 
\al{
	\ket{i}
	\ \overset{\t{ancilla}}{\mapsto} \ 	 
	\ket{i,0} 
	\ \overset{\P_y}{\mapsto} \ 
	\ket{i,y_i} 
	\ \overset{\Id\otimes Z}{\mapsto} \
	(-1)^{y_i} \ket{i,y_i}
	\ \overset{\P_y^\dag}{\mapsto} \
	(-1)^{y_i} \ket{i,0}
	\ \overset{\t{discard}}{\mapsto} \
	(-1)^{y_i} \ket{i}
	\label{eq:transformations}
}
and extended by linearity to superpositions. Next, we introduce the vector $\boldsymbol{1}_{N} := (1, \ldots, 1)^T$ containing $N$ ones and then define
\al{
	K' \equiv K \oplus 0 := \frac{1}{N} \boldsymbol{1}_{N} \boldsymbol{1}_{N}^{\,T} \oplus 0
	\label{eq:K}
}
as a matrix of size $(N+1) \times (N+1)$, so that $K'^{\,2} = K'$, and finally 
\al{
	A := \Id - (1-\epsilon) K' 
	\label{eq:A_I-K}
}
where $\epsilon$ is a small parameter that we will define shortly. The matrix $A$ can be used as a coefficient matrix for a PD-QLS solver since $A$ is positive definite and $\norm{A}=1$. Moreover, the condition number of $A$ is exactly $\kappa(A) = 1/\epsilon$. 

Next we have:
\al{
	A^{-1} & = \Id + 
	\sum_{t=1}^\infty \big((1-\epsilon) K' \big)^t \\
	& = \Id + \frac{1-\epsilon}{\epsilon} K',
}
where the summation converges since $\norm{(1-\epsilon) K} < 1$. Let's apply $A^{-1}$ to $\b$:
\al{
	A^{-1} \b 
	& =
	\begin{cases}
	\b + \frac{1-\epsilon}{\epsilon} \frac{M}{N}\, \boldsymbol{1}_{N}' 
	& \t{if~} y \t{~has~a~majority~of~0} \\
	\b - \frac{1-\epsilon}{\epsilon} \frac{M}{N}\, \boldsymbol{1}_{N}' 
	& \t{if~} y \t{~has~a~majority~of~1}
	\end{cases} 
	\label{eq:x_ideal} \\
	& =
	\begin{cases}
	\b + \boldsymbol{1}_{N}' 
	& \qquad \ \ \, \t{if~} y \t{~has~a~majority~of~0} \\
	\b - \boldsymbol{1}_{N}' 
	& \qquad \ \ \, \t{if~} y \t{~has~a~majority~of~1}
	\end{cases}
}
where we have introduced $\boldsymbol{1}_N' := (1,\ldots,1,0)^T$ and we have chosen $\epsilon$ so that $\frac{1-\epsilon}{\epsilon} \frac{M}{N} = 1$, giving $\kappa(A) = \frac{1}{\epsilon} = \frac{N+M}{M}$. Introducing a boolean value $f =  \textsc{PromiseMajority}_M(y)$, i.e.\ $f \in \{0,1\}$ is equal to the majority of $y$, we can rewrite the vector $A^{-1}\b$ entry-wise as
\al{
\left\lbrace
\begin{array}{llc}
	{[A^{-1} \b]_i} & = (-1)^f \cdot 2 & 
	\qquad \t{if~} i \t{~is~such~that~} y_i = f \\
	{[A^{-1} \b]_i} & = 0 & \qquad \t{if~} i \t{~is~such~that~} y_i \neq f \\
	{[A^{-1} \b]_{N+1}} & = \sqrt{N+M} & 
\end{array}
\right.
}
Then we have $\ket{A^{-1}\b} = A^{-1}\b /\norm{A^{-1}\b}$, with 
\al{
	\norm{A^{-1}\b}^2 = 
	2^2 \,\frac{N+M}{2} + \sqrt{N+M}^{\,2} = 3(N + M)
}
so that we get
\al{
	\ket{A^{-1}\b} = 
	\sqrt{\frac{1}{3}} \ket{N+1} 
	\, + \, (-1)^f 
	\sqrt{\frac{2}{3}}  
	\sum_{i: y_i = f}\sqrt{\frac{2}{N+M}} \ket{i} .
	\label{eq:two_states}
}
We then perform a projective measurement where one of the possible measurement outcomes is 
\al{
	\ket{``+"} := 
	\sqrt{\frac{1}{2}} \ket{N+1} 
	\, + \,
	\sqrt{\frac{1}{2}}  
	\sum_{i=1}^{N} \frac{1}{\sqrt{N}}\ket{i} .
	\label{eq:plus_state}
}
The cases $f=0$ and $f=1$ in Eq.~\eqref{eq:two_states} can be distinguished with constant advantage, since:
\al{
	\braket{``+"}{A^{-1}\b} 
	& = 
	\sqrt{\frac{1}{6}} 
	\, + \, (-1)^f
	\frac{N+M} {2}  \sqrt{\frac{1}{3}} \sqrt{\frac{2}{N(N+M)}}\\
	& = 
	\frac{1}{\sqrt{6}} 
	\left( 
	1+ (-1)^f \sqrt{1+M/N}
	\right) .
}
Note that the two cases can still be distinguished with constant probability if we replace $\ket{\x} = \ket{A^{-1}\b}$ with any approximation $\rho_\x$ which is sufficiently close to it.

To summarise, suppose we have to solve a \textsc{PromiseMajority}$_M$ problem and that we can exploit as a subroutine an oracular quantum algorithm $\mathcal{A}$ that, given access to $\U_\b$, prepares the state $\ket{A^{-1}\b}$ with sufficiently high precision; suppose moreover that $\mathcal{A}$ has a query complexity $Q[\U_\b] =  g(\kappa)$, for some function $g: \mathds{R}^+ \rightarrow \mathds{N}$. Then, $\mathcal{A}$ can be used to prepare the state in Eq.~\eqref{eq:two_states} and solve \textsc{PromiseMajority}$_M$ with constant distinguishing advantage. The $\P_y$-query complexity of $\mathcal{A}$ is $Q[\P_y] = 2\, Q[\U_\b] = 2\,g\big(\kappa(A)\big) = 2\,g\!\left(\frac{N+M}{M}\right)$. The lower bound $\mathcal{Q}_2(\textsc{PromiseMajority}_M) \in \Omega(N/M)$ then directly implies $g(\kappa) \in \Omega\big(\min (\kappa, N) \big)$.

\end{proof}

\begin{proof}\emph{Case 2.}

We modify the construction given in the previous proof and encode the input $y$ in the entries of the coefficient matrix, with the goal of showing that $Q[\U_A] \in \Omega\big(\min(\kappa, N) \big)$. To this end, we define the vector $\textbf{u} \in \mathds{R}^{N+1}$ and a diagonal matrix $D \in \mathds{R}^{(N+1) \times (N+1)} $
\al{
\left\lbrace
\begin{array}{ll}
	u_i = 1 & \t{for} \ i\in [N] \\[2mm]
	u_{N+1} = \sqrt{N+M} &
\end{array}
\right.
\hspace{1cm}	
\left\lbrace
\begin{array}{ll}
	D_{i,i} = (-1)^{y_i} & \t{for} \ i\in [N] \\[2mm]
	D_{N+1,N+1} = 1 &
\end{array}
\right.
}
and notice the vector $\b$ in Eq.~\eqref{eq:b_reference} satisfies $\b = D \textbf{u}$. We also define the coefficient matrix $A'$
\al{
	A' := D A D
}
where $A$ is given in Eq.~\eqref{eq:A_I-K}. Note that $D$ is unitary and self-inverse, hence $A'$ is positive definite, $\kappa(A') = \kappa(A)$, and moreover $A'^{-1} = D^{-1} A^{-1} D^{-1} = D A^{-1} D$. 

It is possible to implement exactly (i.e., ideally with zero error) a normalised matrix block of $A'$ using at most $4$ calls to $\P_y$. First, we consider the unitary $\t{Had}'$ that prepares the state $\ket{\boldsymbol{1}'} = \ket{(1,1,\ldots,1,0)^T}$, that is $\t{Had}\ket{0} = \ket{\boldsymbol{1}'}$. Then, the matrix 
\al{
	\U_A :=
	\left(
	\begin{array}{cc}
	\!\t{Had}\! & 0 \\
	0 & \!\t{Had}\!
	\end{array}
	\right)
	\left(
	\begin{array}{cc}
	\Id - (1-\epsilon) \proj{0} & -\sqrt{1-(1-\epsilon)^2} \proj{0} \\
	\sqrt{1-(1-\epsilon)^2} \proj{0}    & \Id - (1-\epsilon) \proj{0}
	\end{array}
	\right)
	\left(
	\begin{array}{cc}
	\!\t{Had}^\dag\! & 0 \\
	0 & \!\t{Had}^\dag\!
	\end{array}
	\right)	
}
is a normalised matrix-block-encoding of $A = \Id - (1-\epsilon) K'$. Note that the matrix in the centre can be interpreted as the $\proj{0}$-controlled version of the Pauli-$X$ rotation $e^{i\theta X}$ (with $\cos \theta = - (1-\epsilon)$) and is thus efficiently implementable. The operations in Eq.~\eqref{eq:transformations} correspond to a unitary quantum circuit that can be written as $D \oplus \U$, for some unitary $\U$, and finally we obtain that $\U_{A'} := (D\oplus \U) \, \U_A \, (D\oplus \U^\dag)$ is a matrix-block-encoding of $A'$.

We now consider the linear system $A'\x = \textbf{u}$ and a quantum algorithm that prepares the corresponding solution state
\al{
	\ket{A'^{-1}\,\textbf{u}} 
	=
	\ket{DA^{-1}D\,\textbf{u}}
	=
	D \ket{A^{-1}\,\textbf{b}} .
}
The state $D\ket{A^{-1}\,\textbf{b}}$ can be transformed into $\ket{A^{-1}\,\textbf{b}}$ using the steps given in Eq.~\eqref{eq:transformations}, which only requires two extra accesses to $\P_y$. This state allows to solve \textsc{PromiseMajority}$_M$ with constant probability and thus the same considerations made in the preceding proof yield the result $Q[\U_A] \in \Omega\big(\min(\kappa, N) \big)$.

\end{proof}

\begin{proof}\emph{Case 3.}

We start proving again a lower bound on the query complexity $Q[\U_\b]$, as in Case 1., but for a PD-QLS where the coefficient matrix $A$ is sparse; then, we use the method used in the proof of Case 2.\ to convert it into a lower bound on $Q[\P_A]$.

Given $y \in \{0,1\}^N$ satisfying the \textsc{PromiseMajority}$_M$ condition, we introduce the known term vector $\b \in \mathds{R}^{N+1}$
\al{
\left\lbrace
\begin{array}{l}
	b_i = (-1)^{y_i} \qquad \qquad \qquad \t{for} \ i\in [N] \\[2mm]
	b_{N+1} = \sqrt{N}\, c_0
\end{array}
\right.
\label{eq:b_reference2}
}
where $c_0$ is a positive constant that we will fix later, and we have $\ket{\b} = \b \Big/ \sqrt{N (1+ c_0^2)}$. 

Next, we define $B' \in \mathds{R}^{(N+1) \times (N+1)}$ as 
\al{
	B' = B \oplus 0  
}
where $B \in \mathds{R}^{N \times N}$ is a symmetric sparse matrix, having $d$ non-zero entries in each row and column which are all equal to $\frac{1}{d}$, for some constant $d$. Since each row and column of $B$ sums to one, $B$ can be interpreted as the adjacency matrix of a Markov chain on a $d$-sparse graph. We require that the graph corresponding to $B$ is not bipartite and that the spectral gap of $B$ is large (i.e., the Markov chain is ergodic and rapidly mixing). These properties guarantee that $B^t$ quickly converges to $K = \frac{1}{N} \boldsymbol{1}_N\boldsymbol{1}_N^{\,T}$ for $t \rightarrow \infty$. Since $B$ is symmetric, its spectrum is real and, because of the Perron-Froebenius theorem~\cite{Pillai05}, the spectrum is contained in the interval $[-1,+1]$ and includes an eigenvalue $\lambda = 1$ with multiplicity one; the fact that $B^t$ converges to $\frac{1}{N}\boldsymbol{1}_N \boldsymbol{1}_N^{\,T}$ implies that $-1$ cannot be an eigenvalue of $B$. We then define the spectral gap $\delta(B)$ as the positive parameter $\delta(B) := \min_{\lambda\neq 1} \{1 - |\lambda|\}$, where the minimum is taken over the eigenvalues of $B$. 

There are families of so-called \emph{expander graphs} such that both the sparsity and the spectral gap are constant, see~\cite[Chapter 21]{Arora09}. We assume that $B$ belongs to one of these expander families and thus, in particular, there is a (known) positive constant $c_1$ such that
\al{
	\frac{1}{\delta(B)} \leq c_1
}
for all sizes $N \in \mathds{N}$. Moreover, $\boldsymbol{1}_N$ is the unique $+1$ eigenvector and hence, from the spectral decomposition of $B$, we can write
\al{
	B 
	= \frac{1}{N}\boldsymbol{1}_N \boldsymbol{1}_N^{\,T} 
	+ \sum_{\lambda\neq 1} \lambda \, \textbf{v}_\lambda \textbf{v}_\lambda^{\,\dag} 
	=  K + R \,.
}
Here $K = \frac{1}{N} \boldsymbol{1}_{N} \boldsymbol{1}_{N}^{\,T}$, $\textbf{v}_\lambda$ are normalised eigenvectors of $B$, and thus $R \in \mathds{R}^{N\times N}$ is a matrix of rank $N-1$ with $\norm{R} = 1-\delta(B)$ and $KR = RK = 0$. Then we define:
\al{
	A := \Id - (1-\epsilon) B'
	\label{eq:A_I-K'}
}
for some $\epsilon>0$ that we will define shortly. The matrix $A$ can be used as a coefficient matrix in a PD-QLS solver since it is positive definite and with norm one. Moreover, $A$ is $(d+1)$-sparse (where $d$ is the sparsity of $B$, which is constant) and the condition number of $A$ is exactly $\kappa(A) = 1/\epsilon$.

Next we have
\al{
	A^{-1} 
	& = 
	\Id 
	+ \sum_{t=1}^{\infty} \big((1-\epsilon) B' \big)^t
	\equiv \Id + \mathcal{B}
}
and then defining $\mathcal{K} := \sum_{t=1}^\infty \big((1-\epsilon) K' \big)^t = \frac{1-\epsilon}{\epsilon} K'$ we get
\al{
	\norm{\mathcal{B} - \mathcal{K}} 
	\label{eq:B-K}
	& \leq 
	\sum_{t=1}^\infty 
	\norm{\big[(1-\epsilon) B \big]^t - \big[(1-\epsilon) K\big]^t} \\
	& = 
	\sum_{t=1}^\infty 
	(1-\epsilon)^t \norm{ (K+R)^t - K^t} \\
	& = 
	\sum_{t=1}^\infty 
	(1-\epsilon)^t \norm{ (K+R^t) - K} \\
	& \leq 
	\sum_{t=1}^\infty 
	(1-\delta(B))^t	\\
	& \leq
	\frac{1}{\delta(B)}
	\leq c_1
	\label{eq:B-K_last}
}
where we have used $\norm{R} = 1-\delta(B)$, $K^2 = K$ and $KR = RK = 0$. 

Applying $A^{-1} = \Id+\mathcal{B}$ to $\b$ we thus obtain:
\al{
	A^{-1} \b 
	& =
	\big[ (\Id + \mathcal{B} - \mathcal{K}) + \mathcal{K} \big] \b \\
	& = 
	(\Id + \mathcal{B} - \mathcal{K}) \b' + 
	\begin{cases}
	b_{N+1}\textbf{e}_{N+1} + \frac{1-\epsilon}{\epsilon} \frac{M}{N}\, \boldsymbol{1}_{N}' 
	& \ \t{if~} y \t{~has~a~majority~of~0} \\
	b_{N+1}\textbf{e}_{N+1} - \frac{1-\epsilon}{\epsilon} \frac{M}{N}\, \boldsymbol{1}_{N}' 
	& \ \t{if~} y \t{~has~a~majority~of~1}
	\end{cases} 
	\label{eq:x_with_error} \\
	& = 
	(\Id + \mathcal{B} - \mathcal{K}) \b' + 
	\begin{cases}
	b_{N+1}\textbf{e}_{N+1} + c_0 \boldsymbol{1}_{N}' 
	& \qquad \; \t{if~} y \t{~has~a~majority~of~0} \\
	b_{N+1}\textbf{e}_{N+1} - c_0 \boldsymbol{1}_{N}' 
	& \qquad \; \t{if~} y \t{~has~a~majority~of~1}
	\end{cases}
}
where $\b'$ is a vector equal to the first $N$ entries of $\b$ (and $b'_{N+1} =0$), $\boldsymbol{1}_N' := (1,\ldots,1,0)^T$ while $\textbf{e}_{N+1}$ is the vector having a one in position $N+1$; moreover, we choose $\epsilon$ such that $\frac{1-\epsilon}{\epsilon} \frac{M}{N} = c_0$, where $c_0$ was introduced in the definition of $\b$ in Eq.~\eqref{eq:b_reference2}. Then, fixing the constant $c_0$ as $c_0 = 100 c_1$ and using the triangle inequality, we obtain the following upper bound
\al{
	\mathcal{\sqrt{N}} = \norm{A^{-1}\b}
	& \leq 
	\Big(
	\overbrace{c_0^2 N }^{|b_{N+1}|^2}
	+ \!\!\!
	\overbrace{c_0^2 N }^{\norm{c_0 \boldsymbol{1}_{N}'}^2}
	\hspace{-1.5mm}\Big)^{1/2}
	+ 
	\overbrace{\sqrt{N}\, [1+c_1 ]}^{\geq\,\norm{(\Id + \mathcal{B} - \mathcal{K}) \b' }}  
	\\
	& \leq 
	\left(\sqrt{c_0^2 + c_0^2} +  2\,c_1 \right) 
	\sqrt{N} 
	\ = \
	\left(100\sqrt{2} + 2\right) c_1 \, \sqrt{N} \\ 
	& \leq
	144\,c_1 \, \sqrt{N} ,
}
where we have assumed $c_1 \geq 1$. We also have the lower bound
\al{
	\sqrt{\mathcal{N}} = \norm{A^{-1}\b} 
	& \geq 
	\Big(
	\overbrace{c_0^2 N }^{|b_{N+1}|^2}
	+ \!\!\!
	\overbrace{c_0^2 N }^{\norm{c_0 \boldsymbol{1}_{N}'}^2}
	\hspace{-1.5mm}\Big)^{1/2}
	- 
	\overbrace{\sqrt{N}\, [1 + c_1 ]}^{\geq\,\norm{(\Id + \mathcal{B} - \mathcal{K}) \b' }}  
	\\
	& \geq		
	\left(100\sqrt{2} - 2\right) c_1 \, \sqrt{N} \\ 
	& \geq
	139\,c_1 \, \sqrt{N} .
}
Then we have, using $f = \textsc{PromiseMajority}_M(y)$,
\al{
	\ket{A^{-1}\b} 
	= 
	\frac{A^{-1}\b}{\norm{A^{-1}\b}} 
	& = 
	\frac{c_0\,\sqrt{N}}{\sqrt{\N}}
	\ket{N+1} 
	\, + \, 
	(-1)^f 
	\frac{c_0\,\sqrt{N}}{\sqrt{\N}}
	\sum_{i = 1}^N\frac{1}{\sqrt{N}} \ket{i} 
	+ \ket{\psi} 
	\label{eq:line1}\\
	& = 
	\frac{100}{144} \ket{N+1} 
	\, + \, (-1)^f 
	\frac{100}{144}
	\sum_{i = 1}^N\frac{1}{\sqrt{N}} \ket{i} 
	+ \ket{\psi'} ,
	\label{eq:line2}
}
where $\ket{\psi} := (\mathcal{B} - \mathcal{K}+\Id) \b'/\mathcal{\sqrt{N}} $ is a sub-normalised perturbation vector with $\big\vert\big\vert \ket{\psi} \big\vert\big\vert \leq \frac{2}{139}$, while subtracting line~\eqref{eq:line2} from line~\eqref{eq:line1} one obtains
\al{
	\Big\vert\Big\vert\,
	\ket{\psi'} - \ket{\psi}\,\Big\vert\Big\vert \leq 
	\sqrt{2}
	\left( 
	\frac{100}{139} - \frac{100}{144} 
	\right)
	\leq 0.04
}
and then we have:
\al{
	\Big\vert\Big\vert\,\ket{\psi'}\,\Big\vert\Big\vert 
	\leq 0.04 + \frac{2}{139} \leq 0.06 \,.
}

The cases $f=0$ and $f=1$ in Eq.~\eqref{eq:line2} can be distinguished with constant advantage using the swap test with the state $\ket{``+"}$ defined in Eq.~\eqref{eq:plus_state}, since we have:
\al{
	& ~~\;\braket{``+"}{A^{-1}\b} 
	\, = \,
	\frac{1}{\sqrt{2}} \frac{100}{144} 
	\, + \,
	(-1)^f \frac{1}{\sqrt{2}} \frac{100}{144}
	\, + \, 
	\braket{``+"}{\,\psi'\,} \\
	\Longleftrightarrow \quad &
	\begin{cases}
	\big|\, \braket{``+"}{A^{-1}\b} \,\big| \,\geq\, 0.92 
	& \t{if~} y \t{~has~a~majority~of~0} \\
	\big|\, \braket{``+"}{A^{-1}\b}\, \big| \,\leq\, 0.06 
	& \t{if~} y \t{~has~a~majority~of~1}\,.
	\end{cases}
}

Next, we proceed as in the proof of Case 2.\ and define an equivalent PD-QLS where the vector $y$ is encoded in the entries of the coefficient matrix. We thus define the vector $\textbf{u} \in \mathds{R}^{N+1}$ and the diagonal matrix $D \in \mathds{R}^{(N+1) \times (N+1)} $
\al{
\left\lbrace
\begin{array}{ll}
	u_i = 1 & \t{for} \ i\in [N] \\[2mm]
	u_{N+1} = \sqrt{N}\, c_0 &
\end{array}
\right.
\hspace{1cm}	
\left\lbrace
\begin{array}{ll}
	D_{i,i} = (-1)^{y_i} & \t{for} \ i\in [N] \\[2mm]
	D_{N+1,N+1} = 1 &
\end{array}
\right.
}
and thus the identity $\b = D \textbf{u}$ holds. We then introduce $A'$, given by
\al{
	A' := D A D
}
where $A$ is as in Eq.~\eqref{eq:A_I-K'}. The matrix $D$ is unitary and self-inverse, hence $\kappa(A') = \kappa(A)$, and $A'^{-1} = D^{-1} A^{-1} D^{-1} = D A^{-1} D$. 

Notice that the position of the non-zero entries of $A'$ are the same as in $A$ and thus independent from $y$, while the sign of a non-zero entry $A'_{i,j} = (-1)^{y_i + y_j}$ can be obtained querying $\P_y$ once with input $\ket{i}$ and once with input $\ket{j}$. These queries can be performed in quantum superposition and thus two accesses to $\P_y$ are sufficient to implement a quantum sparse-matrix-access $\P_A'$. Therefore it is possible to prepare, with the same $\P_y$-complexity as discussed previously, the state
\al{
	\ket{A'^{-1}\,\textbf{u}} 
	=
	\ket{DA^{-1}D\,\textbf{u}}
	=
	D \ket{A^{-1}\,\textbf{b}}\,.
}
Finally, we can obtain $\ket{A^{-1}\,\textbf{b}}$ using two extra accesses to $\P_y$ by applying the transformations given in Eq.~\eqref{eq:transformations}. This proves that a quantum algorithm that solves the PD-QLS having access to $\P_A'$ necessarily has a query complexity $Q[\P_{A'}] \in \Omega\big(\min(\kappa, N) \big)$.

\end{proof}

\section{Scaling of the normalisation factor of the matrix-block-encoding}
\label{app:polynomial}
\setcounter{equation}{0}

In this Appendix, we consider the polynomial
\al{
\label{eq:app_polynomial}
	P_{2\ell-1,\kappa}(x) 
	:= 
	\frac{1}{1-x} \left[1 - \hat{\T}_{\ell,\kappa}(x) \right]^2
}
as was defined in Eq.~\eqref{eq:polynomial} and where we have
\al{
	\hat{\T}_{\ell,\kappa}(x)
	:=
	\frac{\T_\ell \left(\frac{x + \frac{1}{2\kappa}}{1-\frac{1}{2\kappa}} \right)}
	     {\T_\ell \left(\frac{1 + \frac{1}{2\kappa}}{1-\frac{1}{2\kappa}} \right)} \;.
}
We prove that the normalisation factor $K:= 2\,\max_{x\in [-1,+1]} P_{2\ell-1,\kappa}(x)$ satisfies $K \in \Theta(\kappa)$, provided that $\ell \geq c \sqrt{\kappa}$ for some constant $c$ that we will determine later.

Notice that by construction $P_{2\ell-1,\kappa}(1) = 0$ and the polynomial is positive for $x\in[-1,+1)$. To study the properties of the local maxima of $P_{2\ell-1,\kappa}(x)$ in the interval $[-1,+1]$ we compute the derivative of $P(x)$ using the property $\frac{\partial}{\partial x} \T_\ell(x) = \ell\,\U_{\ell-1}(x)$, where $\U_\ell(x) \in \mathds{R}_\ell[x]$ is a Chebyshev polynomial of the second kind. We have:
\al{
	\frac{\partial P_{2\ell-1,\kappa}(x)}{\partial x}
	=
	\frac{\left[1 - \hat{\T}_{\ell,\kappa}(x) \right]^2}{(1-x)^2} 
	\; - \;
	2 \ell \, 
	\frac{1 - \hat{\T}_{\ell,\kappa}(x)}{1-x}
	\frac
	{
		\U_{\ell-1} 
		\left(
		\frac{x + \frac{1}{2\kappa}}
             {1 - \frac{1}{2\kappa}}
        \right)
	}
	{
		\left( 1-\frac{1}{2\kappa} \right)
		\T_{\ell}
		\left(
		\frac{1 + \frac{1}{2\kappa}}
			 {1 - \frac{1}{2\kappa}}
        \right)
	}
	\;.
}
We set the derivative equal to $0$ and simplify the expression assuming $1-x \neq 0$ and $1-\hat{\T}_{\ell,\kappa}(x) \neq 0$:
\al{
	\left( 1-\frac{1}{2\kappa} \right)
	\left[	
	\T_{\ell}\!
	\left(
	\frac{1 + \frac{1}{2\kappa}}
		 {1 - \frac{1}{2\kappa}}
    \right)
	- 
	\T_{\ell}\!
	\left(
	\frac{x + \frac{1}{2\kappa}}
		 {1 - \frac{1}{2\kappa}}
    \right)
	\right] 
	\; - \;
	2 \ell \, 
	(1-x) \,
	\U_{\ell-1} \!
	\left(
	\frac{x + \frac{1}{2\kappa}}
         {1 - \frac{1}{2\kappa}}
    \right)
	= 0
}
Then we use the change of variables $y(x)$ and $\delta(\kappa)$ given in Eqs.~\eqref{eq:change} and their inverses $\kappa(\delta) = \frac{1}{\delta} + \frac{1}{2}$, $x(y) = \frac{y - \delta/2}{1+ \delta/2}$ to rewrite the previous equation as
\al{
	\left( \frac{1}{1+\delta/2} \right)
	\big[	
	\T_{\ell}( 1 + \delta )
	- 
	\T_{\ell}( y )
	\big] 
	\; - \;
	2 \ell \, 
	\left( \frac{1 + \delta - y}{1+\delta/2} \right)
	\,
	\U_{\ell-1} ( y )
	= 0
}
which is equivalent to:
\al{\boxed{
\label{eq:main_eq}
	\T_{\ell}( 1 + \delta )
	- 
	\T_{\ell}( y )
	\; = \;
	2 \ell \, 
	(1 + \delta - y)
	\,
	\U_{\ell-1} ( y ) 
}}
for $x \in [-1,+1]$ or, equivalently, $y \in [-1,1 + \delta]$.

The polynomial $P_{2\ell-1,\kappa}(x)$ is by construction non-negative on the domain $x\in[-1,+1]$ and its derivative in $x=1-\frac{1}{\kappa}$ (corresponding to $y=1$) is positive, provided that $\ell \in \Omega(1/\sqrt{\delta})$. In fact, the derivative of $P_{2\ell-1,\kappa}(x)$ is positive if and only if the left hand side of Eq.~\eqref{eq:main_eq} is larger than the right hand side; using $\T_{\ell}(1+\delta) \geq \frac{1}{2}e^{\ell\sqrt{\delta}}$ for $0 \leq \delta \leq 3 - 2\sqrt{2}$, $\T_{\ell}(1) = 1$, $\U_{\ell-1} (1) = \ell$ we obtain from~\eqref{eq:main_eq} the inequality $\frac{1}{2}e^{\ell \sqrt{\delta}} - 1  \overset{!}{>} 2\, \ell^2 \delta$, which is satisfied for $\ell \geq 4.36 /\sqrt{\delta}$. Since we have, moreover, $P_{2\ell-1,\kappa}(1) = 0$, the function $P_{2\ell-1,\kappa}(x)$ does not have any local maximum in $\left[\, -1 , 1- \frac{1}{\kappa},\right]$ and must have one or more local maxima $x_* \in \left(\, 1\! -\! \frac{1}{\kappa}, +1\,\right]$; equivalently, Eq.~\eqref{eq:main_eq} must have at least one solution $y_* \in (1,1+\delta]$. 

Any local maximum $y_* = y(x^*)$ satisfies:
\al{
\label{eq:main_eq2}
	\T_{\ell}( y_* )
	=
	\T_{\ell}( 1 + \delta )
	- 
	2 \ell \, 
	(1 + \delta - y_*)
	\,
	\U_{\ell-1} ( y_* ) 
}
and substituting $\T_{\ell}( y_* )$ in the definition~\eqref{eq:app_polynomial} gives
\al{
	P_{2\ell-1,\kappa}(x_*)
	& =
	\frac{1+\delta/2}{1+\delta-y_*}
	\left[
	1 - \frac{\T_\ell(y_*)}{\T_\ell(1+\delta)} 
	\right]^2 \\
	\label{eq:U_l-1}
	& =
	4\ell^2 (1+\delta/2) (1+\delta - y_*) \,
	\frac{\U_{\ell-1}(y_*)^2}{\T_\ell(1+\delta)^2} \;.
}
From Eq.~\eqref{eq:main_eq} we directly have
\al{
	\U_{\ell-1}(y_*)
	\leq
	\frac{\T_\ell(1+\delta)}{2\ell (1+\delta -y_*)}
}
and inserting this inequality in Eq.~\eqref{eq:U_l-1} we have that any local maximum $x_*$ satisfies
\al{
\label{eq:P_ineq}
	P_{2\ell-1,\kappa}(x_*)
	& \leq
	\frac{1 + \delta / 2}{1+\delta - y_*} \\
	& \leq  
	\frac{3/2}{1+\delta - y_*} \;.
}

In summary, it will be sufficient to show $1+\delta - y_* \in \Omega(\delta)$ to prove that the normalisation constant satisfies $K=2\,\max_{\{x_*\}} |P_{2\ell-1,\kappa}(x_*)| \in \O(\kappa)$, where the maximisation is over the set of (potentially multiple) local maxima $x_* \in \left[\, 1\! -\! \frac{1}{\kappa}, +1\,\right]$.

To this end, we rewrite Eq.~\eqref{eq:main_eq} as:
\al{
	\frac{
	\T_\ell(1+\delta) - \T_\ell(y_*)}
	{(1+\delta) - y_*}
	\; = \;
	2\, \frac{\partial \T_\ell}{\partial y} (y_*) \;.
}
Since both the first and the second derivative of $\T_\ell(y)$ are positive for $y\geq 1$ (i.e., the derivative of $\T_\ell(y)$ is monotonically increasing), this equation can be satisfied only if\footnote{In this appendix we employ the notation $\overset{!}{\leq}$ to mark inequalities that we still need to prove and with $\leq$ an inequality that has already been proven.}
\al{
	2 \frac{\partial \T_\ell}{\partial y} (y_*)
	\overset{!}{\leq} 
	\frac{\partial \T_\ell}{\partial y} (1+\delta) 
}
or equivalently, defining $1+\delta_* := y_*$
\al{
\label{eq:main_ineq}
	2\,\U_{\ell-1} (1+\delta_*)
	\, \overset{!}{\leq} \, 
	\U_{\ell-1} (1+\delta) \;.
}
A Chebyshev polynomial of the second kind can be written as
\al{
	\U_{\ell-1}(y) = 
	\frac{(y + \sqrt{y^2 -1})^\ell - (y + \sqrt{y^2 -1})^{-\ell}}
	{2\sqrt{y^2 - 1}}
}
hence we have
\al{
	\U_{\ell-1} (1+\delta_*) 
	& \leq 
	\frac{(1+\delta_* + \sqrt{2\delta_* + \delta_*^2})^\ell}
	{2\sqrt{2\delta_* + \delta_*^2}} \\
	& \leq 
	\frac{(1 + 1.1\sqrt{2\,\delta_*})^\ell}
	{2\sqrt{2\delta_*}}	
}
for sufficiently small $\delta_*$ (the constant $1.1$ is somewhat arbitrary), while we have
\al{
	\U_{\ell-1} (1+\delta) 
	& = 
	\frac
	{(1+\delta + \sqrt{2\delta + \delta^2})^\ell
	 \big[ 1- (1+\delta + \sqrt{2\delta + \delta^2})^{-2\ell} \big] }
	{2\sqrt{2\delta + \delta^2}} \\
	& \geq 
	\frac{(1+\sqrt{2\delta})^\ell \times 0.8}
	{2.4\sqrt{2\delta}}	\\
	& = \frac{2}{3}
	\frac{(1+\sqrt{2\delta})^\ell }
	{2 \sqrt{2\delta}} \\
	& \geq \frac{2}{3}
	\frac{(1+\sqrt{2\delta})^\ell }
	{2 \sqrt{3\delta_*}} 
}
which holds for $\ell \geq 1/\sqrt{2\delta}$ and in the last step we have assumed $\delta_* \geq\frac{2}{3}\delta$ (notice that in the opposite case $\delta_* < \frac{2}{3}\delta$ we would already have $1+\delta-y_* = \delta-\delta_* > \frac{1}{3}\delta$). Thus, the inequality in~\eqref{eq:main_ineq} is implied by
\al{
	2 \, \frac{(1 + 1.1\sqrt{2\,\delta_*})^\ell}
	{2\sqrt{2\delta_*}}	
	\; & \overset{!}{\leq} \; 
	\frac{2}{3}
	\frac{(1+\sqrt{2\delta})^\ell }
	{2 \sqrt{3\delta_*}} \\
\Longleftrightarrow \hspace{5mm}
	(1 + 1.1\sqrt{2\,\delta_*})^\ell
	\; & \overset{!}{\leq} \; 
	\frac{\sqrt{2}}{3\sqrt{3}} \,
	(1+\sqrt{2\delta})^\ell  \\
\Longleftrightarrow \hspace{7mm}
	1 + 1.1\sqrt{2\,\delta_*}
	\ \ \; & \overset{!}{\leq} \; 
	e^{-\frac{1}{\ell} \log (3\sqrt{3/2})} \,
	(1+\sqrt{2\delta}) \;.
}
From the inequality $e^{-x} \geq 1-x$ for $x\geq 0$ we see that the inequality above is implied by
\al{
	1.1\sqrt{2\,\delta_*}
	\; & \overset{!}{\leq} \; 
	\big(1 - 1.31 / \ell \big) \,
	(1+\sqrt{2\delta}) - 1 \\
	\; & = \; 
	\sqrt{2\delta} - \frac{1}{\ell} 1.31\, (1 +\sqrt{2\delta})  
}
with $1.31 \geq \log (3\sqrt{3/2})$. We now make the assumption that $\frac{1}{\ell} 1.31 \,(1 +\sqrt{2\delta}) \leq \frac{1}{10}\sqrt{2\delta}$, which is implied by $\ell \geq 13.1 + 9.27/\sqrt{\delta}$ and is compatible with the requirement $\ell \in \Omega(\delta^{-1/2})$. Thus it is sufficient to impose:
\al{
	1.1\sqrt{2\,\delta_*}
	\; & \overset{!}{\leq} \;
	\frac{9}{10} \sqrt{2\delta} \\
	\Longleftrightarrow \qquad \qquad
	\delta_*
	\; & \overset{!}{\leq} \;
	\left(\frac{9}{11}\right)^2 \delta \;.
	\label{eq:last}
}
We also remark that $(9/11)^2 \geq 2/3$, so that this last inequality is compatible with the assumption $\delta_* \geq\frac{2}{3}\delta$ we made earlier.

Plugging the bound in Eq.~\eqref{eq:last} into~\eqref{eq:P_ineq}, together with the definition of $\delta(\kappa)$, results in the explicit bound $K \leq 6.05\, \kappa$, i.e.\ $K \in \O(\kappa)$, provided that $\ell \geq 13.1 + 9.27 \sqrt{\kappa - 1/2}$, i.e.\ $\ell \in \Omega(\sqrt{\kappa})$.

\section{VTAA optimization improving the runtime values of the first algorithm}
\label{app:VTAA}
\setcounter{equation}{0}

In this Appendix we use VTAA to speed-up the asymptotic runtime of the algorithm based on polynomial approximations of $1/(1-x)$. Preliminarily, we introduce a Lemma showing that this VTAA algorithm, as summarised in \prop{VTAA}, does provide an asymptotic query complexity improvement (compared to general QLS solvers) for most values of $\Gamma_{A,\b}$, see the right plot in \fig{Runtimes}.

\begin{lemma}
\label{lem:Gamma}
Defining 
$
	\Gamma_{A,\b} :=
	\sqrt{\kappa} \,
	\frac{\norm{A^{-1/2}\ket{\b}}}{\norm{A^{-1}\ket{\b}}}
$ 
we have $\Gamma_{A,\b} \in [1,\sqrt{\kappa}\,]$, under the usual assumption that the spectrum of $A$ is contained in $[1/\kappa, 1]$.
\end{lemma}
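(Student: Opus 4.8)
The plan is to diagonalise $A$ and reduce the two-sided bound to an elementary term-by-term comparison of power sums. Concretely, I would write $A = \sum_i \lambda_i \proj{v_i}$ with $\lambda_i \in [1/\kappa, 1]$ and expand $\ket{\b} = \sum_i c_i \ket{v_i}$ with $\sum_i |c_i|^2 = 1$. Then
\al{
	\norm{A^{-1/2}\ket{\b}}^2 = \sum_i |c_i|^2 \lambda_i^{-1} ,
	\qquad
	\norm{A^{-1}\ket{\b}}^2 = \sum_i |c_i|^2 \lambda_i^{-2} ,
}
so that $\Gamma_{A,\b}^2 = \kappa \big(\sum_i |c_i|^2 \lambda_i^{-1}\big)\big/\big(\sum_i |c_i|^2 \lambda_i^{-2}\big)$, a well-defined quantity since the denominator $\norm{A^{-1}\ket{\b}}$ is nonzero (as $\ket{\b}\neq 0$ and $A$ is invertible).

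For the upper bound $\Gamma_{A,\b} \leq \sqrt{\kappa}$, I would note it is equivalent to $\sum_i |c_i|^2 \lambda_i^{-1} \leq \sum_i |c_i|^2 \lambda_i^{-2}$, which holds term by term because $\lambda_i \leq 1$ forces $\lambda_i^{-1} \leq \lambda_i^{-2}$. For the lower bound $\Gamma_{A,\b} \geq 1$, I would note it is equivalent to $\kappa \sum_i |c_i|^2 \lambda_i^{-1} \geq \sum_i |c_i|^2 \lambda_i^{-2}$, which again holds term by term because $\lambda_i \geq 1/\kappa$ forces $\lambda_i^{-2} \leq \kappa \lambda_i^{-1}$. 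That is essentially the whole argument.

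An equivalent basis-free phrasing, which I might prefer for brevity, is to set $B := A^{-1}$; the spectral hypothesis says that $B - \Id$ and $\kappa \Id - B$ are positive semidefinite, hence so are $B^2 - B = B^{1/2}(B - \Id)B^{1/2}$ and $\kappa B - B^2 = B^{1/2}(\kappa \Id - B)B^{1/2}$. Evaluating these expectations in the state $\ket{\b}$ and using $\norm{A^{-1/2}\ket{\b}}^2 = \bra{\b}B\ket{\b}$ and $\norm{A^{-1}\ket{\b}}^2 = \bra{\b}B^2\ket{\b}$ yields exactly the two desired inequalities. There is no genuine obstacle here; the only points worth stating carefully are that the denominator is nonzero and that the endpoints $1$ and $\sqrt{\kappa}$ are attained — respectively when $\ket{\b}$ lies in the top eigenspace of $A$ and when it lies in the bottom eigenspace — so that the interval $[1,\sqrt{\kappa}]$ is tight.
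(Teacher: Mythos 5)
Your proof is correct and follows essentially the same route as the paper: diagonalise $A$, express $\|A^{-1/2}\ket{\b}\|^2$ and $\|A^{-1}\ket{\b}\|^2$ as weighted sums over the spectrum, and bound the ratio using $\lambda\in[1/\kappa,1]$. The paper invokes the mediant inequality on the ratio of weighted sums whereas you cross-multiply and compare term by term, but these are the same observation; the basis-free positive-semidefiniteness rephrasing and the remark that the endpoints are attained are nice, slightly fuller variants of the same argument.
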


\begin{proof}
We expand $\ket{\b}$ in a basis of eigenvectors of $A$, that is $\ket{\b} = \sum_\lambda \beta_\lambda \ket{\lambda}$ with $A\ket{\lambda} = \lambda \ket{\lambda}$, $\braket{\lambda}{\lambda'} = \delta_{\lambda,\lambda'}$, and $\sum_\lambda |\beta_\lambda|^2 =1$. Then, we can write
\al{
	\frac{\norm{A^{-1/2}\ket{\b}}^2}{\norm{A^{-1}\ket{\b}}^2}	
	=
	\frac{\sum_\lambda |\beta_\lambda|^2 f_\lambda}
	     {\sum_\lambda |\beta_\lambda|^2 g_\lambda}
	\ & \in \ 
	\big[\min_\lambda \{f_\lambda / g_\lambda\} \, , \,
	     \max_\lambda \{f_\lambda / g_\lambda\} \big] 
	\ \subseteq \
	\big[ \kappa^{-1} , 1 \big]
}
for $f_\lambda = \lambda^{-1}$, $g_\lambda = \lambda^{-2}$ and thus $f_\lambda / g_\lambda = \lambda \in [\kappa^{-1},1]$. We take the square root of both extrema and multiply by $\sqrt{\kappa}$ to conclude.

\end{proof}

\subsection{Variable-time amplitude amplification review}

In this Section we review the general VTAA method, mainly following Ref.~\cite[Section 3]{Chakraborty18}.

\begin{definition}[Variable-stopping-time quantum algorithm]
\label{def:VST}
A \emph{variable-stopping-time quantum algorithm} $\A = \A_m \cdot \ldots \cdot \A_1 \cdot \A_0$ is given by the application of $m + 1$ sub-algorithms $\A_j$ in sequence, acting on the Hilbert space $\hil = \hil_C \otimes \hil_F \otimes \hil_S$ where $S$ is a ``system register'' of arbitrary size, $F$ is a ``flag qubit'' that heralds success and $\hil_C = \bigotimes_{j=0}^m \hil_{C_j}$ is a ``clock register'' containing $m+1$ qubits $C_0, C_1, \ldots, C_m$. $\hil$ is initially prepared in the all-zero state $\ket{0}_\t{all} = \ket{0,0,0}_{C,F,S}$. Each $\A_j$ acts on $\hil_{C_j} \otimes \hil_F \otimes \hil_S$ and $\ket{1}_{C_j}$ indicates that the algorithm stops after the application of $\A_j$; i.e., each $\A_j$ is a controlled algorithm that acts if and only if the previous $j$ qubits in the clock register are in the state $\ket{0}^{\otimes j} \in \bigotimes_{i=0}^{j-1} \hil_{C_i}$. We assume that all branches of the computation end by step $m$. The successful branches of the algorithm are those where the flag is in the state $\ket{1}_F$ and thus we define
\al{
	\label{eq:VST_succ}
	p_\t{succ} := \bnorm{\,\Pi_\checkmark \A \ket{0}_\t{all}}^2
	\qquad \t{and} \qquad
	\ket{\psi_\t{succ}} :=
	\frac{\Pi_\checkmark \A \ket{0}_\t{all}}{\bnorm{\,\Pi_\checkmark \A \ket{0}_\t{all}}}
}
where $\Pi_\checkmark := \Id_C \otimes \proj{1}_F \otimes \Id_S$. 
\end{definition}

By construction, a variable-stopping-time quantum algorithm produces a quantum state of the form $\A \ket{0}_\t{all} = \sum_{j=0}^m \alpha_j \ket{1_j}_C \ket{\Psi_j}_{F,S}$, where 
\al{
	\ket{1_j}_C := \ket{0}^{\otimes m-j} \ket{1} \ket{0}^{\otimes j}
}
is a state having the qubit $C_j$ in $\ket{1}$ and the other clock qubits in $\ket{0}$, while $\ket{\Psi_j}_{F,S}$ are normalised quantum states in $\hil_F \otimes \hil_S$ with $\sum_{j=0}^m |\alpha_j|^2 = 1$.

\begin{definition}[Stopping times]
\label{def:VST_prop}
We introduce a sequence of \emph{stopping times} $t_\t{min} \equiv t_0 < t_1 < t_2 < \ldots < t_m \equiv t_\t{max}$, where each $t_j \in \mathds{N}$ is the complexity of the sub-algorithm $\A_{\leq j} := \A_j \cdot \ldots \cdot \A_0$. The probability $p_j$ of stopping at time $t_j$ (i.e., after the execution of algorithm $\A_{\leq j}$) and the  $\ell_2$-\emph{average stopping time} are defined as
\al{
	p_j := \bnorm{\,\Pi_{C_j} \A_{\leq j} \ket{0}_\t{all}}^2 
	\qquad \quad
	t_\t{avg} := \sqrt{\sum_{j=0}^m p_j \, t_j^2} 
}
with $\Pi_{C_j} := \proj{1_j}_C \otimes\Id_F \otimes\Id_S$. We define $\Pi_{\t{stop} \leq j} := \sum_{i=0}^j \Pi_{C_i}$ and
\al{
	\Pi_\t{bad}^j 
	& :=
	\Pi_{\t{stop} \leq j}
	\cdot
	\big(\Id_C \otimes \proj{0}_F \otimes \Id_S \big) \\
	\Pi_\t{mg}^j 
	& :=
	\Id - 
	\Pi_\t{bad}^j
}
which project onto ``bad'' and ``maybe good'' subspaces after the application of $\A_{\leq j}$. The ``maybe good'' subspace at step $j+1$ is contained in the ``maybe good'' subspace at step $j$ 
%and they contain the computational branches that have the flag in $\ket{1}_F$ and the branches which have not stopped yet 
and, by construction, $\Pi_\t{mg}^m \A\ket{0}_\t{all} = \Pi_\checkmark \A\ket{0}_\t{all}$.
\end{definition}

In the definition above the $t_j$ can represent any complexity measure (e.g., query complexity, gate complexity, circuit depth). In this Appendix we only compute the query complexity $Q$ of the sub-algorithm $\A_{\leq j}$, but we remark that all the algorithms we consider here are gate-efficient, i.e., the gate complexity is in $\O\big(Q \,\t{poly}(\log Q,\log N)\big)$.

\begin{definition}[Variable-Time Amplitude Amplification]
\label{def:VTAA}
Given a variable-stopping-time algorithm $\A = \A_m \cdot \ldots \cdot \A_0$ as in \defin{VST} and given a sequence of $m+1$ non-negative integers $(k_0, k_1, \ldots, k_m)$, we recursively define a variable-time amplification $\A' = \A_m' \cdot \ldots \cdot \A_0'$ as follows. Setting $\A_{-1} := \Id$, each $\A_j'$ implements a standard $k_j$-step amplitude amplification that uses $\A_j \A_{j-1}'$ and its inverse a total of $2k_j + 1$ times, where the input state is $\sket{\psi_\t{in}^j} = \A_j\A_{j-1}' \ket{0}_\t{all}$ and the target state is $\Pi_\t{mg}^j \sket{\psi_\t{in}^j}$. That is, $\A_j'$ begins preparing the input state
\al{
	& ~\sket{\psi_\t{in}^j} 
	= \
	\A_j \A_{j-1}' \ket{0}_\t{all} 
	\, = \, 
	\sin(\theta_j) \, \sket{\psi_\t{mg}^j} +
	\cos(\theta_j) \,\sket{\psi_\t{bad}^j} 
	\\[1mm]
	\t{with} ~ &
	\begin{cases}
	\sket{\psi_\t{mg}^j} \,\propto \; \Pi_\t{mg}^j  \sket{\psi_\t{in}^j} \\
	\sket{\psi_\t{bad}^j}  \propto \; \Pi_\t{bad}^j \sket{\psi_\t{in}^j}
	\end{cases}
	\qquad
	\theta_j := 
	\arcsin\!\big(\norm{\Pi_\t{mg}^j \A_j \A_{j-1}' \ket{0}_\t{all}}\big) 
	\; \in \; [0, \pi/2]
}
and then uses $k_j$ accesses to the reflections $R_\t{out}^j := \Id - 2\, \Pi_\t{mg}^j$ and $R_\t{in}^j := 2\, \sket{\psi_\t{in}^j}\sbra{\psi_\t{in}^j} - \Id$, where $R_\t{in}^j$ is implemented using $\A_j \A_{j-1}'$ and $(\A_j \A_{j-1}')^\dag$ once, to obtain the output state
\al{
	\sket{\psi_\t{out}^j} 
	= \
	& \A_{j}' \ket{0}_\t{all} 
	\, = \, 
	\sin\!\big[(2k_j + 1)\theta_j\big] \, \sket{\psi_\t{mg}^j} +
	\cos\!\big[(2k_j + 1)\theta_j\big] \,\sket{\psi_\t{bad}^j} \,.
}
\end{definition}

Note that if we set $k_0 = \ldots = k_m = 0$ we have $\A' \equiv \A$. Also, by the recursive structure of VTAA the first sub-algorithm $\A_1$ is used a total of $\prod_{j=0}^m (2k_j +1)$ times in $\A'$, which grows exponentially in $m$ if $k_j \geq 1$. Nonetheless, VTAA can provide a speed-up when the amplification parameters $(k_0, k_1, \ldots, k_m)$ are chosen appropriately. More precisely, the following result can be derived from Ref.~\cite[Lemma~22]{Chakraborty18}.

\begin{proposition}[Result of VTAA]
\label{prop:VTAA_analysis}
Using the notation of the previous definitions, let $\A'$ be a variable-time amplification such that each $\A'_j$ uses $k_j$ steps of amplitude amplification, where
\al{
	\label{eq:VTAA_condition}
	& \frac{\pi}{8\,\theta_j}-\frac{1}{2}
	\, \leq \,
	k_j 
	\, \leq \, 
	\frac{\pi}{4\,\theta_j}-\frac{1}{2}
}
Then, with the definitions in Eq.~\eqref{eq:VST_succ}, $\A'$ outputs the state 
\al{
	\A' \ket{0}_\t{all}
	\, = \,
	\sqrt{p_\t{succ}'} \,\sket{\psi_\t{succ}}_{C,F,S} 
	\, + \,
	\sqrt{1-p_\t{succ}'} \,\sket{\psi^\perp}_{C,F,S} 	
}
where success is heralded by $\ket{1}_F$, the success probability satisfies $p_\t{succ}' \in \Theta(1)$, and the global query complexity is
\al{
	\label{eq:VTAA_result}
	Q' \in 
	\O \!
	\left(
	t_\t{max} \sqrt{m}
	+ 
	\frac{t_\t{avg}}{\sqrt{p_\mathrm{succ}}}
	\sqrt{m \, \log(t_\t{max} / t_\t{min})}
	\right) .
}
\end{proposition}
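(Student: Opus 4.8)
The plan is to derive the statement from \cite[Lemma~22]{Chakraborty18}: after matching the present notation for variable-stopping-time algorithms (\defin{VST} and \defin{VST_prop}) and for variable-time amplification (\defin{VTAA}) to the setup of that reference, one checks that the schedule \eqref{eq:VTAA_condition} coincides, up to harmless constants, with the amplification schedule assumed there, and then one quotes the conclusion \eqref{eq:VTAA_result} together with $p_\t{succ}'\in\Theta(1)$. For completeness I would also sketch the three ingredients that make the bound work, so that the reader sees why \eqref{eq:VTAA_condition} is the right condition.

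First, the success probability. The inequalities \eqref{eq:VTAA_condition} are designed precisely so that $(2k_j+1)\theta_j\in[\pi/4,\pi/2]$, hence $\sin\!\big[(2k_j+1)\theta_j\big]\geq 1/\sqrt2$; thus each round $\A_j'$ maps the ``maybe good'' amplitude $\sin\theta_j$ of $\A_j\A_{j-1}'\ket{0}_\t{all}$ to an amplitude at least $1/\sqrt2$ (and at most $1$) on $\Pi_\t{mg}^j$. Using that the ``maybe good'' subspaces are nested, $\Pi_\t{mg}^{j+1}\leq\Pi_\t{mg}^j$, and that each $\A_j$ acts only on the not-yet-stopped branch, an induction on $j$ shows $\bnorm{\Pi_\t{mg}^j\A_j'\cdots\A_0'\ket{0}_\t{all}}^2\in\Theta(1)$ for every $j$; since $\Pi_\t{mg}^m\A\ket{0}_\t{all}=\Pi_\checkmark\A\ket{0}_\t{all}$, this gives $p_\t{succ}'\in\Theta(1)$.

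Second, the query complexity. Writing $Q_j$ for the query complexity of $\A_j'\cdots\A_0'$ and noting that $\A_j'$ invokes $\A_j\A_{j-1}'$ (and its inverse) a total of $2k_j+1$ times, while the marginal cost of $\A_j$ is at most $t_j$, one obtains the recursion $Q_j\leq (2k_j+1)(Q_{j-1}+t_j)$, which unrolls to $Q_m\in\O\!\big(\sum_{j=0}^m t_j\prod_{i=j}^m(2k_i+1)\big)$. Bounding $2k_i+1\leq \pi/(2\theta_i)$ and relating $\theta_i$ to the stopping probabilities $p_i$ — up to the $\Theta(1)$ factor already amplified at earlier steps, $\sin\theta_i$ is comparable to $\sqrt{p_i/\sum_{k\geq i}p_k}$ — makes the product $\prod_{i=j}^m 1/\theta_i$ telescope against the renormalisation factors, leaving a bound in terms of $1/\sqrt{p_\t{succ}}$ and the $p_j$. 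A Cauchy--Schwarz step then converts the resulting $\sum_j p_j t_j$-type quantities into $t_\t{avg}=\sqrt{\sum_j p_j t_j^2}$ and produces a $\sqrt m$ factor, and a dyadic partition of the $m+1$ stopping times into $\O(\log(t_\t{max}/t_\t{min}))$ geometrically spaced buckets yields the $\sqrt{m\log(t_\t{max}/t_\t{min})}$ factor, with the leading $t_\t{max}\sqrt m$ term coming from the longest branch of the computation.

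The main obstacle is exactly this second step: controlling $\prod_{i=j}^m(2k_i+1)$ so that it does not grow exponentially in $m$, which requires the telescoping identification of the $\theta_i$ with the renormalised stopping probabilities together with the bucketing argument. Since this estimate is carried out in full in \cite[Lemma~22]{Chakraborty18} (and, in its original form, in \cite{Ambainis10}), I would present the proof of \prop{VTAA_analysis} as a verification that Definitions~\ref*{def:VST}--\ref*{def:VTAA} and the schedule \eqref{eq:VTAA_condition} meet the hypotheses of that lemma, and then invoke it, rather than reproducing the full computation here.
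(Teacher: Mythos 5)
Your proposal is correct and takes essentially the same route as the paper: the paper itself gives no proof beyond the one-line remark that the statement ``can be derived from Ref.~\cite[Lemma~22]{Chakraborty18},'' which is exactly the reduction you perform. Your additional sketch—that \eqref{eq:VTAA_condition} forces $(2k_j+1)\theta_j\in[\pi/4,\pi/2]$ so that $p_\t{succ}'\in\Theta(1)$, and that the product $\prod_i(2k_i+1)$ is tamed by relating $\theta_i$ to the renormalised stopping probabilities before a Cauchy--Schwarz and dyadic-bucketing step—is a faithful summary of the argument in that reference (and in Ambainis's original analysis), and is consistent with Definitions~\ref*{def:VST}--\ref*{def:VTAA}.
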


We can compare Eq.~\eqref{eq:VTAA_result} with standard amplitude amplification, which has a query complexity $Q \in \O(t_\t{max} / \sqrt{p_\t{succ}})$. In our algorithm, $m$ will scale logarithmically in $t_\t{max}$, so the total runtime is $\widetilde{\O}(t_\t{max} + t_\t{avg} / \sqrt{p_\t{succ}})$. Hence, VTAA can provide a speed-up when the average runtime is much shorter than the maximum runtime.

We remark that a sequence of good values $(k_0, \ldots, k_m)$ such that conditions in Eq.~\eqref{eq:VTAA_condition} are satisfied can be obtained efficiently by means of an iterative classical-quantum pre-processing algorithm. Specifically, suppose that some values $(k_0, \ldots, k_j)$ satisfying~\eqref{eq:VTAA_condition} have been already found; then, one can compile the corresponding VTAA algorithm $\A_j'$ and use phase estimation on the state $\sket{\psi_\t{in}^{j+1}} = \A_{j+1}\A_j'\ket{0}_\t{all}$ to obtain an estimate of $\sin(\theta_{j+1})$ up to constant multiplicative precision; this allows to find a value $k_{j+1}$ which satisfies~\eqref{eq:VTAA_condition} with probability $1 - p_\t{fail}$ with a cost $\O\big(\frac{1}{\theta_{j+1}} \log(1/p_\t{fail}) \big)$ \cite{Brassard02}, which is asymptotically equal to the query complexity of $\A_{j+1}'$, apart from a multiplicative $\log(1/p_\t{fail})$ overhead. Once $k_{j+1}$ has been precomputed one can directly compile $\A_{j+1}'$, without the need of performing phase estimation ``online''. The total failure probability is $p_\t{fail}^\t{tot} \leq m \,p_\t{fail}$ and the query cost of the hybrid classical-quantum pre-processing algorithm has only a $\O\big(\log (1/p_\t{fail}) \big)$ multiplicative overhead compared to the expression in Eq.~\eqref{eq:VTAA_result}.

\subsection{Efficient implementation of windowing polynomials}

In this Section we introduce the ``windowing functions'' that we use to replace the Gapped Phase Estimation (GPE) subroutine, introduced in Ref.~\cite{Childs15}. Using GPE gives an additive $\O(\kappa)$ runtime overhead, which is too costly for our purposes.

\begin{lemma}[Efficient windowing polynomials]
\label{lem:window}
Given $\epsilon, \delta \in (0, 1/2]$, there exists an even polynomial $W_{\epsilon, \delta}(x) = W_{\epsilon, \delta}(-x)$ of degree $\ell \in \O\Big(\frac{1}{\sqrt{\delta}} \, \t{polylog}(\delta^{-1}, \epsilon^{-1})\Big)$, where  
$
	\t{polylog} (\delta^{-1}, \epsilon^{-1})
	\equiv
	\log^{1/4}(\epsilon^{-1}) 
	\big[\log(\epsilon^{-1}) + \log(\delta^{-1}) \big]
$,
satisfying the inequalities
\al{
	W_{\epsilon, \delta}(x) \in
	\begin{cases}
	[1 - \epsilon, 1     ] & \t{if} ~ x \in [0, 1-2\delta] \\
	[- 1, + 1] & \t{if} ~ x \in (1-2\delta, 1-\delta) \\
	[- \epsilon, + \epsilon] & \t{if} ~ x \in [1-\delta, 1] \;.
	\end{cases}
	\label{eq:W_inequalities}
}
The windowing function $W_{\epsilon, \delta}(x)$ can be computed efficiently with classical algorithms.
\end{lemma}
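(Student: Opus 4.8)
The plan is to build the windowing polynomial $W_{\epsilon,\delta}(x)$ by composing a few standard polynomial building blocks, exactly in the spirit of the constructions in Ref.~\cite{Gilyen18} (polynomial approximations of sign, rectangle, and threshold functions). The target function to approximate is a ``bump'' that is $1$ on $[0,1-2\delta]$, transitions down on $(1-2\delta,1-\delta)$, and is $0$ on $[1-\delta,1]$; since we want an \emph{even} polynomial we really want the symmetric bump on $[-1+2\delta,1-2\delta]$ with fall-off near $\pm(1-1.5\delta)$. The key point to exploit is that the steep transition is located near the \emph{endpoints} $\pm 1$ rather than in the interior, so by Bernstein's inequality (invoked informally in the main text) a transition of ``width $\delta$'' near $x=1$ should be realizable with degree $\widetilde{\O}(1/\sqrt{\delta})$ rather than $\widetilde{\O}(1/\delta)$.

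First I would reduce the problem to approximating a single one-sided \emph{threshold} (step) function $\Theta_c(y)$ that is $\approx 1$ for $y$ a bit below $c:=1-1.5\delta$ and $\approx 0$ for $y$ a bit above $c$, with transition region of width $\Theta(\delta)$ and error $\epsilon$. Then $W_{\epsilon,\delta}(x)$ is obtained as (a rescaled version of) $\Theta_{c}(x)\cdot\Theta_{c}(-x)$, i.e. the product of the left and right thresholds; this product is automatically even, is $\approx 1$ in the middle, $\approx 0$ near $\pm1$, and on the transition strips one factor is in $[0,1]$ while the other is bounded by $1$, which yields the ``$[-1,1]$'' bound in the middle line of~\eqref{eq:W_inequalities}. (One must take a little care that the error does not exceed $\epsilon$ after the product and after the $[0,1-2\delta]$/$[1-\delta,1]$ regions are hit by both factors; a union-bound-type estimate $2\epsilon'+\epsilon'^2\le\epsilon$ with $\epsilon'=\epsilon/3$ suffices.) To build $\Theta_c$ near the endpoint, the natural route is: take the sign-function polynomial $P_{\mathrm{sign}}$ of Ref.~\cite[Lemma~25 or Corollary~16]{Gilyen18}, which on a domain bounded away from $0$ by $\mu$ approximates $\mathrm{sgn}$ to error $\epsilon$ with degree $\O(\mu^{-1}\log(\epsilon^{-1}))$, and \emph{pre-compose} it with a low-degree polynomial map $\phi$ that sends a width-$\delta$ neighbourhood of $c=1-1.5\delta$ to a width-$\Theta(1)$ neighbourhood of $0$ while keeping $[-1,1]$ mapped into $[-1,1]$. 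The obvious candidate is a Chebyshev-type map: something like $\phi(x)=\T_k\!\big(\tfrac{x+a}{1-a}\big)/\T_k(\tfrac{1+a}{1-a})$ with $a=\Theta(\delta)$ and $k=\Theta(1/\sqrt{\delta})$, which is exactly the object $\hat{\T}_{k,\kappa}$ of Eq.~\eqref{eq:Cheb_def} with $\kappa\sim1/\delta$: it is $\le 1$ in magnitude on $[-1,1]$, equals $1$ at $x=1$, and grows like $e^{k\sqrt{\delta}}$ once $x$ is pushed a little past $1$ — giving a slope of order $k\cdot e^{k\sqrt\delta}\cdot(\text{const})$, which is $\gg 1/\delta$ in the relevant strip — so one $O(\log(\epsilon^{-1}))$-degree sign polynomial composed with it already separates the two sides by a constant amount. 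Chaining degrees: $\deg W \le 2\cdot \deg P_{\mathrm{sign}}\cdot \deg\phi = \O\big(\log(\epsilon^{-1})\cdot \tfrac1{\sqrt\delta}\big)$ up to the extra polylog factors coming from the precise form of the optimal sign polynomial — which is where the stated $\log^{1/4}(\epsilon^{-1})[\log(\epsilon^{-1})+\log(\delta^{-1})]$ factor enters (one uses the erf-based construction of Ref.~\cite{Gilyen18}, whose degree carries these mixed logs).

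The main obstacle, I expect, is the \emph{bookkeeping of the transition window under composition}: I need $\phi$ to map $[1-2\delta,1-\delta]$ across $0$ with a \emph{constant-width} image (not width $\Theta(\sqrt\delta)$ or $\Theta(\delta)$), so that the sign polynomial's required precision parameter $\mu$ is $\Theta(1)$ and the $\log(\delta^{-1})$ dependence only shows up additively inside the sign-polynomial degree, not multiplicatively blowing up $1/\sqrt\delta$ to $1/\delta$. Getting the constants in the shift/scale $a\sim\delta$ and the Chebyshev degree $k\sim 1/\sqrt\delta$ right so that: (i) $|\phi|\le1$ on all of $[-1,1]$ (needed because $P_{\mathrm{sign}}$ is only controlled on $[-1,1]$), (ii) $\phi\le -\Omega(1)$ for $x\ge 1-\delta$ and $\phi\ge +\Omega(1)$ for $x\le 1-2\delta$ (so the two sides land outside the sign polynomial's ``don't care'' zone), and (iii) the degree bound stays $\O(1/\sqrt\delta)$ — is a slightly delicate interplay and is essentially the place where Bernstein's inequality is doing the real work. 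The rest (proving the three bounds in~\eqref{eq:W_inequalities} by splitting into the six regions $x\in[0,1-2\delta]$, $(1-2\delta,1-\delta)$, $[1-\delta,1]$ and their reflections, and propagating a $\pm\epsilon/3$ error through a product of two $[0,1]$-valued factors) is routine, and the ``computable efficiently with classical algorithms'' clause follows because every building block — Chebyshev polynomials, the erf/sign polynomial of Ref.~\cite{Gilyen18}, composition, and multiplication of polynomials — has an explicit, polynomial-time classical description, and the QSP angle-finding step of Refs.~\cite{Chao20,Dong20} applies to the resulting polynomial.
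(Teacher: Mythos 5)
Your proposal takes a genuinely different route from the paper. The paper does not compose a sign polynomial with a low-degree ``magnifying'' map; instead it builds an explicit analytic bump $\mathcal{W}_{\sigma,\delta}(x)=\Phi\!\big(\tfrac{x+1-1.5\delta}{\sigma}\big)\Phi\!\big(\tfrac{-x+1-1.5\delta}{\sigma}\big)$ out of two Gaussian cumulative functions (with $\sigma\in\Theta(\delta/\sqrt{\log\epsilon^{-1}})$, which is where the $\log^{1/4}$ comes from), and then truncates the Chebyshev series of this entire function at degree $\ell$. The degree bound comes from the standard Bernstein-ellipse convergence estimate: since $\mathcal{W}_{\sigma,\delta}$ is entire, its Chebyshev coefficients decay like $r^{-\ell}$ on any ellipse $E_r$, and the key calculation is picking $r-1\sim\sqrt{\sigma}$ and bounding $\sup_{E_r}|\mathcal{W}_{\sigma,\delta}|$. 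Evenness is automatic because $\mathcal{W}_{\sigma,\delta}$ is even. Your decomposition $W=\Theta_c(x)\Theta_c(-x)$ is a clean alternative way to enforce parity, and the high-level ``transition is near the endpoint so Bernstein lets you win a $\sqrt{\delta}$'' intuition is exactly right and in the spirit of the paper's remark in \sec{overview1}.

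However, there is a concrete gap in the plan for $\phi$ that you gloss over, and it is precisely the place where all the hard work lives. You propose taking $\phi$ to be $\hat{\T}_{k,\kappa}$ as in Eq.~\eqref{eq:Cheb_def} with $\kappa\sim 1/\delta$. But for $P_{\mathrm{sign}}\circ\phi$ to realize a threshold you need (your condition (ii)) $\phi\ge\Omega(1)$ on $[-1,1-2\delta]$ and $\phi\le-\Omega(1)$ on $[1-\delta,1]$, i.e.\ $\phi$ must be uniformly \emph{large and positive} on essentially the whole interval. The shifted Chebyshev has the opposite behaviour: $\hat{\T}_{k,\kappa}$ is uniformly \emph{small} (in $[-\epsilon',\epsilon']$, oscillating in sign) on $[-1,1-\Theta(\delta)]$ and only rises to $1$ in the last $\Theta(\delta)$-strip before $x=1$. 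So the ``obvious candidate'' does not straddle zero and you cannot feed it to a sign polynomial as written. You would need to pass to something like $1-2\hat{\T}_{k,\kappa}^2$, then shift and rescale so that the image of the good region lands in $[\mu,1]$ while the image of $[1-\delta,1]$ lands in $[-1,-\mu]$ with $|\phi|\le1$ throughout; and a quick estimate ($\hat{\T}_{k,\kappa}(1-\delta)\approx\cosh(c\sqrt2)/\cosh(2c)$ for $k\sqrt\delta=c$) shows the resulting separation constant $\mu$ is small and depends delicately on the ratio of the two Chebyshev growth factors, so it is not ``obvious'' that the bookkeeping closes. Constructing this $\phi$ rigorously — checking boundedness, the sign structure, and that the degree stays $\O(1/\sqrt\delta)$ after the rescaling — is not a peripheral cleanup step but is the entire content of the lemma; the paper's Bernstein-ellipse computation is doing exactly that work in a different guise. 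Your proposal identifies the right obstacle but does not surmount it, so as written it is an outline, not a proof.
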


A family of polynomials satisfying the inequalities in~\eqref{eq:W_inequalities} is already given in Ref.~\cite[Lemma 29]{Gilyen18}, but they have a degree $\ell \in \widetilde{\O}(\delta^{-1})$. We seek to achieve the same result with a quadratically smaller degree, $\ell \in \widetilde{\O}(\delta^{-1/2})$. Also, it is crucial that the polynomial approximation has even parity: using QSP one can implement matrix polynomials when the polynomial $P$ satisfies $|P(x)| \leq 1$ for $|x| \leq 1$ and $P$ has definite parity, while for a polynomial $P'$ without definite parity the more restrictive constraint $|P'(x)| \leq 1/2$ has to be satisfied (see \thm{Signal}).

\begin{proof}
The proof proceeds in two steps: first, we find an analytic function $f(x)$ that satisfies the inequalities in~\eqref{eq:W_inequalities} within error $\epsilon/2$, and then show that the Chebyschev expansion converges very quickly to it, so that choosing a degree $\ell \in \widetilde{\O}\big(\delta^{-1/2}\big)$ is sufficient to be within $\epsilon/2$-distance from $f(x)$ for all $x$. \vspace{2mm}

\emph{First part:}
We introduce the normal distribution cumulative function
\al{
	\Phi(x) := 
	\frac{1}{\sqrt{2\pi}}
	\int_{-\infty}^x e^{-\frac{t^2}{2}} \,dt
	\qquad
	x \in \mathds{R}
}
normalised so that $0 \leq \Phi(x) \leq 1$. We then define
\al{
	\mathcal{W}_{\sigma,\delta}(x) := 
	\Phi \! \left( \frac{ x + 1 - 1.5 \, \delta}{\sigma} \right)
	\Phi \! \left( \frac{-x + 1 - 1.5 \, \delta}{\sigma} \right)
}
where $\sigma > 0$ has to be chosen so that $\mathcal{W}_{\sigma,\delta}(1-2\delta) \geq 1-\epsilon/2$ and $\mathcal{W}_{\sigma,\delta}(1-\delta) \leq \epsilon/2$. Using $\Phi(-x) = 1- \Phi(x)$ and the monotonicity of $\Phi$, both inequalities are implied by 
$
	\Phi \! \left(- \frac{ 0.5 \, \delta}{\sigma} \right)
	\leq 
	\epsilon/4
$. 
Using the bound
$
	\Phi(-x) 
	\leq 
	\frac{1}{\sqrt{2\pi}} \int_{-\infty}^{-x} \frac{-t}{x} e^{-t^2/2} \, dt 
	=
	\frac{e^{-x^2/2}}{\sqrt{2\pi} x}
	\leq
	\frac{e^{-x^2/2}}{\sqrt{2\pi} }
$
for $x \geq 1$, it is sufficient to choose 
$
	\sigma 
%	\leq 
%	\delta / \sqrt{8 \log\!\big(\frac{4}{\sqrt{2 \pi} \epsilon}\big)}
	\in 
	\Theta \big(\delta / \sqrt{\log (\epsilon^{-1})}\big)
$
to obtain the desired result. \vspace{2mm}

\emph{Second part:}
We consider the Chebyschev series associated to $\mathcal{W}_{\sigma,\delta}(x)$. Note that the function $\mathcal{W}_{\sigma,\delta}(x)$ has even parity, hence also the associated Chebyschev series has even parity \cite{Mason02}. Truncating the Chebyschev series at degree $\ell$, we get a sequence of polynomials $[S_\ell \mathcal{W}_{\sigma,\delta}](x)$ converging uniformly to $\mathcal{W}_{\sigma,\delta}(x)$ for $\ell \rightarrow \infty$. More precisely, we have the following result~\cite[Theorem~5.16]{Mason02}. 

\begin{lemma} 
Suppose $f(x)$ can be extended to an analytic function on the ellipse $E_r \subseteq \mathds{C}$
\al{
	E_r 
	=
	\Big\{ \
	a \cos \theta + i \, b \sin \theta
	\ \Big| \ 
	a = \tfrac{1}{2}(r + r^{-1}), \
	b = \tfrac{1}{2}(r - r^{-1}), \
	\theta \in [0, 2 \pi)
	\ \Big\}
}
for some $r>1$; then, the $\ell$-th degree truncation of the Chebyschev series is a polynomial $[S_\ell f] (x)$ that satisfies for all $x \in [-1,+1]$
\al{
	& \Big|
	f(x) - [S_\ell f](x)
	\Big|
	\leq 
	\frac{M}{r^\ell(r-1)}
	\qquad
	with ~
	M =
	\sup \big\{ |f(z)| : z \in E_r \big\} .
	\label{eq:ChebSeries}
}
\end{lemma}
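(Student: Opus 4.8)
\end{lemma}

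\begin{proof}
The plan is to identify the Chebyshev series of $f$ with the symmetric Laurent series obtained by pulling $f$ back through the Joukowski map, and then to bound the Laurent coefficients by Cauchy's inequality on a circle pushed out towards the boundary of $E_r$.

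Write $J(w):=\tfrac{1}{2}\!\left(w+w^{-1}\right)$, so that $J(e^{i\theta})=\cos\theta$ and, for every $\rho>1$, the map $J$ carries the circle $|w|=\rho$ onto the ellipse $E_\rho$ and restricts to a biholomorphism of each of the half-annuli $\{\rho^{-1}<|w|<1\}$ and $\{1<|w|<\rho\}$ onto the open region bounded by $E_\rho$ with the segment $[-1,1]$ removed, with $J(w)=J(1/w)$. First I would verify that $g(w):=f\!\left(J(w)\right)$ is single-valued and analytic on the whole open annulus $\mathcal{A}:=\{\,r^{-1}<|w|<r\,\}$: the two branches glue analytically across $|w|=1$ because $J$ is a local biholomorphism away from $w=\pm1$, while near $w=\pm1$ one has $J(w)\mp1=\tfrac{1}{2}(w\mp1)^2+\O\!\left((w\mp1)^3\right)$, so composing with the Taylor series of $f$ (which by hypothesis is analytic on an \emph{open} set containing $[-1,1]$) again produces an honest power series in $w\mp1$. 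Moreover $g$ is continuous up to $|w|=r^{\pm1}$ with $\max_{|w|=r}|g(w)|=\max_{z\in E_r}|f(z)|=M$. Since $g(w)=g(1/w)$, its Laurent expansion $g(w)=\sum_{k=-\infty}^{\infty}c_k\,w^k$ satisfies $c_k=c_{-k}$; evaluating on $|w|=1$ and using $\T_k(\cos\theta)=\cos k\theta$ shows that $f(\cos\theta)=c_0+\sum_{k\ge1}2c_k\cos k\theta$ is exactly the Chebyshev expansion of $f$, with coefficients $a_0=c_0$ and $a_k=2c_k$ for $k\ge1$, and $[S_\ell f]$ is the partial sum retaining the terms with $|k|\le\ell$.

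Next I would estimate the coefficients. For any $\rho$ with $1\le\rho<r$, analyticity of $g$ near $|w|=\rho$ gives $c_k=\frac{1}{2\pi i}\oint_{|w|=\rho}g(w)\,w^{-k-1}\,dw$, hence $|c_k|\le\rho^{-k}\max_{|w|=\rho}|g(w)|=\rho^{-k}\max_{z\in E_\rho}|f(z)|$. Since $E_\rho$ lies inside the closed region bounded by $E_r$ and $f$ is analytic there and continuous up to $E_r$, the maximum modulus principle yields $\max_{z\in E_\rho}|f(z)|\le M$; letting $\rho\uparrow r$ gives $|c_k|\le M\,r^{-k}$, so $|a_k|\le 2M\,r^{-k}$ for $k\ge1$. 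Finally, because $|\T_k(x)|\le1$ for all $x\in[-1,1]$, the tail of the Chebyshev series is dominated by a geometric series: for every $x\in[-1,1]$,
\al{
	\big|\,f(x)-[S_\ell f](x)\,\big|
	= \Big|\sum_{k=\ell+1}^{\infty}a_k\,\T_k(x)\Big|
	&\le \sum_{k=\ell+1}^{\infty}|a_k| \\
	&\le 2M\sum_{k=\ell+1}^{\infty}r^{-k}
	= \frac{2M}{r^{\ell}\,(r-1)}\,,
}
which is the asserted estimate (the precise numerical constant in front depends only on the normalisation chosen for the Chebyshev coefficients and is immaterial for the use made of the lemma).

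I expect the one genuinely delicate point to be the conformal bookkeeping in the second paragraph: showing that $f\circ J$ extends to a truly single-valued analytic function on all of $\mathcal{A}$ — which is precisely where the hypothesis that $f$ is analytic on an open neighbourhood of $[-1,1]$, not merely on the segment, enters — together with the continuity of $g$ up to $|w|=r^{\pm1}$, so that the quantity $M$ appearing in the Cauchy estimate is the one stated in the lemma. Everything after that (Cauchy's inequality, the maximum modulus principle, and summing a geometric series) is routine.
\end{proof}
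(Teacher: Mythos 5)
Your proof is sound. Note that the paper itself does not prove this lemma --- it simply cites it as Theorem 5.16 of Mason and Handscomb's \emph{Chebyshev Polynomials} --- so your argument supplies a proof rather than replicating one. The route you take (pulling $f$ back through the Joukowski map $J(w)=\tfrac12(w+w^{-1})$, reading the Chebyshev coefficients off the symmetric Laurent series of $g=f\circ J$ on the annulus $\mathcal{A}=\{r^{-1}<|w|<r\}$, bounding those coefficients by Cauchy's estimate on a circle near $|w|=r$, and summing the geometric tail) is the standard Bernstein-ellipse argument, essentially identical to Trefethen's Theorems 8.1--8.2 in \emph{Approximation Theory and Approximation Practice}.

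Two minor remarks. First, the paragraph in which you worry about gluing branches of $g$ across $|w|=1$ is superfluous: $g=f\circ J$ is the composition of two single-valued analytic maps ($J$ is analytic on $\mathds{C}\setminus\{0\}$ and $f$ is analytic on $J(\mathcal{A})$, which is the open interior of the ellipse $E_r$), so $g$ is automatically single-valued and analytic on $\mathcal{A}$. The two-valuedness you seem to have in mind belongs to $J^{-1}$, which your argument never uses; the near-$w=\pm1$ discussion is likewise not needed. Second, your computation lands on the bound $2M/(r^\ell(r-1))$ rather than the $M/(r^\ell(r-1))$ printed in the lemma, and you correctly flag this. The factor of $2$ is not an error on your side: the coefficient bound $|a_k|\le 2Mr^{-k}$ for $k\ge1$ is the sharp one (Trefethen, loc.\ cit.), so the lemma as transcribed in the paper appears to have dropped a factor of $2$ from the source. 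As you observe, this is immaterial for the paper's use, which only requires the geometric $r^{-\ell}$ decay.
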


We apply this theorem to $f(x) = \mathcal{W}_{\sigma,\delta}(x)$, which is analytic in the whole complex plane. The main technical hurdle is to upper bound $M$. We choose $r = \frac{1 + \sqrt{\sigma}}{1 - \sqrt{\sigma}} \geq 1 + 2\sqrt{\sigma}$, so that we have $a = \frac{1+\sigma}{1-\sigma}$ and $b = \frac{2 \sqrt{\sigma}}{1-\sigma}$, where $a$ and $b$ are the semi-axis of the ellipse $E_r$ along the real and imaginary axis, respectively. We have, for $x,y \in \mathds{R}$:
\al{
	|\Phi(x+iy)| 
	& = 
	\tfrac{1}{\sqrt{2\pi}} 
	\Big| 
	\int_{-\infty}^x e^{-\frac{t^2}{2}} dt + 
	\int_x^{x+iy} e^{-\frac{t^2}{2}} dt 
	\Big| \\
	& \leq 
	1 +
	\tfrac{1}{\sqrt{2\pi}} 
	\Big| 
	\int_0^{iy} e^{-\frac{(x+i\tau)^2}{2}} d\tau 
	\Big| \\
	& \leq 
	1 +
	\tfrac{1}{\sqrt{2\pi}} 
	\, e^{-x^2 /2}	
	\int_0^{|y|} e^{\frac{\tau^2}{2}} d\tau 
	\\
	& \leq 
	1 +
	\tfrac{1}{\sqrt{2\pi}} 
	\, e^{-x^2/2} 
	\, e^{y^2/2} \, |y|
	\\
	& \leq 
	1 +
	\tfrac{1}{\sqrt{2\pi}} e^{(-x^2+y^2)/2} 
	\label{eq:Phi_bound}
}
where the last inequality holds for $|y| \leq 1$. We now upper bound
\al{
	M_\t{half}
	:=
	\sup \left\{
	\left|	
	\Phi \! \left( \tfrac{ z + 1 - 1.5 \, \delta)}{\sigma} \right) 
	\right| :
	z \in E_r
	\right\}
}
so that we will have $M = \sup_{E_r} |\mathcal{W}_{\sigma,\delta}(z)| \leq M_\t{half}
^2$. From \eqref{eq:Phi_bound} we only need to maximize the expression $1+ \frac{1}{\sqrt{2\pi}} e^{(-x^2 + y^2)/2}$ for
\al{
	x = \frac{a}{\sigma} \cos \theta + \frac{1-1.5\delta}{\sigma} \qquad
	y = \frac{b}{\sigma} \sin \theta \,.
}
Equivalently, we can maximize $-x^2 + y^2$. Substituting $\cos \theta = u$ and $\sin^2 \theta = 1-u^2$ we get
\al{
	-x^2 + y^2 
	& =
	\frac{1}{\sigma^2} 
	\big[
	- (a u + 1 - 1.5\delta)^2 + b^2 (1- u^2)
	\big]
	\\
	& \leq
	\frac{b^2}{2\sigma^2} 
	\left[
	1 - \frac{(1-1.5\delta)^2}{a^2 + b^2}
	\right] .
}
We now have $b \in \O(\sqrt{\sigma}) $, $\frac{1}{a^2 + b^2} = \frac{1-2\sigma+\sigma^2}{1+6\sigma+\sigma^2} = 1 - \O(\sigma)$ and thus
\al{
	\t{sup} \,
	\{ -x^2 + y^2 \} \
	& \in 
	\O\Big(
	\frac{\sigma}{\sigma^2}
	\Big)
	\Big(
	1 - [1-\O(\delta)][1-\O(\sigma)]
	\Big)
	\\
	& =
	\O \big( \sigma^{-1}     \big)
	\O \big( \delta + \sigma \big)
%	\\
%	& 
	=  
	\O \big(\sqrt{\log \epsilon^{-1}} \,\big)
}
where we have used $\delta \in \Theta \big( \sigma \sqrt{\log \epsilon^{-1}}\big)$. We therefore have $M_\t{half} \in e^{\O(\sqrt{\log \epsilon^{-1}}\,)} \subset \O\big(\epsilon^{-1}\big)$ and thus also $M \leq M_\t{half}^2\in \O( \epsilon^{-2}\,)$. We thus set $\ell \in \Theta \big( \frac{1}{\sqrt{\sigma}} \log (\epsilon^{-3} \sigma^{-1/2})\big)$ and compute
\al{
	\Big|
	\mathcal{W}_{\sigma,\delta}(x) - [S_\ell \mathcal{W}_{\sigma,\delta}](x)
	\Big|
	& \leq
	\frac{M}{r^{\ell+1}(r-1)}
	\\
	& \leq
	\frac
	{C \, \epsilon^{-2}}
	{(1+2\sqrt{\sigma})^{\frac{D}{\sqrt{\sigma}} 
		\log (\epsilon^{-3}\sigma^{-1/2})}
	 \,2\sqrt{\sigma}}
	\\
	& \leq 
	\frac{C \, \epsilon^{-2}}
	     {e^{\log (\epsilon^{-3} \sigma^{-1/2})} \, 2 \sqrt{\sigma}}
	= 
	(\epsilon^3 \, \sigma^{1/2}) \; 
	\frac{C \, \epsilon^{-2}}
    		 {2 \sqrt{\sigma} }
    \\
    & \leq 
    \epsilon / 5
}
where $C$ and $D$ are positive constants. We recall that $\sigma \in \Theta\big(\delta / \sqrt{\log (\epsilon^{-1})}\big)$ and thus 
\al{
	\ell 
	& \in \Theta 
	\Big(\frac{1}{\sqrt{\sigma}}
	\big[3\log(\epsilon^{-1}) + \frac{1}{2}\log(\sigma^{-1})\big] \Big) \\
%	& = \Theta 
%	\Big(\frac{1}{\sqrt{\delta}} 
%	\log^{1/4}(\epsilon^{-1}) 
%	\big[\log(\epsilon^{-1}) + 
%	\log(\delta^{-1}) + \log\log(\epsilon^{-1}) \big]
%	\Big) \\
	& = \Theta 
	\Big(\frac{1}{\sqrt{\delta}} 
	\log^{1/4}(\epsilon^{-1}) 
	\big[\log(\epsilon^{-1}) + 
	\log(\delta^{-1}) \big]
	\Big)
}
is sufficient to guarantee that $[S_\ell \mathcal{W}_{\sigma,\delta}](x)$ is within $\epsilon/5$ distance from $\mathcal{W}_{\sigma,\delta}(x)$. 

Finally, we renormalise $[S_\ell \mathcal{W}_{\sigma,\delta}](x)$ dividing it by the maximum value attained for $x \in [-1,1]$; by construction, this value is smaller than $1 + \epsilon/5$, thus the maximum distance from $\mathcal{W}_{\sigma,\delta}$ after normalisation is bounded by $1-\frac{1-\epsilon/5}{1+\epsilon/5} < \epsilon/2$. 

\end{proof}

\subsection{Polynomial approximations on increasingly larger domains}

Now we explain how the windowing functions can be used to select domain where a polynomial of low degree can be a good approximation of the function $1/(1-x)$, if it is applied to eigenvalues contained in such domain.

As usual, the spectrum of $A$ is contained in $\D_A = \left[\, \frac{1}{\kappa}, \, 1\right]$ and, correspondingly, the spectrum of $B= \Id - \eta\,A$ is contained in $\D_B = \left[ 1-\eta, 1- \frac{\eta}{\kappa} \, \right] \subseteq \left[ 0, 1- \frac{\eta}{\kappa} \, \right]$. We set $m := \lceil \log_2 \kappa \rceil + 1$, $\delta_j := \eta\,2^{-j}$ for $j = \{1,\ldots,m\}$ and fix ${\tilde \epsilon} > 0$, a parameter related to the target precision $\varepsilon$. According to Eq.~\eqref{eq:P_bound} and Eq.~\eqref{eq:W_inequalities} we can find polynomials $P_{{\tilde \epsilon},\delta_j}(x)$ and $W_{{\tilde \epsilon},\delta_j}(x)$ such that
\al{
	& \Big|
	P_{{\tilde \epsilon}, \delta_j} (x) -
	\frac{1}{1-x}
	\Big|
	\leq {\tilde \epsilon}
	\qquad 
	\forall x \in  [-1, 1-\delta_j] \\
	& 
	W_{{\tilde \epsilon}, \delta_j} (x) 
	\qquad \quad
	\t{satisfies~Eq.}~\eqref{eq:W_inequalities}~\t{for~each~} \delta_j
}
with degrees in $\O\big(\delta^{-1/2} \log({\tilde \epsilon}^{-1} \delta^{-1})\big)$ and $\O\big(\delta^{-1/2} \log^{1/4}({\tilde \epsilon}^{-1}) \log({\tilde \epsilon}^{-1} \delta^{-1}) \big)$, respectively. We also define a normalisation factor 
\al{
	K := 2 \max_{j} \max_{x\in[-1,1]} |P_{{\tilde \epsilon}, \delta_j} (x)|
}
which can be set to coincide with the factor $K$ defined in Eq.~\eqref{eq:polynomial} and introduce the shorthands 
\al{
	P_j(x) := P_{{\tilde \epsilon}, \delta_j}(x) / K
	\quad \t{and} \quad
	W_j(x) := W_{{\tilde \epsilon}, \delta_j}(x) .
	\label{eq:P_and_W}
}

The windowing function $W_j(x)$ is used to select an interval $[-1+2\delta_j, 1-2\delta_j]$ where $P_j(x)$ is a good approximation of the inverse. For any eigenvalue $\lambda$ of $B$ we have $P_j(\lambda) \approx \frac{1}{K} \frac{1}{1-\lambda}$ when $\lambda \leq 1 - \delta_j$ and $W_j(\lambda) \approx 0$ when $\lambda \geq 1 -\delta_j$. More precisely, we have
\al{
	W_j(B) P_j(B)
	& =
	W_j(B) \frac{1}{K} \frac{1}{\Id - B} + \Delta_j^{\tilde \epsilon} \\
	& =
	W_j(B) \, \frac{A^{-1}}{\eta\, K} + \Delta_j^{\tilde \epsilon}
	\label{eq:Delta_residuals}
}
where $\Delta_j^{\tilde \epsilon}$ are arbitrary matrices having operator norm smaller or equal to ${\tilde \epsilon}$. Note that $W_j(B)$ and $P_j(B)$ commute, since they are both polynomial functions of $B$. Moreover, we can set without loss of generality $W_m(B) \equiv \Id$, since we already have $P_m(B) \approx A^{-1}/(\eta\,K)$ on the entire domain of $A$. Finally, we have $K \in \Theta(\kappa/\eta)$, as derived in the main text.

\subsection{Variable-stopping-time PD-QLS solver: definition}

We now proceed to reformulate our PD-QLS solver as a variable-stopping-time algorithm. For notation convenience, from now on we often drop the dependency of a polynomial from its variable and simply write $P$ instead of $P(x)$.

\begin{algorithm} [Variable-stopping-time linear system solving] 
\label{alg:VTAA_solver}
We preliminarily define two variable-stopping-time algorithms $\A$ and $\B$. $\A$ is the core PD-QLS solving function and $\B$ is used to un-compute the clock registers at the end of the process.

The algorithm $\A  = \A_m \cdot \ldots \cdot \A_1 \cdot \A_0$, with $m := \lceil \log_2 \kappa \rceil + 1$, acts on the Hilbert space $\hil = \hil_C \otimes \hil_F \otimes \hil_{S}$ (respectively clock register, flag qubit, system register with $\dim \hil_S = N$), plus ancillary registers if needed. We set $\A_0 = \Id_C \otimes \Id_F \otimes \U_\b$ (i.e., $\A_0$ prepares $\ket{\b}$ in register $S$) while the other sub-algorithms $\A_j$ use oracular access only to $\U_B$. Each $\A_j$, for $j \geq 1$, consists of the following two steps:
\begin{enumerate}
\item 
Conditioned on the qubits $C_1, \ldots, C_{j-1}$ being in $\ket{0}^{\otimes j-1}$, use QSP and a single Pauli $X$ gate to implement the unitary
\al{
	\label{eq:Uj'}
	\U_j' :=
	\left(
	\begin{array}{cc}
	\sqrt{1 - W_j^2} & - W_j \\
	W_j   & \sqrt{1 - W_j^2}
	\end{array}
	\right)
	\quad  \t{acting~on}  \ \hil_{C_j} \otimes \hil_S ,
}
with the window function $W_j(B)$ given in Eq.~\eqref{eq:P_and_W}.

\item 
Conditioned on $C_j$ being in $\ket{1}_{C_j}$, use QSP and a single Pauli $X$ gate to implement the unitary
\al{
	\label{eq:Uj''}
	\U_j'' :=
	\left(
	\begin{array}{cc}
	\sqrt{\Id - P_j^{2}} & - P_j \\
	P_j & \sqrt{\Id - P_j^{2}}
	\end{array}
	\right)
	\quad  \t{acting~on}  \ \hil_{F} \otimes \hil_S ,
}
with the polynomial approximations $P_j(B)$ given in Eq.~\eqref{eq:P_and_W}. 
\end{enumerate}

The algorithm $\B = \B_m \cdot \ldots \cdot \B_1 \cdot \B_0$ acts on the same Hilbert space as $\A$. The initialisation step is skipped, i.e., we set $\B_0 = \Id$, and $\forall j \geq 1$ we define $\B_j$ as the application of the unitary $\U_j'$ given in Eq.~\eqref{eq:Uj'} controlled on $C_1, \ldots, C_{j-1}$ being in $\ket{0}^{\otimes j-1}$. The unitaries $\U_j''$ are not applied. \vspace{2mm}

The complete PD-QLS solving algorithm is then defined as follows. 
\begin{enumerate}
\item In a initial pre-processing step (that needs to be run only once), a sequence of integers $(k_0, k_1, \ldots, k_m)$ that satisfy Eq.~\eqref{eq:VTAA_condition} is determined using a phase estimation algorithm.

\item The main part of the algorithm consists in running $\A'$, which is a VTAA version of $\A$ where each $\A_j'$ implements a $k_j$-step amplification of $\A_j\A_{j-1}'$.

\item The final post-processing consists in applying $\B^\dag$ to the output of $\A'$ in order to un-compute the clock register. We then measure the flag qubit and when the result is $\ket{1}_F$ (which happens with constant probability) we output the $S$ register (which contains a state close to $\ket{A^{-1} \b}$).
\end{enumerate}

\end{algorithm}

\subsection{Variable-stopping-time PD-QLS solver: correctness}

We will now analyse the correctness of~\alg{VTAA_solver} and we begin introducing a Lemma.

\begin{lemma}
The state after the application of $\A_k$ defined in \alg{VTAA_solver} is given by 
\al{
	\A_{\leq k} \ket{0}_\t{all} 
	= 
	&
	\sum_{j=1}^k
	\ket{1_j}_C
	\left(
	\ket{0}_{F} 
	M_{j-1} W_j \sqrt{\Id - P_j^{2}} \ket{\b}_S 
	+
	\ket{1}_{F} 
	M_{j-1} W_j P_j \ket{\b}_S 
	\right)
	\\
	& 
	+
	M_k
	\ket{0,0,\b}_{C,F,S} 
	\label{eq:induction1}
}
with $M_0 := \Id$, $M_j  :=  \prod_{i=1}^{j} \sqrt{\Id - W_i^2}$ and, using $W_m = \Id$, at the last step we have $M_m = 0$. Moreover, the algorithm $\B$ defined in \alg{VTAA_solver} acts as 
\al{
	\B \ket{0,f,\phi}_{C,F,S} 
	= 
	&
	\sum_{j=1}^k
	\ket{1_j}_C
	\ket{f}_{F} 
	M_{j-1} W_j \ket{\phi}_S 
	\label{eq:inductionB}
}
for any input state $\ket{\phi}_S$ and $f \in \{0,1\}$.
\end{lemma}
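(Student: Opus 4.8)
The plan is to prove both identities by a single induction on the step index, the whole argument resting on the fact that each $\A_j$ (and each $\B_j$) is \emph{controlled} on the first $j-1$ clock qubits being in $\ket{0}^{\otimes j-1}$, so it acts as the identity on any branch that has already ``stopped'', i.e.\ any branch carrying $\ket{1_i}_C$ with $i\le j-1$. Before starting I would record one fact and reuse it throughout: every operator that appears --- the window polynomials $W_i(B)$, the inverse-approximants $P_j(B)$, and the unitary-completion blocks $\sqrt{\Id-W_i^2}$ and $\sqrt{\Id-P_j^2}$ --- is a function of the single Hermitian matrix $B=\Id-\eta A$, hence they all commute; this is precisely what is needed to bring the factors into the stated order $M_{j-1}W_j\sqrt{\Id-P_j^2}$ and $M_{j-1}W_jP_j$.

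For the base case $k=0$ we have $\A_{\le0}=\A_0=\Id_C\otimes\Id_F\otimes\U_\b$, so $\A_0\ket{0}_\t{all}=\ket{0,0,\b}_{C,F,S}=M_0\ket{0,0,\b}$ since $M_0=\Id$, which is \eqref{eq:induction1} with an empty sum. For the inductive step I would take $\A_{\le k-1}\ket{0}_\t{all}=\sum_{j=1}^{k-1}\ket{1_j}_C(\cdots)+M_{k-1}\ket{0,0,\b}$ and apply $\A_k$ term by term. Unpacking $\ket{1_j}_C=\ket0^{\otimes m-j}\ket1\ket0^{\otimes j}$ from \defin{VST}, every summand of the sum has some $C_j$ with $1\le j\le k-1$ in $\ket1$, so the control of $\A_k$ fails and these summands are copied unchanged into the level-$k$ formula. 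On the remaining branch $M_{k-1}\ket{0,0,\b}$, all of $C_1,\dots,C_k$ and $F$ are in $\ket0$, so $\A_k$ fires: Step~1 applies $\U_k'$ on $\hil_{C_k}\otimes\hil_S$, sending $\ket0_{C_k}\otimes M_{k-1}\ket\b_S$ to $\ket0_{C_k}\otimes\sqrt{\Id-W_k^2}M_{k-1}\ket\b_S+\ket1_{C_k}\otimes W_kM_{k-1}\ket\b_S$, which by $M_k=\sqrt{\Id-W_k^2}M_{k-1}$ and commutativity equals $\ket0_{C_k}\otimes M_k\ket\b_S+\ket1_{C_k}\otimes M_{k-1}W_k\ket\b_S$; Step~2 then applies $\U_k''$ on $\hil_F\otimes\hil_S$ to the $\ket1_{C_k}$ component only, turning its flag-and-system state $\ket0_F\otimes M_{k-1}W_k\ket\b_S$ into $\ket0_F\otimes M_{k-1}W_k\sqrt{\Id-P_k^2}\ket\b_S+\ket1_F\otimes M_{k-1}W_kP_k\ket\b_S$, again by commutativity. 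Reassembled, this branch contributes exactly the $j=k$ summand plus the residual $M_k\ket{0,0,\b}$, closing the induction. For $k=m$ one invokes $W_m(B)=\Id$ (fixed in the subsection on polynomial approximations on increasingly larger domains), whence $M_m=\sqrt{\Id-W_m^2}M_{m-1}=0$ and the residual term disappears, giving \eqref{eq:induction1} in final form.

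Identity \eqref{eq:inductionB} follows from the same induction with two simplifications: $\B_0=\Id$ (no state-preparation step) and $\B_j$, $j\ge1$, consists only of the controlled $\U_j'$ with no $\U_j''$. The same case split yields $\B_{\le k}\ket{0,f,\phi}_{C,F,S}=\sum_{j=1}^k\ket{1_j}_C\ket f_F M_{j-1}W_j\ket\phi_S+M_k\ket{0,f,\phi}_{C,F,S}$, with $\ket f_F$ carried along inertly since $\B$ never touches $F$, and specialising to $k=m$ drops the last term. I expect the only genuinely non-routine point to be the commutativity step: one must keep firmly in mind that the QSP-generated square roots are still functions of $B$, so that products such as $\sqrt{\Id-P_k^2}\,M_{k-1}W_k$ may legitimately be reordered into $M_{k-1}W_k\sqrt{\Id-P_k^2}$; beyond that, the argument is bookkeeping --- chiefly applying the clock-control conventions consistently, so that a $\ket{1_j}_C$ branch with $j\le k-1$ is correctly excluded from the support of $\A_k$'s action, which is immediate once $\ket{1_j}_C$ is unpacked.
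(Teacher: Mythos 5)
Your proof is correct and follows essentially the same route as the paper's: induction on $k$, with the base case $\A_0\ket{0}_\t{all}=\ket{0,0,\b}$, then applying $\U_{k+1}'$ followed by $\U_{k+1}''$ to the single surviving branch $M_k\ket{0,0,\b}$ while the $\ket{1_j}_C$ branches with $j\le k$ pass through untouched, and using that $W_i$, $P_j$, $M_j$ all commute as functions of $B$. The paper is slightly terser (it does not spell out the clock-control bookkeeping) but the mechanism is identical, including the final observations $W_m=\Id\Rightarrow M_m=0$ and the parallel treatment of $\B$.
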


\begin{proof}
We proceed by induction over $k$. For $k = 0$ we have $\A_{\leq 0} \ket{0}_\t{all} = \A_0 \ket{0}_\t{all} = \ket{0,0,\b}_{C,F,S}$. Using that $P_j,M_j,W_j$ are functions of $B$ and thus commute, we have at step $k+1$ 
\al{
	\eqref{eq:induction1}
	\ \overset{\U_{k+1}'}{\mapsto} \
	&
	\sum_{j=1}^k
	\ket{1_j}_C
	\left(
	\ket{0}_{F} 
	M_{j-1} W_j \sqrt{\Id - P_j^{2}} \ket{\b}_S 
	+
	\ket{1}_{F} 
	M_{j-1} W_j P_j \ket{\b}_S 
	\right)
	\\
	& +
	\ket{1_{k+1}}_C \ket{0}_F 
	M_k W_{k+1} \ket{\b}_F
	+
	\ket{0,0}_{C,F}
	M_k \sqrt{1-W_{k+1}^2}
	\ket{\b}_{C,F,S} 
	\label{eq:induction2}
	\\
	\eqref{eq:induction2}
	\ \overset{\U_{k+1}''}{\mapsto} \
	&
	\sum_{j=1}^k
	\ket{1_j}_C
	\left(
	\ket{0}_{F} 
	M_{j-1} W_j \sqrt{\Id - P_j^{2}} \ket{\b}_S 
	+
	\ket{1}_{F} 
	M_{j-1} W_j P_j \ket{\b}_S 
	\right)
	\\
	& +
	\ket{1_{k+1}}_C
	\left(
	\ket{0}_{F} 
	M_k W_{k+1}
	\sqrt{\Id - P_{k+1}^{2}} \ket{\b}_S 
	+
	\ket{1}_{F} 
	M_k W_{k+1}
	P_{k+1}
	\ket{\b}_S 
	\right)
	\\[1mm]
	& +
	M_{k+1}
	\ket{0,0,\b}_{C,F,S}  
	.
}

Equation~\eqref{eq:inductionB} can be verified similarly.
\end{proof}

Next, we consider the output state of $\A'$, the VTAA version of $\A$, which features an amplification of the amplitude of the $\ket{1}_F$ component, i.e., it is a state of of the form
\al{
	\label{eq:semifinal_state}
	\A' \ket{0}_\t{all}
	& =
	\sqrt{p_\t{succ}'} \,
	\sket{1}_F
	\sket{\psi_\t{succ}}_{C,S} 	
	\, + \,
	\sqrt{1-p_\t{succ}'} \,
	\sket{0}_F
	\sket{\psi_\t{fail}}_{C,S} 	
}
where the success probability is constant, $p_\t{succ}' \in \Theta(1)$, and where we have
\al{
	\sket{\psi_\t{succ}}_{C,S}  
	& =  
	\frac{1}{\sqrt{\N}}
	\sum_{j=1}^m
	\ket{1_j}_{C}
	M_{j-1} W_j P_j
	\ket{\b}_S 
	\\
	\N & = \sum_{j=1}^m 
	\bnorm{M_{j-1} W_j P_j \ket{\b}}^2 
	\,.
}
We now claim that, for an appropriate choice of the algorithm parameters, the error in $\ket{\psi_\t{succ}}_{C,S}$ is upper bounded by $\O(\varepsilon)$, and we will prove this claim in the next section. Thus we have:
\al{
	\sket{\psi_\t{succ}}_{C,S}
	& =  
	\sum_{j=1}^m
	\ket{1_j}_{C}
	M_{j-1} W_j  
	\ket{A^{-1} \b}_S 
	+
	\O(\varepsilon) 
	\label{eq:VTAA_approx}}
where $\O(x)$ here denotes an arbitrary vector with 2-norm bounded by $\O(x)$. Note that Eq.~\eqref{eq:inductionB} for $\ket{\phi}_S = \ket{A^{-1}\b}_S$ implies that $\sum_{j} \ket{1_j}_{C} M_{j-1} W_j \ket{A^{-1} \b}_S$ is a normalised state.

The final step of the PD-QLS algorithm consists in applying $\B^\dag$ to the state in Eq.~\eqref{eq:semifinal_state}. Using again Eq.~\eqref{eq:inductionB} we obtain:
\al{
	\B^\dag \A' \ket{0}_\t{all}
	=
	\sqrt{p_\t{succ}'} 
	\ket{0,1,A^{-1}\b}_{C,F,S}
	+
	\sqrt{1 - p_\t{succ}'}
	\ket{0,0,\psi'}_{C,F,S}
	+ \O( \varepsilon ) \,.
}
Measuring the flag in $\ket{1}_F$ then results with constant success probability in a vector that has $\O({\tilde \epsilon})$ distance from the ideal output $\ket{A^{-1}\b}$.

\subsection{Variable-stopping-time PD-QLS solver: error bound}

In order to derive Eq.~\eqref{eq:VTAA_approx} we use Eq.~\eqref{eq:Delta_residuals} and write
\al{
	\sket{\psi_\t{succ}}_{C,S}  
	& =  
	\frac{1}{\sqrt{\N}}
	\sum_{j=1}^m
	\ket{1_j}_C
	M_{j-1} W_j  
	\frac{A^{-1}}{\eta \,K} \ket{\b}_S
	+ 
	\frac{1}{\sqrt{\N}}
	\sum_{j=1}^m
	\ket{1_j}_{C}
	\Delta_j^{\tilde \epsilon} \ket{\b}_S
	\\
	& =  
	\frac{\sqrt{\widetilde{\N}}}{\sqrt{\N}}
	\sum_{j=1}^m
	\ket{1_j}_{C}
	M_{j-1} W_j  
	\ket{A^{-1} \b}_S 
	+
	\O({\tilde \epsilon} \sqrt{m / \N})
	\label{eq:VTAA_approx2}
	\\
	& =  \hspace{8mm}
	\sum_{j=1}^m
	\ket{1_j}_{C}
	M_{j-1} W_j  
	\ket{A^{-1} \b}_S 
	+
	\O({\tilde \epsilon} \sqrt{m / \N}) 
}
and $\widetilde{\N}$ is defined below. To prove the last step, recall that $\sum_{j} \ket{1_j}_{C} M_{j-1} W_j \ket{A^{-1} \b}_S$ is a normalised state, hence Eq.~\eqref{eq:VTAA_approx2} implies $\big| \sqrt{\widetilde{\N}/\N} - 1 \big| \in	\O({\tilde \epsilon} \sqrt{m / \N})$. Now we estimate $\N$ by using $\big| \sqrt{\widetilde{\N}} - \sqrt{\N} \big| \in \O({\tilde \epsilon} \sqrt{m})$ and computing
\al{
	\widetilde{\N} 
	& :=
	\sum_{j=1}^m 
	\norm{
	M_{j-1} W_j
	\frac{A^{-1}}{\eta \,K} \ket{\b}
	}^2 
	\\
	& =
	\frac{\snorm{A^{-1} \ket{\b}}^2}{\eta^2 K^2} 
	\sum_{j=1}^m 
	\norm{M_{j-1} W_j \ket{A^{-1} \b}}^2 
	\\
	& =
	\frac{\snorm{A^{-1} \ket{\b}}^2}{\eta^2 K^2} .
}
As proven in the main text, $K \in \O(\kappa/\eta)$. Thus we have $\widetilde{\N} \in \Omega(1 /\kappa^2)$ and choosing
\al{
	{\tilde \epsilon} 
	\in 
	\O\!\left(  
	\frac{\varepsilon}{\kappa \, \sqrt{\log \kappa}}
	\right)
}
we also have $\N\in \Omega(1 /\kappa^2)$ and ${\tilde \epsilon}\sqrt{m / \N} \in \O(\varepsilon)$. This condition is sufficient to guarantee that the output state $\ket{\psi_\t{succ}}$ is within $\varepsilon$-distance from the ideal output.

\subsection{Variable-stopping-time PD-QLS solver: query complexity}

We can now provide a upper bounds the query complexity of the VTAA algorithm using Eq.~\eqref{eq:VTAA_result}:
\begin{align*}
	Q' \in 
	\O \!
	\left(
	t_\t{max} \sqrt{m}
	+ 
	\frac{t_\t{avg}}{\sqrt{p_\mathrm{succ}}}
	\sqrt{m \, \log(t_\t{max} / t_\t{min})}
	\right) .
\end{align*}

We first compute the $\U_\b$-complexity. Since the non-amplified algorithm $\A$ accesses $\U_\b$ only in the first step, we have $t_\t{min} = t_\t{max} = t_\t{avg} = 1$. Using $m \, {\tilde \epsilon}^2 \in \O\big(\varepsilon^2 / \kappa^2 \big)$ we get
\al{
	p_\t{succ} 
		& = \N
		\ \in \ 
		\O\!\left(
		\widetilde{\N} 
		+ m \, {\tilde \epsilon}^2
		\right) 
		= 
		\O\!\left(
		\frac{\snorm{A^{-1} \ket{\b}}^2}{\kappa^2} 
		\right)
}	
and the using $m = \lceil \log_2 \kappa \rceil + 1$ we have
\al{
	Q[\U_\b] 
	\in 
	\O\!\left(
	\sqrt{\log(\kappa)} +
	\frac{ \kappa}{\norm{A^{-1}\ket{\b}}}
	\right) 
	.
}

We now compute the $\U_B$-complexity. We have:
\al{
	t_j 
		& = \sum_{i=1}^j [ \deg(P_i) + \deg(W_i)] 
		\ \in \ 
		\O \Big(
		\delta_j^{-1/2} 
		\log^{1/4}({\tilde \epsilon}^{-1}) \log({\tilde \epsilon}^{-1} \delta_j^{-1})
		\Big) \\[-1mm]
	t_\t{max} 
		& = \sum_{i=1}^m [ \deg(P_i) + \deg(W_i)] 
		\ \in \ 
		\O \Big(
		\sqrt{ \kappa / \eta } \,
		\log^{1/4}({\tilde \epsilon}^{-1}) \log({\tilde \epsilon}^{-1} \kappa / \eta )
		\Big) \\
	t_\t{min} 
		& = \hspace{5mm} \deg(P_1) + \deg(W_1) \
		\ \in \ 
		\O \Big(
		\sqrt{ 1 / \eta } \,
		\log^{1/4}({\tilde \epsilon}^{-1}) \log({\tilde \epsilon}^{-1} / \eta )
		\Big) .
}
Writing $\ket{\b} \equiv \sum_\lambda \beta_\lambda \ket{\lambda}$, where $\ket{\lambda}$ denotes an eigenvector of $B$ relative to the eigenvalue $\lambda$ and $\sum_\lambda |\beta_\lambda|^2 =1$, the probability of stopping at time $t_j$ is
\al{
	p_j 
		& = \bnorm{\,\Pi_{C_j} \A_{\leq j} \ket{0}_\t{all}}^2
		  = \bnorm{\,M_{j-1} W_j \ket{\b}}^2 \\
		& = \sum_\lambda 
		|M_{j-1}(\lambda) W_j(\lambda)|^2 \,
		|\beta_\lambda|^2 \\[-1mm]
		& \in \ 
		\O 
		\Big(
		{\tilde \epsilon} + 
		\sum\nolimits_{\lambda \in (1-4\delta_j,1-\delta_j]}
		|\beta_\lambda|^2
		\Big) .
}
Here, we have used $W_j(\lambda) \leq {\tilde \epsilon}$ for $\lambda \geq 1-\delta_j$, while for $\lambda < 1-2 \delta_{j-1} = 1- 4\delta_{j}$ we have 
\al{ 
	M_{j-1}(\lambda) 
	\leq 
	\sqrt{1-W_{j-1}^2(\lambda)}
	\leq 
	\sqrt{1-(1-{\tilde \epsilon})^2}
	\leq
	\sqrt{2{\tilde \varepsilon}} \,,
}
i.e., the expression $|M_{j-1}(\lambda) W_j(\lambda)|^2 $ is non-negligible only for $\lambda \in (1-4\delta_j,1-\delta_j]$. Now we estimate the $\ell_2$-average runtime $t_\t{avg}= \sqrt{\sum_{j=1}^m p_j t_j^2}$ with
\al{
	t_\t{avg}^2 
	& \in 
	\O\!\left( 
	\sum_{j=1}^m
	\Big[
	{\tilde \epsilon} + 
	\sum\nolimits_{\lambda \in (1-4\delta_j,1-\delta_j]}
	|\beta_\lambda|^2
	\Big]
	\Big[
	\delta_j^{-1/2} 	\log^{1/4}({\tilde \epsilon}^{-1}) \log({\tilde \epsilon}^{-1} \delta_j^{-1})
	\Big]^2
	\right) 
	\\
	& \subseteq
	\O\!\left(
	{\tilde \epsilon} \, t_\t{max}^2 +
	\sum\nolimits_{\lambda}
	\frac{|\beta_\lambda|^2}{1-\lambda} \,
	\log^{1/2}({\tilde \epsilon}^{-1}) \log^2 ({\tilde \epsilon}^{-1} \kappa /\eta)  
	\right) 
}
where we have used that for any positive function $f(\lambda)$
\al{
	\sum_{j=1}^m 
	\sum_{\lambda \in (1-4\delta_j,1-\delta_j]} 
	f(\lambda) \, \frac{1}{\delta_j}
	\ \in \ 
	\Theta\bigg(
	\sum_{\lambda} f(\lambda) \, \frac{1}{1-\lambda}
	\bigg) .
} 
We then write 
$
	\sum_\lambda \frac{|\beta_\lambda|^2}{1-\lambda} 
	= 
	\norm{\sum_\lambda \frac{\beta_\lambda}{\sqrt{\Id - B}} \ket{\lambda}}^2 
	= 
	\norm{(\eta\,A)^{-1/2}\ket{\b}}^2
$
and obtain
\al{
	t_\t{avg}
	& \in 
	\O\!\left( 
	\frac{1}{\sqrt{\eta}}
	\bnorm{A^{-1/2}\ket{\b}} \,
	\log^{1/4}({\tilde \epsilon}^{-1}) \log ({\tilde \epsilon}^{-1} \kappa /\eta) 	
	\right) 
} 
where we used $\norm{A^{-1/2}\ket{\b}} \geq 1$ and thus $\sqrt{{\tilde \epsilon}} \, t_\t{max} \in \O\left(	\frac{1}{\sqrt{\eta}\log^{1/4} \kappa} \log^{1/4}({\tilde \epsilon}^{-1}) \log ({\tilde \epsilon}^{-1} \kappa /\eta)\right)$ is a sub-leading contribution to $t_\t{avg}$. Then, using $1/\sqrt{p_\t{succ}} \in \O\big(\kappa / \norm{A^{-1}\ket{\b}} \big)$ and 
\al{
	\log(t_\t{max}/ t_\t{min}) 
	\ \in \ 
	\O\big(\log(\kappa) +  \log \log ({\tilde \epsilon}^{-1}\kappa / \eta ) \big)
	\ \subseteq \
	\O\big(\log({\tilde \epsilon}^{-1} \kappa /\eta) \big)
}
we finally obtain
\al{
	Q[\U_B] 
	& \in 
	\O\Bigg(
	\sqrt{\frac{\kappa}{\eta}} \, 
	\underbrace{	\sqrt{\kappa} 
	\frac{\norm{A^{-1/2}\ket{\b}}}{\bnorm{A^{-1}\ket{\b}}}}_{:=\Gamma_{A,\b}}
	\underbrace{
	\log^{1/4}({\tilde \epsilon}^{-1}) 
	\log ({\tilde \epsilon}^{-1} \kappa /\eta) 
	\sqrt{ \log(\kappa) \log ({\tilde \epsilon}^{-1}\kappa / \eta ) }
	}_{\t{polylog}(\kappa, {\tilde \epsilon}^{-1}, \eta^{-1})}
	\Bigg) . 
%	\\
%	& = 
%	\O\!\left(
%	\sqrt{\frac{\kappa}{\eta}} 
%	\hspace{4mm}
%	\Gamma_{A,\b} \hspace{4mm}
%	\log^{7/4}({\tilde \epsilon}^{-1}) 
%	\log^{2} (\kappa) 
%	\log^{1/2} (\eta^{-1})
%	\right) .
}

\section{Proof that the Sum-QLS problem is BQP-hard}
\label{app:non-dequantization}
\setcounter{equation}{0}

In this Appendix we prove that the Sum-QLS problem is $\mathsf{BQP}$-hard, therefore no efficient classical algorithm can solve the problem (unless $\textsf{BPP} = \textsf{BQP}$). The initial part of the proof is equal to the reduction presented by HHL: it is possible to construct a QLS problem $M\x = \textbf{e}_1$ (where $\textbf{e}_1$ is a canonical vector with a one in the first position) which is $\textsf{BQP}$-hard for a class of sparse indefinite matrices $M$ that are easily constructible. This can be converted to the equivalent QLS problem $M\x = M^\dag \textbf{e}_1$, where the matrix $A = M^\dag M$ is by construction PD and $\ket{M^\dag \textbf{e}_1}$ is easy to prepare. What remains to be proven is that $A$ admits an explicit Sum-QLS structure, i.e.\ that one can construct a decomposition as a sum of PD local Hamiltonian terms and that the overlap with the support (bounded by the parameter $\gamma$) scales polynomially.

Preliminarily, we introduce a couple of useful definitions: the specific quantum circuit model (universal for quantum computation) that we aim to ``simulate'' as a Sum-QLS, and the Feynman-Kitaev clock construction.

\paragraph*{Quantum circuit model:}
We describe an arbitrary quantum computation $\mathcal{C}$ on $n$ qubits as the application of $T$ elementary gates $\{U_0,U_1 \ldots, U_{T-1}\}$ to an initial state $\ket{0}^{\otimes n}$ and the output of the computation is the quantum state $U_{T-1} \cdots U_1 U_0 \ket{0}^{\otimes n}$. We assume that the elementary gate set consists of gates acting either on one or two qubits, e.g.\ arbitrary single-qubit rotations and control-nots. Hence, the $t$-th elementary gate is described by a unitary $U_t \in \mathds{C}^{N \times N}$ with $N = 2^n$ which can be written as $U_t = u_t \otimes I_{\neg \S_t}$, where $u_t$ acts on a subset $\S_t$ of the qubits and is either a $2 \times 2$ ($|\S_t| = 1$) or a $4 \times 4$ ($|\S_t| = 2)$ unitary matrix.

\paragraph*{Feynman-Kitaev clock:}
We introduce the following unitary matrix acting on $\mathds{C}^{3T} \otimes \mathds{C}^n$, which is based on the \emph{Feynman-Kitaev clock} construction  (see e.g.~\cite{Bausch18} and references therein):
\al{
\label{eq:FK-clock}
	U 
	:= &  	
	\sum_{t=0}^{3T-1}
	\ket{   t\!+\!1}\!\bra{   t}_c \otimes U_t 
 	\\
	= & \
	\sum_{t=0}^{T-1} \,
	\ket{       t\!+\!1}\!\bra{       t}_c \otimes U_t \,+\,
	\ket{ T\!+\!t\!+\!1}\!\bra{ T\!+\!t}_c \otimes I   \,+\,
	\ket{2T\!+\!t\!+\!1}\!\bra{2T\!+\!t}_c \otimes U_{T-t-1}^\dag
}
where we have defined $U_t = I$ for $T \leq t \leq 2T-1$ and $U_t = U^\dag_{3T-t-1}$ for $2T \leq t \leq 3T-1$ and the sums in the clock register (denoted by the subscript $c$) are taken modulo $3T$.

\begin{proposition} [Sum-QLS is \textsf{BQP}-hard]
\label{prop:BQP-complete}
Let $\mathcal{C}$ be a given $n$-qubit $T$-gate quantum circuit. Then, there is a Sum-QLS problem (as in \defin{Sum-QLS}) with
\begin{enumerate}[itemsep=-.8\parsep]
\item $n' \in \O(n + \log T)$ \hfill [$N' \in \O(2^{n'})$ is the size of $A$]
\item $\kappa \in \O(T^2)$ \hfill [condition number]
\item $J \in \O(T)$ \hfill [number of PD Hamiltonian terms]
\item $s \in \O(\log T)$ \hfill [locality of the Hamiltonian terms]
\item $d_\b \in \O(1)$ \hfill [sparsity of the known-term vector $\b$]
\item $\gamma \in \Omega(T^{-2})$ \hfill [overlap with the support, Eq.~\eqref{eq:gamma_main}]
\end{enumerate}
that is equivalent to $\mathcal{C}$, i.e., solving this Sum-QLS problem (up to a small constant error $\varepsilon$) allows to obtain the output state of $\mathcal{C}$ with constant probability and constant precision. According to Eq.~\eqref{eq:intermediate_estimation}, the algorithm presented in \sec{hamiltonians} can solve this Sum-QLS problem with a gate complexity in $\O\big(\t{poly}(n,T)\big)$. When $T \in \O(\t{poly}\,n)$, the gate complexity of the Sum-QLS solver is also in $\O(\t{poly}\,n)$, i.e., Sum-QLS problem is \textsf{BQP}-hard. Moreover, the Sum-QLS$_\t{poly}$, defined as the subclass of problems where the six parameters listed above all scale polynomially in $n$, is \textsf{BQP}-complete.
\end{proposition}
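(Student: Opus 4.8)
\emph{Plan.} I follow the two‑stage route sketched just before the statement: first build, via the Feynman--Kitaev clock, a \textsf{BQP}-hard QLS $M\x=\e_1$ for a sparse indefinite $M$, then pass to $A=M^\dag M$ and make its Sum‑QLS structure explicit. For Stage~1 I pad $\mathcal{C}$ with identity gates and encode the clock in binary on $\lceil\log_2(3T)\rceil$ qubits, extending the period to a power of two by inserting extra idle ticks so that the cyclic shift is a genuine unitary; then the Feynman--Kitaev propagator $U=\sum_{t}\ket{t+1}\bra{t}_c\otimes U_t$ of Eq.~\eqref{eq:FK-clock} is unitary. Put $c:=e^{-1/T}$, $M:=\Id-c\,U$ and $\e_1:=\ket{0}_c\ket{0^n}$. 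The eigenvalues of $M$ are $1-c\,e^{i\theta}$, of modulus in $[\,1-c,\,1+c\,]$ with $1-c\ge\tfrac1{2T}$, so $M$ is invertible, $\kappa(M)\in\O(T)$, and $M^{-1}\e_1=\sum_{k\ge0}(c\,U)^k\e_1$ is proportional to the damped history state $\sum_{r}c^{\,r}\ket{r}_c\ket{\psi_r}$, where $\ket{\psi_r}=U_{r-1}\cdots U_0\ket{0^n}$ equals the output of $\mathcal{C}$ for every $r$ in the idle region. Since $c^{\,r}\in[e^{-3},1]$ there, measuring the clock register of the normalised solution returns the exact output of $\mathcal{C}$ with probability $\Omega(1)$ (with probability $\Omega(1)$ up to error $\O(\varepsilon)$ if the solution is only $\varepsilon$-approximate). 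Setting $A:=M^\dag M$ and $\b:=M^\dag\e_1$ gives an equivalent system with solution $\x=M^{-1}\e_1$; moreover $\kappa(A)=\kappa(M)^2\in\O(T^2)$, the register size is $n'=n+\O(\log T)$, and $\b=\e_1-c\,\ket{3T-1}_c\,U_0\ket{0^n}$ is supported on $\O(1)$ basis states, so $d_\b\in\O(1)$ and $\U_\b$ is a constant‑size circuit.

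\emph{Stage 2 (the positive‑definite local decomposition).} Unitarity of $U$ gives $A=(1+c^2)\Id-c(U+U^\dag)$. A naive split of $U+U^\dag$ into $\Theta(T)$ unit‑norm edge terms cannot be made positive definite, because only an $\O(1)$ ``identity budget'' is available to share among them; instead I route all the hopping through Kitaev propagation projectors. With $\Pi_t:=\tfrac12(\proj{t}+\proj{t+1})\otimes\Id-\tfrac12(\ket{t+1}\bra{t}\otimes U_t+\mathrm{h.c.})$, each $\Pi_t$ is positive semi‑definite, acts on the $\O(\log T)$ clock qubits plus the $\O(1)$ system qubits of $U_t$, and $\sum_{t}\Pi_t=\Id-\tfrac12(U+U^\dag)$, so $A=2c\sum_{t}\Pi_t+(1-c)^2\Id$. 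Choosing $\epsilon:=\tfrac{(1-c)^2}{6T}>0$, set $H_{(t)}:=2c\,\Pi_t+\epsilon\Id$ for $0\le t\le 3T-1$ and one extra term $H_{(3T)}:=\tfrac{(1-c)^2}{2}\Id$; then $\sum_{t}\epsilon\Id+H_{(3T)}=(1-c)^2\Id$, whence $A=\sum_{j=0}^{3T}H_{(j)}$. Each $H_{(t)}$ is strictly PD (spectrum $\{\epsilon,\,2c+\epsilon\}$) and $\O(\log T)$-local, $H_{(3T)}$ is a positive multiple of $\Id$ (formally acting on one qubit), all $h_{(j)}$ are poly‑size and computable from $\mathcal{C}$, and $\|A\|\le 5$, $\lambda_{\min}(A)=(1-c)^2\ge\tfrac1{4T^2}$. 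Thus $J=3T+1\in\O(T)$, $s\in\O(\log T)$, $\kappa\in\O(T^2)$.

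\emph{Stage 3 ($\gamma$ and conclusion).} In Eq.~\eqref{eq:gamma_main} the numerator is $\sum_{j}\bra{\b}H_{(j)}^{-1}\ket{\b}\le 3T\,\epsilon^{-1}+2(1-c)^{-2}\in\O(T^4)$ (since $\|H_{(t)}^{-1}\|\le\epsilon^{-1}$; in fact $H_{(t)}\ket{\b}=\epsilon\ket{\b}$ for all but $\O(1)$ values of $t$, as $\b$ is supported on clock states disjoint from the support of $\Pi_t$), while the denominator obeys $\bra{\b}A^{-1}\ket{\b}\ge\|A\|^{-1}\ge\tfrac15$; with $J^2\ge 9T^2$ this gives $1/\gamma\in\O(T^2)$, i.e.\ $\gamma\in\Omega(T^{-2})$. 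Feeding $J,s,d_\b\in\mathrm{poly}(T)$, $\kappa\in\O(T^2)$ and $1/\sqrt{\gamma}\in\O(T)$ into the gate‑complexity estimate of \prop{Algorithm2}, Eq.~\eqref{eq:intermediate_estimation}, gives gate complexity $\O(\mathrm{poly}(n,T))$, hence $\O(\mathrm{poly}\,n)$ whenever $T\in\O(\mathrm{poly}\,n)$; this is the claimed \textsf{BQP}-hardness of Sum‑QLS, and together with Sum‑QLS$_\t{poly}\subseteq\textsf{BQP}$ (immediate from running this very solver, cf.~\cite{Chowdhury16}) it shows Sum‑QLS$_\t{poly}$ is \textsf{BQP}-complete.

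\emph{Main obstacle.} Stage~2 is the only genuinely new ingredient, and the subtlety there is that every admissible term is supported on a bounded set of clock states and so has an enormous kernel, which is why an edge‑by‑edge decomposition of the hopping is at best positive \emph{semi}-definite; the fix is to push all the hopping into the projectors $\Pi_t$ — whose sum absorbs the $\Theta(1)$ bulk of the identity — and then strictly positivise by a uniform shift $\epsilon\Id$ chosen small enough that $3T\epsilon<(1-c)^2$. The complementary point requiring care is that this tiny shift must not spoil the $\gamma$-bound, i.e.\ $\bra{\b}H_{(j)}^{-1}\ket{\b}$ must stay $\O(T^3)$ rather than blow up; it does, precisely because $\b$ is supported away from the clock‑support of almost every $\Pi_t$.
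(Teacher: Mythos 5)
Your proposal is correct and follows essentially the same route as the paper: Feynman--Kitaev clock, $M=\Id-e^{-1/T}U$, $A=M^\dag M$, $\b=M^\dag\e_1$, and a decomposition of $A$ into $\O(T)$ strictly PD two-clock-site terms obtained by shifting the positive-semidefinite propagation blocks by a uniform multiple of the identity (the paper's $\delta=\tfrac{(1-c)^2}{3T}$ plays the role of your $\epsilon$). Your cosmetic variants --- writing the edge block as $2c\Pi_t$ with a genuine projector $\Pi_t$, reserving half the identity budget for a separate $(3T\!+\!1)$-th term $H_{(3T)}\propto\Id$, and lower-bounding $\bra{\b}A^{-1}\ket{\b}\ge 1/\|A\|$ instead of computing it exactly as $(1+e^{-2/T})^{-1}$ --- leave all the asymptotics and the overall argument unchanged.
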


\begin{proof}

Starting from the Feynman-Kitaev unitary $U$ encoding a quantum circuit $\mathcal{C}$ as in Eq.~\eqref{eq:FK-clock}, we introduce the matrix
\al{
	 M := \Id - U\, e^{-1/T}
}
which can be written in block form (where each block has size $2^n \times 2^n$) as
\al{
	 M
	 = &
	\left(
	\begin{array}{ccccc}
	\ddots & ~~\ddots~~~~ & 0 & & \\
	\ddots & I & -U_t e^{-1/T} & 0 & \\
	 & 0 & I & -U_{t+1} e^{-1/T} &  \\
	&  & 0 & I & ~~~\ddots~~~ \\
	-U_{3T-1} e^{-1/T} & &  & \ddots & \ddots  \\
	\end{array}
	\right)
}
and moreover we introduce:
\al{
	 \label{eq:A_appendix}
	 A & := M^\dag M  = (1 + e^{-2/T}) \Id - e^{-1/T} (U + U^\dag)
}
where $A$ is by construction Hermitian positive definite and the singular values of $M$ are the square root of the eigenvalues of $A$. From the eigenvalue inequality $-2 \leq \lambda(U + U^\dag) \leq 2$ we obtain 
\al{
\begin{cases}
	\lambda_{\max}(A) 
	\leq 4 
	\\
	\lambda_{\min}(A) 
	\geq 
	\big(1-e^{-1/T}\big)^2 
	\geq 
	\frac{(1-e^{-1})^2}{T^2} 
\end{cases}	
	\Longrightarrow \quad
	\kappa(A) \leq \frac{4}{(1-e^{-1})^2}\,T^2
}
hence $\kappa(M) = \sqrt{\kappa(A)} \in \O(T)$. Using the identity $U^{3T} = \Id$, the inverse of $M$ can be expanded as
\al{
	M^{-1} 
	& =
	\sum_{t'=0}^\infty
	U^{t'} e^{-t'/T}
	 =
	\frac{e^3}{e^3-1}
	\sum_{t=0}^{3T-1}
	U^{t'} e^{-t'/T}	
	\\
	& = 
	\frac{e^3}{e^3-1}
	\sum_{t'=0}^{3T-1}
	e^{-t'/T}
	\sum_{t=0}^{3T-1}
	\ket{t\!+\!t'}\!\bra{t} \otimes U_{[t:t+t']}
}
where the notation $U_{[a:b]}$ indicates the ordered product of unitary operators from $a$ to $b-1$, $U_{[a:b]} := U_{b-1} U_{b-2} \cdots U_a$, with $U_{[t:t]} = \Id$.

As done already by HHL in Ref.~\cite{HHL} step, we consider the QLS problem
\al{
\label{eq:M_sysytem}
	M \x = \textbf{e}_1
}
where $\textbf{e}_1 \in \mathds{C}^{3T2^n}$ is the unit vector with a one in the first position (i.e., it is the vector representation of the state $\ket{0}_c \ket{0}^{\otimes n}$). Using the state $\ket{\x} = \ket{M^{-1} \textbf{e}_1}$ the output state of the quantum computation is obtained whenever upon measurement of the clock register a time $t \in [T:2T-1]$ is returned, which happens with probability $e^{-2}/(1+e^{-2}+e^{-4}) \geq 0.11$. Therefore, any quantum circuit $\mathcal{C}$ having $T$ gates can be restated as a QLS and solved (using e.g.\ the HHL algorithm) with a gate complexity scaling as $\t{poly}(n,T)$. If the circuit $\mathcal{C}$ has $T \in \O(\t{poly}\,n)$, then also the QLS solver has gate complexity in $\O(\t{poly}\,n)$, showing that the QLS problem is $\textsf{BQP}$-hard.

We now consider the PD linear system
\al{
\label{eq:appendix_system}
	A \x = \b
	\qquad
	\t{for} \
	A = M^\dag M \,, \ \;
	\b & = M^\dag \textbf{e}_1 
}
which is equivalent to the system in Eq.~\eqref{eq:M_sysytem} and can be cast as a Sum-QLS problem.

Notice, first, that $\b$ is a sparse vector, with $d_\b \leq 3$: if the elementary gate set consists of control-nots and single-qubit rotations, then the first row of $M$ has at most $3$ non-zero entries.

The matrix $A$ has size $3T2^n \times 3T2^n$ and can be written as 
\al{
	A = \sum_{t=0}^{3T-1} H_{(t)}\;.
}
where each Hamiltonian terms $H_{(t)}$ is positive definite and only acts on the clock register plus either one or two extra qubits (i.e., the same number of qubits on which the gates from the elementary set act). We start writing $A$ in block form
\al{
\label{eq:A=sum}
	A =
	\left(
	\begin{array}{ccccc}
	\ddots & \ddots & 0 & & \\
	\ddots & I(1+e^{-2/T}) & -U_t e^{-1/T} & 0 & \\
	~~~0~~~ & -U_t^\dag e^{-1/T} & I(1+e^{-2/T}) & -U_{t+1} e^{-1/T} & ~~~0~~~ \\
	& 0 & -U_{t+1}^\dag e^{-1/T} & I(1+e^{-2/T}) & \ddots \\
	& & 0 & \ddots & \ddots  \\
	\end{array}
	\right)
}
and then introduce, for each $t \in \{0,\ldots,3T-1\}$, the Hamiltonian term
\al{
\label{eq:H_app}
	H_{(t)} 
	\; = \;
	I\, \delta 
	\; + \;
	e^{-1/T}
	\left(
	\begin{array}{c|cc|c}
	~0~ & 0 & 0 & ~0~ \\
	\hline
	0 & I & -U_t & 0 \\
	0 & -U_t^\dag & I & 0 \\
	\hline
	0 & 0 & 0 & 0  \\
	\end{array}
	\right)
	\begin{array}{l}
	\hspace{1pt}\rbrace ~ \t{size}~t\cdot 2^n \\[2.2mm]
	\Big{\rbrace}       ~ \t{size}~2\cdot 2^n \\[2.5mm]
	\hspace{1pt}\rbrace ~ \t{size}~(3T-t-2)\cdot 2^n
	\end{array}
}
where we have defined
\al{
\label{eq:delta-bound}
	\delta
	\; := \; 
	\frac{1}{3T}
	\big(1+e^{-2/T}-2e^{-1/T}\big) 
	\; \geq \; 
	\frac{(1-e^{-1})^2}{ 3T^3} 
	\; \geq \;
	\frac{1}{7.51 \, T^3} \;,
}
and thus the decomposition in Eq.~\eqref{eq:A=sum} holds. Note that the matrix 
$\left(\begin{smallmatrix} I & -U_t\\ -U_t^\dag & I\end{smallmatrix}\right)$ has eigenvalues $\lambda =0$ and $\lambda =2$, hence is positive semi-definite, while each $H_{(t)}$ has the smallest eigenvalue equal to $\delta$. Moreover, each $H_{(t)}$ is a local Hamiltonian. Specifically, we have $H_{(t)} = h_{(t)} \otimes I_{\neg \S_t}$, where $h_{(t)}$ acts on a set $\S_t$ consisting of $s = \lceil \log_2 3T \rceil + q$ qubits, where $q=1$ if $U_t$ corresponds to a single qubit gate and $q=2$ if it corresponds to a two-qubit gate. Explicitly, we have
\al{
	h_{(t)} 
	\; = \;	
	\left(
	\begin{array}{c|cc|c}
	I \,\delta & 0 & 0 & ~0~ \\
	\hline
	0 & I(e^{-1/T} \!+\!\delta)\! & -u_t e^{-1/T} & 0 \\
	0 & -u_t^\dag e^{-1/T} & \! I(e^{-1/T} \!+\!\delta) & 0 \\
	\hline
	0 & 0 & 0 & I\,\delta \\
	\end{array}
	\right)
	\begin{array}{l}
	\hspace{1pt}\rbrace ~ \t{size}~t\cdot 2^q \\[2.2mm]
	\Big{\rbrace}       ~ \t{size}~2\cdot 2^q \\[2.5mm]
	\hspace{1pt}\rbrace ~ \t{size}~(3T-t-2)\cdot 2^q
	\end{array}
}
where each $u_t$ is a matrix specifying the $t$-th elementary quantum gate, with $U_t = u_t \otimes I_{\neg \S_t}$.

We now need to estimate a lower bound $\gamma$ for the overlap parameter as in Eq.~\eqref{eq:gamma_main}. We have
\al{
	\bra{\b} A^{-1} \ket{\b} 
	& = 
	\bra{M^\dag \textbf{e}_1} M^{-1}M^{-\dag} \ket{M^\dag \textbf{e}_1} \\
	& = 
	\frac{\bra{\textbf{e}_1} M M^{-1}M^{-\dag} M^\dag\ket{\textbf{e}_1}}
	     {\norm{M^\dag\ket{\textbf{e}_1}}^2} 
	\\
	& =
	\frac{1}{\norm{M^\dag\ket{\textbf{e}_1}}^2} 
	= 
	\frac{1}{1+e^{-2/T}}
	\geq 
	1/2
}
while from~\eqref{eq:H_app} and~\eqref{eq:delta-bound} we obtain 
\al{
	\bra{\b} H_{(j)}^{-1} \ket{\b} 
	\,\leq\, 
	\delta^{-1}
	\,\leq\, 
	7.51 \, T^3\,.
}
Using $J = 3T$ we finally get
\al{
	\frac{1}{J^2}\,
	\frac{\sum_{j=1}^J \bra{\b} H_{(j)}^{-1} \ket{\b}}
	     {\bra{\b} A^{-1} \ket{\b}} 
	\,\leq\,
	\frac{1}{(3T)^2}\,
	\frac{ 3T \times 7.51 \, T^3}{1/2}
	\, \leq \, 
	5.01 \,T^2
	\, =: \,
	\frac{1}{\gamma}\,.
}

In conclusion, given the sequence of gates $u_t$ and the set of qubits $\S_t$ to which they are applied in the quantum circuit $\mathcal{C}$, we can efficiently compute the values and the positions of the non-zero entries of $H_{(t)}$, as given in Eq.~\eqref{eq:H_app}. This then explicitly describes $A$ as a sum of PD local Hamiltonian terms, as required by the definition of the Sum-QLS problem. Going through the derivation, we see that the relevant parameters scale as stated in points (1) to (6) in the statement of the Proposition. Using Eq.~\eqref{eq:intermediate_estimation}, it follows that the algorithm presented in \sec{hamiltonians} solves this problem with a gate complexity in $\O\big(\t{poly}(n,T)\big) = \O(\t{poly}\,n)$, in the case where the original circuit is itself a polynomial time quantum circuit (i.e.\ $T \in \O(\t{poly}\,n)$). Finally, note that the $\varepsilon$-error in precision of Sum-QLS solver is amplified at most by a constant factor when post-selecting the clock register to show a time $t \in [T:2T-1]$. Thus, it is possible to obtain the output of $\mathcal{C}$ with an error that is bounded by a constant, as required per the definition of the \textsf{BQP} class. This finally proves that the Sum-QLS problem is \textsf{BQP}-hard; at the same time, it proves that Sum-QLS$_\t{poly}$, the sub-class of problems having polynomially scaling parameters, is solvable in quantum polynomial time and is thus \textsf{BQP}-complete.

\end{proof}

\bibliographystyle{unsrt}

\begin{thebibliography}{100}

\bibitem{HHL}
A. W. Harrow, A. Hassidim, and S. Lloyd,
\emph{Quantum algorithm for linear systems of equations},
\href{https://doi.org/10.1103/PhysRevLett.103.150502}
	 {Physical Review Letters \textbf{103}, 150502 (2009)}
[\href{https://arxiv.org/abs/0811.3171}{arXiv:0811.3171}].


\bibitem{Press07}
W. H. Press, B. P. Flannery, S. A. Teukolsky, and W. T. Vetterling,
\emph{Numerical Recipes: The Art of Scientific Computing, Third Edition}, Cambridge University Press (2007).


\bibitem{Saad03}
Y. Saad,
\emph{Iterative methods for sparse linear systems}, SIAM (2003).



\bibitem{Ambainis10}
A. Ambainis,
\emph{Variable time amplitude amplification and quantum algorithms for linear algebra problems},
\href{http://drops.dagstuhl.de/opus/volltexte/2012/3426/}
	 {29th Symposium on Theoretical Aspects of Computer Science \textbf{14}, 636--647 (2012)}
[\href{https://arxiv.org/abs/1010.4458}{arXiv:1010.4458}].


\bibitem{Childs15}
A. M. Childs, R. Kothari, and R. D. Somma,
\emph{Quantum linear systems algorithm with exponentially improved dependence on precision},
\href{https://doi.org/10.1137/16M1087072}
	 {SIAM J. Comput. \textbf{46}, 1920--1950 (2017)}
[\href{https://arxiv.org/abs/1511.02306}{arXiv:1511.02306}].


\bibitem{Wossnig18}
L. Wossnig, Z. Zhao, and A. Prakash,
\emph{Quantum linear system algorithm for dense matrices},
\href{https://doi.org/10.1103/PhysRevLett.120.050502}
	 {Physical Review Letters \textbf{120}, 050502 (2018)}
[\href{https://arxiv.org/abs/1704.06174}{arXiv:1704.06174}].


\bibitem{Subasi18}
Y. Subaşı, R. D. Somma, and D. Orsucci.
\emph{Quantum algorithms for linear systems of equations inspired by adiabatic quantum computing},
\href{https://doi.org/10.1103/PhysRevLett.122.060504}
	 {Physical Review Letters \textbf{122}, 060504 (2019)}
[\href{https://arxiv.org/abs/1805.10549}{arXiv:1805.10549}].


\bibitem{An19}
Dong An and Lin Lin, 
\emph{Quantum linear system solver based on time-optimal adiabatic quantum computing and quantum approximate optimization algorithm},
\href{https://arxiv.org/abs/1909.05500}{arXiv:1909.05500 (2019)}.


\bibitem{Wen19}
J. Wen, X. Kong, S. Wei, B. Wang, T. Xin, and G. Long,
\emph{Experimental realization of quantum algorithms for a linear system inspired by adiabatic quantum computing},
\href{https://doi.org/10.1103/PhysRevA.99.012320}
	 {Physical Review A \textbf{99}, 012320 (2019)}
[\href{https://arxiv.org/abs/1806.03295}{arXiv:1806.03295}].


\bibitem{Bravo19}
C. Bravo-Prieto, R. LaRose, M. Cerezo, Y. Subaşı, L. Cincio, and P. J. Coles, 
\emph{Variational quantum linear solver: A hybrid algorithm for linear systems},
\href{https://arxiv.org/abs/1909.05820}
	 {arXiv:1909.05820 (2019)}.


\bibitem{Huang19}
H. Y. Huang, K. Bharti, and P. Rebentrost, 
\emph{Near-term quantum algorithms for linear systems of equations},
\href{https://arxiv.org/abs/1909.07344}
	 {arXiv:1909.07344}.


\bibitem{Lin19}
L. Lin and Y. Tong, 
\emph{Optimal quantum eigenstate filtering with application to solving quantum linear systems},
\href{https://doi.org/10.22331/q-2020-11-11-361}{Quantum \textbf{4}, 361 (2020)}
[\href{https://arxiv.org/abs/1910.14596}{arXiv:1910.14596}].


\bibitem{Aaronson15}
S. Aaronson, 
\emph{Read the fine print}, 
\href{https://doi.org/10.1038/nphys3272}
	 {Nature Physics \textbf{11}, 291-293 (2015)}
[\href{http://citeseerx.ist.psu.edu/viewdoc/download?doi=10.1.1.686.6454&rep=rep1&type=pdf}{citeseerx}].


\bibitem{Montanaro16}
A. Montanaro and S. Pallister, 
\emph{Quantum algorithms and the finite element method},
\href{https://doi.org/10.1103/PhysRevA.93.032324}
     {Physical Review A \textbf{93}, 032324 (2016)}
[\href{https://arxiv.org/abs/1512.05903}{arXiv:1512.05903}].


\bibitem{Babbush20}
R. Babbush, J. McClean, C. Gidney, S. Boixo, and H. Neven,
\emph{Focus beyond quadratic speedups for error-corrected quantum advantage},
\href{https://doi.org/10.1103/PRXQuantum.2.010103}
     {PRX Quantum \textbf{2} (2021)}
[\href{https://arxiv.org/abs/2011.04149}{arXiv:2011.04149}].


\bibitem{Chowdhury16}
A. N. Chowdhury and R. D. Somma,
\emph{Quantum algorithms for Gibbs sampling and hitting-time estimation},
\href{https://doi.org/10.26421/QIC17.1-2}
	 {Quant. Inf. Comp. \textbf{17}, 0041--0064 (2017)}
[\href{https://arxiv.org/abs/1603.02940}{arXiv:1603.02940}].


\bibitem{Clader13}
B. D. Clader, B. C. Jacobs, and C. R. Sprouse,
\emph{Preconditioned quantum linear system algorithm},
\href{https://journals.aps.org/prl/abstract/10.1103/PhysRevLett.110.250504}
	 {Physical Review Letters \textbf{110}, 250504 (2013)}
[\href{https://arxiv.org/abs/1301.2340}{arXiv:1301.2340}].


\bibitem{She94}
J. R. Shewchuk, 
\emph{An introduction to the conjugate gradient method without the agonizing pain}, 
\href{https://www.cs.cmu.edu/~quake-papers/painless-conjugate-gradient.pdf}{Carnegie Mellon University (1994)}.


\bibitem{Chakraborty18}
S. Chakraborty, A. Gily{\'e}n, and S. Jeffery,
\emph{The power of block-encoded matrix powers: improved regression techniques via faster hamiltonian simulation},
\href{https://doi.org/10.4230/LIPIcs.ICALP.2019.33}
	 {46th International Colloquium on Automata, Languages, and Programming (ICALP 2019)}
[\href{https://arxiv.org/abs/1804.01973}{arXiv:1804.01973}].


\bibitem{Shao18}
C. Shao and H. Xiang,
\emph{Quantum circulant preconditioner for linear system of equations},
\href{https://doi.org/10.1103/PhysRevA.98.062321}
     {Physical Review A \textbf{98}, 062321}
[\href{https://arxiv.org/abs/1807.04563}{arXiv:1807.04563}].


\bibitem{Tong20}
Y. Tong, D. An, N. Wiebe, and L. Lin. 
\emph{Fast inversion, preconditioned quantum linear system solvers, fast Green's-function computation, and fast evaluation of matrix functions},
\href{https://doi.org/10.1103/PhysRevA.104.032422}{Physical Review A \textbf{104}, 032422}
[\href{https://arxiv.org/abs/2008.13295}{arXiv:2008.13295}].


\bibitem{Wu20}
B. Wu, M. Ray, L. Zhao, X. Sun, and P. Rebentrost,
\emph{Quantum-classical algorithms for skewed linear systems with optimized Hadamard test},
\href{https://doi.org/10.1103/PhysRevA.103.042422}
	 {Physical Review A \textbf{103}, 042422}
[\href{https://arxiv.org/abs/2009.13288}{arXiv:2009.13288}].


\bibitem{Vazquez20}
A. C. Vazquez, R. Hiptmair, and S. Woerner,
\emph{Enhancing the Quantum Linear Systems Algorithm using Richardson Extrapolation},
\href{https://arxiv.org/abs/2009.04484}{arXiv:2009.04484 (2020)}.


\bibitem{Low17}
G. H. Low and I. L. Chuang,
\emph{Optimal Hamiltonian simulation by quantum signal processing},
\href{https://doi.org/10.1103/PhysRevLett.118.010501}
	 {Physical Review Letters \textbf{118}, 010501 (2017)}
[\href{https://arxiv.org/abs/1606.02685}{arXiv:1606.02685}].


\bibitem{Gilyen18}
A. Gily\'en, Y. Su, G. H. Low, and N. Wiebe,
\emph{Quantum singular value transformation and beyond: exponential improvements for quantum matrix arithmetics},
\href{https://doi.org/10.1145/3313276.3316366}
	{51st Annual ACM SIGACT Symposium on Theory of Computing, 193--204 (2019)}
[\href{https://arxiv.org/abs/1806.01838}{arXiv:1806.01838}].


\bibitem{Childs12}
A. M. Childs and N. Wiebe,
\emph{Hamiltonian Simulation Using Linear Combinations of Unitary Operations},
\href{https://doi.org/10.26421/QIC12.11-12}
	 {Quantum Information \& Computation}
[\href{https://arxiv.org/abs/1202.5822}{arXiv:1202.5822}].


\bibitem{Berry14}
D. W. Berry, A. M. Childs, R. Cleve, R. Kothari and R. D. Somma, 
\emph{Simulating Hamiltonian dynamics with a truncated Taylor series},
\href{https://doi.org/10.1103/PhysRevLett.114.090502}
	 {Physical Review Letters \textbf{114}, 090502 (2015)}
[\href{https://arxiv.org/abs/1412.4687}{arXiv:1412.4687}].


\bibitem{Low16}
G. H. Low and I. L. Chuang,
\emph{Hamiltonian simulation by qubitization},
\href{https://doi.org/10.22331/q-2019-07-12-163}
	 {Quantum \textbf{3}, 163 (2019)}.


\bibitem{Schaeffer41}
A. C. Schaeffer, 
\emph{Inequalities of A. Markoff and S. Bernstein for polynomials and related functions}, \href{https://doi.org/10.1090/S0002-9904-1941-07510-5}
     {Bulletin of the American Mathematical Society \textbf{47}, 565--579(1941)}.


\bibitem{Brassard02}
G. Brassard, P. Hoyer, M. Mosca, and A. Tapp,
\emph{Quantum Amplitude Amplification and Estimation},
\href{}{Contemporary Mathematics \textbf{305}, 53--74 (2002)}
[\href{https://arxiv.org/abs/quant-ph/0005055}{arXiv:0005055}].


\bibitem{Cholesky}
See e.g.\ the~\href{https://en.wikipedia.org/wiki/Cholesky_decomposition}{Cholesky decomposition} page on Wikipedia.


\bibitem{Somma13}
R. D. Somma and S. Boixo,
\emph{Spectral gap amplification}, 
\href{https://doi.org/10.1137/120871997}{SIAM Journal on Computing \textbf{42}, 593-610 (2013)}
[\href{https://arxiv.org/abs/1110.2494}{arXiv:1110.2494}].


\bibitem{NC00}
M.~A.~Nielsen, and I.~Chuang,
\emph{Quantum computation and quantum information}, Cambridge University Press (2000).


\bibitem{Grover02}
L. Grover and T. Rudolph,
\emph{Creating superpositions that correspond to efficiently integrable probability distributions},
\href{https://arxiv.org/abs/quant-ph/0208112}{arXiv:0208112 (2002)}.


\bibitem{Giovannetti08}
V. Giovannetti, S. Lloyd, and L. Maccone,
\emph{Quantum random access memory},
\href{https://doi.org/10.1103/PhysRevLett.100.160501}
	 {Physical Review Letters \textbf{100}, 160501 (2008)}
[\href{https://arxiv.org/abs/0708.1879}{arXiv:0708.1879}].


\bibitem{Kerenidis16}
I. Kerenidis and A. Prakash,
\emph{Quantum recommendation systems},
\href{https://arxiv.org/abs/1603.08675}{arXiv:1603.08675}.


\bibitem{Boyer96}
M. Boyer,G. Brassard, P. Høyer and A. Tapp, 
\emph{Tight bounds on quantum searching},
\href{https://doi.org/10.1002/(SICI)1521-3978(199806)46:4/5%3C493::AID-PROP493%3E3.0.CO;2-P}
{Progress of Physics \textbf{46}, 493--505 (1998)}
[\href{https://arxiv.org/abs/quant-ph/9605034}{arXiv:9605034}].


\bibitem{PrivateComm}
Andr\'as Gily\'en, private communication.


\bibitem{Chao20}
R. Chao, D. Ding, A. Gily\'en, C. Huang and M. Szegedy,
\emph{Finding angles for quantum signal processing with machine precision},
\href{https://arxiv.org/abs/2003.02831}
	 {arXiv:2003.02831 (2020)}.


\bibitem{Dong20}
Y. Dong, X. Meng, K. B. Whaley and L. Lin,
\emph{Efficient phase factor evaluation in quantum signal processing},
\href{https://doi.org/10.1103/PhysRevA.103.042419}
	 {Phys. Rev. A \textbf{103}, 042419 (2021)}
[\href{https://arxiv.org/abs/2002.11649}{arXiv:2002.11649}].


\bibitem{Gersh}
See e.g.\ the~\href{https://en.wikipedia.org/wiki/Gershgorin_circle_theorem}{Gershgorin circle theorem} page on Wikipedia.


\bibitem{Shende06}
V. V. Shende, S. S. Bullock, and I. L. Markov,
\emph{Synthesis of quantum-logic circuits},
\href{https://doi.org/10.1109/TCAD.2005.855930}
	 {IEEE Transactions on Computer-Aided Design of Integrated Circuits and Systems \textbf{25}, 1000--1010 (2006)}
[\href{https://arxiv.org/abs/quant-ph/0406176}{arXiv:0406176}].


\bibitem{Merris94}
R. Merris, 
\emph{Laplacian matrices of graphs: a survey},
\href{https://doi.org/10.1016/0024-3795(94)90486-3}
	 {Linear algebra and its applications \textbf{197}, 143--176 (1994)}.


\bibitem{Spielman10}
D. A. Spielman, 
\emph{Algorithms, graph theory, and linear equations in Laplacian matrices},
\href{https://doi.org/10.1142/9789814324359_0164}
	 {Proceedings of the International Congress of Mathematicians 2010, 2698--2722 (2010)}.


\bibitem{Grover00}
L. K. Grover, 
\emph{Synthesis of quantum superpositions by quantum computation},
\href{https://doi.org/10.1103/PhysRevLett.85.1334}
	 {Physical Review Letters \textbf{85}, 1334 (2000)}.


\bibitem{Sanders19}
Y. R. Sanders, G. H. Low, A. Scherer and D. W. Berry, 
\emph{Black-box quantum state preparation without arithmetic},
\href{https://doi.org/10.1103/PhysRevLett.122.020502}
	 {Physical Review Letters \textbf{122}, 020502 (2019)}
[\href{https://arxiv.org/abs/1807.03206}{arXiv:1807.03206}].


\bibitem{Somma20}
R. D. Somma and Y. Subaşı, 
\emph{Quantum state verification in the quantum linear systems problem},
\href{https://doi.org/10.1103/PRXQuantum.2.010315}{PRX Quantum \textbf{2}, 010315 (2021)}
[\href{https://arxiv.org/abs/2007.15698}{arXiv:2007.15698}].

	 
\bibitem{Gilyen14}
A. Gily\'en, 
\emph{Quantum walk based search methods and algorithmic applications},
Doctoral dissertation, Eötvös Loránd University (2014).


\bibitem{Malvetti20}
E. Malvetti, R. Iten, and R. Colbeck,
\emph{Quantum Circuits for Sparse Isometries} 
\href{https://arxiv.org/abs/2006.00016}
	 {arXiv:2006.00016 (2020)}.


\bibitem{Jiang17}
X. Jiang, 
\emph{Minimum rank positive semidefinite matrix completion with chordal sparsity pattern},
\href{https://escholarship.org/uc/item/8p2397x1}{Doctoral dissertation, UCLA (2017)}.


\bibitem{Nayak99}
A. Nayak and F. Wu,
\emph{The quantum query complexity of approximating the median and related statistics},
\href{https://dl.acm.org/doi/10.1145/301250.301349}
	 {Proceedings of the 31st annual ACM symposium on Theory of computing, 384--393 (1999)}
[\href{https://arxiv.org/abs/quant-ph/9804066}{arXiv:9804066}].

%
%\bibitem{AA02}
%A. Ambainis, 
%\emph{Quantum lower bounds by quantum arguments},
%\href{https://doi.org/10.1006/jcss.2002.1826}
%	 {Journal of Computer and System Sciences \textbf{64}, 750--767 (2002)}.
%
% 
%\bibitem{HS05}
%P. Hoyer, and R. Spalek,
%\emph{Lower bounds on quantum query complexity},
%\href{https://arxiv.org/abs/quant-ph/0509153}{arXiv:0509153 (2005)}.
%

\bibitem{Pillai05}
S. U. Pillai, T. Suel, and S. Cha,
\emph{The Perron-Frobenius theorem: some of its applications},
\href{https://doi.org/10.1109/MSP.2005.1406483}
	 {IEEE Signal Processing Magazine \textbf{22}, 62--75 (2005)}.


\bibitem{Arora09}
S. Arora and B. Barak,
\emph{Computational complexity: a modern approach},
	  Cambridge University Press (2009).


\bibitem{Mason02}
	J. C. Mason, and D. C. Handscomb,
	\emph{Chebyshev polynomials},
	CRC press (2002).


\bibitem{Bausch18}
J. Bausch and E. Crosson,
\emph{Analysis and limitations of modified circuit-to-Hamiltonian constructions},
\href{https://doi.org/10.22331/q-2018-09-19-94}{Quantum \textbf{2}, 94 (2018)}.



%\bibitem{LPW17}
%D.A. Levin, Y. Peres, and E. L. Wilmer, 
%\emph{Markov chains and mixing times},
%\href{https://pages.uoregon.edu/dlevin/MARKOV/mcmt2e.pdf}
%	 {American Mathematical Soc., Vol. 107 (2017)}
%
%
%\bibitem{Tang18}
%E. Tang, 
%\emph{A quantum-inspired classical algorithm for recommendation systems}, 
%\href{https://doi.org/10.1145/3313276.3316310}
%	 {Proceedings of the 51st Annual ACM SIGACT Symposium on Theory of Computing, 217-228 (2018)}
%[\href{https://arxiv.org/abs/1807.04271}{arXiv:1807.04271}].
%
%
%\bibitem{Kerenidis16}
%I. Kerenidis and A. Prakash, 
%\emph{Quantum recommendation systems},
%\href{https://arxiv.org/abs/1603.08675}{arXiv:1603.08675}.
%
%
%\bibitem{Tang18b}
%A. Gilyén, A., s. Lloyd and E. Tang, 
%\emph{Quantum-inspired low-rank stochastic regression with logarithmic dependence on the dimension},
%\href{https://arxiv.org/abs/1811.04909}{arXiv:1811.04909}.
%
%
%\bibitem{Chia20}
%N.-H. Chia, A. Gilyén, T. Li, H.-H. Lin, E. Tang and C. Wang. 
%\emph{Sampling-based sublinear low-rank matrix arithmetic framework for dequantizing quantum machine learning}, 
%\href{https://arxiv.org/ct?url=https%3A%2F%2Fdx.doi.org%2F10.1145%2F3357713.3384314&v=b9020dfc}
%{Proceedings of the 52nd Annual ACM SIGACT Symposium on Theory of Computing, 387-400 (2020)}
%[\href{https://arxiv.org/abs/1910.06151}{arXiv:1910.06151}].
%
%
%\bibitem{Gilyen20}
%A. Gilyén, Z. Song and E. Tang,
%\emph{An improved quantum-inspired algorithm for linear regression},
%\href{https://arxiv.org/abs/2009.07268}{arXiv:2009.07268 (2020)}.
%
%
%\bibitem{Tang19}
%\href{https://ewintang.com/blog/2019/01/28/an-overview-of-quantum-inspired-sampling/}{https://ewintang.com/blog/2019/01/28/an-overview-of-quantum-inspired-sampling/}
%
%
%\bibitem{Lloyd14}
%S. Lloyd, M. Mohseni, and P. Rebentrost,
%\emph{Quantum principal component analysis},
%\href{https://doi.org/10.1038/nphys3029}
%	 {Nature Physics \textbf{10}, 631 (2014)}
%[\href{https://arxiv.org/abs/1307.0401}{arXiv:1307.0401}].
%
%
%\bibitem{Kimmel17}
%S. Kimmel, C. Y.-Yu Lin, G. H. Low, M. Ozols, and T. J. Yoder,
%\emph{Hamiltonian simulation with optimal sample complexity},
%\href{https://doi.org/10.1038/s41534-017-0013-7}
%	 {npj Quantum Information \textbf{3}, 13 (2017)}
%[\href{https://arxiv.org/abs/1608.00281}{arXiv:1608.00281}].
%
%
%\bibitem{Dong20b}
%Dong, Y. and Lin, 
%\emph{Random circuit block-encoded matrix and a proposal of quantum LINPACK benchmark},  
%\href{https://arxiv.org/abs/2006.04010}{arXiv:2006.04010}.
%






\end{thebibliography}

\end{document}